\documentclass[oneside,english]{amsart}

\usepackage{cmap}

\usepackage[a4paper,includeheadfoot,margin=3.64cm]{geometry}

\usepackage{beton}
\usepackage{euler}
\usepackage[T1]{fontenc}
\usepackage[latin9]{inputenc}
\usepackage{varioref}
\usepackage{amsthm}
\usepackage{amssymb}

\usepackage[usenames,dvipsnames]{xcolor}

\usepackage{hyperref}
\hypersetup{colorlinks=true,linkcolor=MidnightBlue,backref=page,citecolor=Magenta}

\usepackage[backend=bibtex,style=alphabetic,firstinits=true,doi=false]{biblatex}
\addbibresource{CorrectedVersionFinal.bib}

\usepackage{mathtools}

\usepackage[shortlabels]{enumitem}

\makeatletter
\numberwithin{equation}{section} 
\numberwithin{figure}{section} 
\theoremstyle{plain}
\theoremstyle{plain}
\newtheorem{theorem}{Theorem}
  \theoremstyle{plain}
  \newtheorem{lemma}{Lemma}
  \theoremstyle{plain}
  \newtheorem{proposition}{Proposition}
  \theoremstyle{remark}
  \newtheorem*{note*}{Note}
  \theoremstyle{remark}
  \newtheorem*{conclusion*}{Conclusion}
  \theoremstyle{remark}
  \newtheorem{remark}{Remark}
 \theoremstyle{definition}
  \newtheorem{example}{Example}
  \newtheorem{definition}{Definition}
  \theoremstyle{plain}
  \newtheorem{corollary}{Corollary}

\usepackage{amsthm}
\usepackage{amsfonts}
\usepackage{amscd}
\usepackage[all]{xy}
\usepackage{pb-diagram,pb-xy}

\usepackage{tikz}
\usetikzlibrary{matrix,arrows}
\usetikzlibrary{shapes.geometric,fit}

\tikzstyle{legend_general}=[rectangle, rounded corners, thin,
                          top color= white,bottom color=lavander!25,
                          minimum width=2.5cm, minimum height=0.8cm,
                          violet]

\tikzset{
   dashellipse/.style={rectangle,draw,dashed,inner sep=8pt,fit={#1}}
}
                          

\usepackage{tikz-cd}
\usetikzlibrary{decorations.pathreplacing, positioning}

\usepackage{soul}



\newcommand{\g}{\mathfrak{g}}

\newcommand{\kf}{\mathfrak{k}}
\newcommand{\hf}{\mathfrak{h}}
\newcommand{\pf}{\mathfrak{p}}

\newcommand{\cI}{{\mathcal I}}

\newcommand{\cL}{{\mathcal L}}

\newcommand{\mR}{\mathbb{R}}

\newcommand{\gl}{\mathfrak{gl}}


\usepackage[e]{esvect}

\usepackage{charter}

\makeatother

\makeatletter
\newcommand*\bigcdot{\mathpalette\bigcdot@{.5}}
\newcommand*\bigcdot@[2]{\mathbin{\vcenter{\hbox{\scalebox{#2}{$\m@th#1\bullet$}}}}}
\makeatother


\usepackage{babel}

\begin{document}

\title[Chern-Simons, $3d$-gravity and extensions]{Chern-Simons field theory on the general affine group, $3d$-gravity and the extension of Cartan connections}

\author{S. Capriotti}
\address{Departamento de Matem\'atica \\
Universidad Nacional del Sur and Conicet\\
 8000 Bah{\'\i}a Blanca \\
Argentina}
\email{santiago.capriotti@uns.edu.ar}

\maketitle

\begin{abstract}
  The purpose of this article is to study the correspondence between $3d$-gravity and the Chern-Simons field theory from the perspective of geometric mechanics, specifically in the case where the structure group is the general affine group. To accomplish this, the paper discusses a variational problem of the Chern-Simons type on a principal fiber bundle with this group as its structure group. The connection to the usual Chern-Simons theory is established by utilizing a generalization, in the context of Cartan connections, of the notion of extension and reduction of connections.
\end{abstract}

\section{Integrable field theories and gravity}
\label{sec:chain-integr-field}


Chern-Simons field theory is a well-known type of gauge field theory \cite{10.2307/1971013,PhysRevLett.48.975,Freed:1992vw} whose quantization yields to topological field theory \cite{Witten1988,Ramadas1989}. In its more general setting \cite{Freed:1992vw}, the fields in Chern-Simons gauge theory are connections on any $K$-principal bundle with fixed base space. In this vein, the Chern-Simons action is considered as a function on the (infinite dimensional) manifold of the $K$-connections on a fixed base space $M$, and it is evident that the construction of this manifold requires a precise knowledge of every $K$-principal bundle on $M$. Although this operation can be performed successfully (getting more complicated as the dimension of $M$ increases), this scheme is out of range of the geometrical formulation for field theory \cite{Gotay:1997eg,Blee,doi:10.1142/9693,deLen2003,PROP:PROP2190440304}, because in this approach the fields should be sections of a definite bundle (preferably, of finite dimension). A successful formulation for Chern-Simons field theory fitting in this geometrical scheme is described in \cite{TejeroPrieto2004}, using a variational problem posed by local Lagrangians; a multisymplectic formulation for Chern-Simons field theory can be found in \cite{GOMIS2023116069}, where the structure of the constraints arising from the singular nature of the Lagrangian is studied. The formulation we will use in this article is inspired in \cite{wise2009symmetric}, and uses Cartan connections as fields.

\par On the other hand, it is an interesting result \cite{ACHUCARRO198689,MR974271} that when gauge group is the Poincar\'e group $\text{ISO}\left(2,1\right)$, Chern-Simon field theory in dimension $3$ can be related with Palatini gravity on a spacetime of the same dimension. This correspondence is achieved by the splitting
\[
  \mathfrak{iso}\left(2,1\right)=\mathfrak{so}\left(2,1\right)\oplus\mR^3,
\]
that decomposes the field $A$ in two parts, one living in $\mR^3$ and another in $\mathfrak{so}\left(2,1\right)$; the idea is to recognize each of these fields as a vierbein and a $\mathfrak{so}\left(2,1\right)$-connection respectively, which can be seen as the basic fields for the Palatini description of general relativity. In \cite{wise2009symmetric}, this scheme is generalized to pair of algebras $\g\subset\hf$ defining a Cartan connection on a $G$-principal bundle $\pi:P\to M$.

\par A remarkable fact is that, in some of these descriptions of Chern-Simons theory and its connection with gravity, is usual a certain lack of definiteness with respect to the nature of the principal bundle to which the connections of the theory belongs. While the corresponding base space and structure group are properly set, it is avoided any precision on the geometrical characteristics of such bundle. Not that it keeps people away from working with Chern-Simons gauge theory: It is in this context where frameworks like the one discussed by Freed are fruitful. Moreover, the informed reader could recall that when dealing with gravity, a metric is available, and it can be used to select suitable subbundles of the frame bundle. Nevertheless, this answer should be considered as partial, and in fact might put us in a paradoxical situation, because the metric is part of the dynamical fields in gravity, and so it should be necessary to solve the equations of motion of gravity before constructing the bundle where the fields should live. Therefore, it seems to appear a tension between the formalism describing field theory from a geometric viewpoint, and the characteristics that a Chern-Simons gauge theory must have in order to represent gravity. At the end, the apparent paradox is solved by invoking the gauge symmetry of the Lagrangian, but it could be interesting to explore how to deal with this situation from a geometrical point of view.


\par These considerations set the aims of the article. Basically, we are looking for a formulation of the correspondence between Chern-Simons field theory and gravity (for spacetime dimension $m=3$) where the structure groups of the involved principal bundle are respectively the affine general group $A\left(3\right)$   and the general linear group $GL\left(3\right)$. In order to achieve this objective, we will generalize the scheme posed by Wise. The main tool used for the generalization is the (as far as I know, novel) formulation of a concept equivalent to \emph{extension of a connection} for Cartan connections (see Section \ref{sec:extens-gener-cart} below). The idea for the generalization is that any Cartan connection can be seen as a principal connection on a suitable bundle (i.e. Proposition \ref{prop:CartanVsPrincipalConn}); thus the extension of a Cartan connection is the Cartan connection induced by the extension of this associated principal connection. As we will see, a drawback of this construction is that, in general, the connections obtained will be of more general nature than Cartan connections (are generalized Cartan connections, as defined in \cite{alekseevsky95:_differ_cartan}).

Let us describe in some detail how this scheme will be implemented. In the formulation of Wise, the basic geometrical data is a $K$-principal bundle $R_\zeta$, and its fields are described by a Cartan connection taking values in a Lie algebra $\g$ such that $\kf\subset\g$. Let us recall that the Lorentz group $K=SO\left(m-1,1\right)$ becomes a subgroup of different Lie groups $G$ in the Wise scheme, depending on the sign of the cosmological constant $\Lambda$; namely, we have that the total group $G$ is in each case
\[
  G=
  \begin{cases}
    SO\left(m,1\right)&\Lambda>0\\
    K\oplus\mR^m&\Lambda=0\\
    SO\left(m-1,2\right)&\Lambda<0.
  \end{cases}
\]
$K$ enters as subgroup in the case $\Lambda<0$ through the immersion
\[
  \iota_-\left(A\right)=
  \begin{bmatrix}
    A&0\\
    0&1
  \end{bmatrix}
\]
for every $A\in K$, and
\[
  \iota_+\left(A\right)=
  \begin{bmatrix}
    1&0\\
    0&A
  \end{bmatrix}
\]
in the case $\Lambda>0$. In the present article we will focus in the case $\Lambda=0$; the other cases will be considered elsewhere.

In order to see how to proceed for the generalization of this scheme to a principal bundle with structure group $GL\left(m\right)$, let us consider the following diagram of Lie groups:
\[
  \begin{tikzcd}[ampersand replacement=\&,row sep=1.5cm,column sep=1.5cm]
    {GL\left(m\right)}
    \arrow[hook]{d}{}
    \&
    {K}
    {\arrow[hook]{d}{}}
    \arrow[hook']{l}{}
    \\
    GL\left(m\right)\oplus\mR^m
    \&
    {K\oplus\mR^m}
    \arrow[hook']{l}{}
  \end{tikzcd}      
\]
Arrows are induced by canonical inclusions. We want to promote it to a commutative diagram involving principal bundles with structure groups borrowed from the nodes of this diagram. Now, what kind of $K$-principal bundle $R_\zeta$ on a manifold $M$ has a Cartan connection with values in the Lie algebra $\g=\kf\oplus\mR^m$? A possible answer can be found using geometrical considerations. In fact, as it is indicated in \cite{sharpe1997differential,Barakat2004}, whenever the map
\[
  \text{Ad}:K\to GL\left(\kf\oplus\mR^m/\kf\right)=GL\left(m\right)
\]
is injective, $R_\zeta$ becomes a $K$-structure, namely, it is a $K$-subbundle of the frame bundle $LM$; in this case, the previous diagram has an analogous diagram at principal bundle level
\[
  \begin{tikzcd}[ampersand replacement=\&,row sep=1.7cm,column sep=1cm]
    M
    \&
    LM
    \arrow[hook]{d}{\gamma}
    \arrow[dl,hook,dashed]{}{}
    \arrow{l}{\tau}
    \&
    R_\zeta
    \arrow[hook]{d}{}
    \arrow[hook']{l}{}
    \\
    AM
    \arrow[equal]{r}{}
    \arrow{u}{\beta\circ\tau}
    \&
    LM\left[GL\left(m\right)\oplus\mR^m\right]
    \&
    R_\zeta\left[K\oplus\mR^m\right]
    \arrow[hook']{l}{}
  \end{tikzcd}      
\]
Here $AM$ indicates the affine frame bundle of the spacetime $M$, maps $\beta:AM\to LM,\gamma:LM\to AM$ are canonical maps between these bundles (see Appendix \ref{sec:lift-conn-affine}), and the symbol $P\left[G\right]$, where $\pi:P\to M$ is an $H$-principal bundle and $G\supset H$ is a Lie group containing $H$, indicates the extension of $P$ by enlarging its structure group to $G$ (this construction is detailed in Section \ref{sec:cart-conn-jet} below). Therefore, we will restrict ourselves to Cartan connections describing Palatini gravity on a $K$-structure $R_\zeta$; in this setting we devise a method to extend them to Cartan connections on $LM$, in order to reproduce the correspondence between Chern-Simons field theory and Palatini gravity in a case where the structure group is the general linear group $GL\left(m\right)$.

\par Let us briefly describe the structure of the article. The geometrical tools used throughout the article are presented in Section \ref{sec:some-geom-tools}. The description of Cartan connections as sections of a bundle is carried out in Section \ref{sec:cart-conn-jet}; this description is necessary due to the type of geometrical formulation adopted for the variational problems. The operations of extension and reduction for generalized Cartan connections are developed in Section \ref{sec:extens-gener-cart}. Although the author's knowledge of Cartan connections is far from exhaustive, it seems that these operations, even though they are a direct consequence of the procedure of identification between Cartan connections and principal connections on the extended bundle, have not been described previously in the literature. If this is indeed the case, this section represents an original contribution of the present article. The variational problems for Chern-Simons and gravity are described in Section \ref{sec:vari-probl-chern}. The main contribution of this section is to find a global formulation for the Chern-Simons field theory in terms of jet bundles (Section \ref{sec:wise-vari-probl}); the price that must be paid for this formulation is the appearance of an additional constraint (see Remark \ref{rem:Kk0Constraint}). The main result of the article is discussed in Section \ref{sec:extens-gener-cart-1}: A variational problem of the Chern-Simons type has been found, but with a structure group given by the general affine group (Section \ref{sec:chern-simons-vari}) such that the extremals of any Chern-Simons theory (described in terms of the Section \ref{sec:wise-vari-probl}) are in a bijective correspondence with their extremals through the extension and reduction operations of Section \ref{sec:extens-gener-cart}.

\subsubsection*{Notations} We are adopting here the notational conventions from \cite{saunders89:_geomet_jet_bundl} when dealing with bundles and its associated jet spaces. It means that, given a bundle $\pi:P\to M$, there exists a family of bundles and maps fitting in the following diagram
\[
  \begin{tikzcd}[ampersand replacement=\&,row sep=.5cm,column sep=1cm]
    \cdots
    \arrow{r}{}
    \&
    J^{k+1}\pi
    \arrow{r}{\pi_{k+1,k}}
    \arrow[swap]{rrdd}{\pi_{k+1}}
    \&
    J^{k}\pi
    \arrow{r}{\pi_{k,k-1}}
    \arrow{rdd}{\pi_{k}}
    \&
    \cdots
    \arrow{r}{\pi_{21}}
    \&
    J^1\pi
    \arrow[swap]{ddl}{\pi_1}
    \arrow{r}{\pi_{10}}
    \&
    P
    \arrow{lldd}{\pi}
    \\
    \&
    \cdots
    \&
    \&
    \cdots
    \&
    \&
    \\
    \&
    \&
    \&
    M
    \&
    \&
  \end{tikzcd}  
\]
Sections of $\pi:P\to M$ will be indicate by the symbol $\Gamma\pi$. The set of vectors tangent to $P$ in the kernel of $T\pi$ will be represented with the symbol $V\pi\subset TP$. In this regard, the set of vector fields which are vertical for a bundle map $\pi:P\to M$ will be indicated by $\mathfrak{X}^{V\pi}\left(P\right)$. The space of differential $p$-forms, sections of $\Lambda^p (T^*Q)\to Q$, will be denoted by $\Omega^p(Q)$. {We also write $\Lambda^\bullet(Q)=\bigoplus_{j=1}^{\dim Q}\Lambda^j(T^*Q)$}. If $f\colon P\to Q$ is a smooth map and $\alpha_x$ is a $p$-covector on $Q$, we will sometimes use the notation $\alpha_{f(x)}\circ T_xf$ to denote its pullback $f^*\alpha_x$. If $P_1\to Q$ and $P_2\to Q$ are fiber bundles over the same base $Q$ we will write $P_1\times_Q P_2$ for their fibered product, or simply $P_1\times P_2$ if there is no risk of confusion. Unless explicitly stated, the canonical projections onto its factor will be indicated by
\[
  \text{pr}_i:P_1\times P_2\to P_i,\qquad i=1,2.
\]
Given a manifold $N$ and a Lie group $G$ acting on $N$, the symbol $\left[n\right]_G$ for $n\in N$ will indicate the $G$-orbit in $N$ containing $n$; the canonical projection onto its quotient will be denoted by
\[
  p_G^N:N\to N/G.
\]
Also, if $\mathfrak{g}$ is the Lie algebra for the group $G$, the symbol $\xi_N$ will represent the infinitesimal generator for the $G$-action asssociated to $\xi\in\mathfrak{g}$. Finally, Einstein summation convention will be used everywhere.

We will be working with several different types of connections, so it will be necessary to have a notational convention for them. Giving a $H$-principal bundle $p:Q\to M$, the first jet bundle $J^1p$ has a $\hf$-valued $1$-form that is called \emph{canonical connection}; this form will be indicated by the symbol
\[
  \theta_{J^1p}\in\Omega^1\left(J^1p,\hf\right).
\]
Because it will become a connection form, its associated curvature form will be denoted by
\[
  \Theta_{J^1p}\in\Omega^2\left(J^1p,\hf\right).
\]
On the other hand, two different types of connection will be considered on a principal bundle $p:Q\to M$. First, it could have a connection form, which will be indicated as $\omega_Q$; its asscoiated curvature form will be $\Omega_Q$. Moreover, if the structure group $H$ is a subgroup of a bigger Lie group $G\supset H$, then we can have on $Q$ a $\g$-valued $1$-form, which will be denoted as $A_Q$, and the curvature connection will become $F_Q$.

\section{Some geometrical tools}
\label{sec:some-geom-tools}

The following section collects geometrical facts regarding principal bundles and Cartan connections on them. 

\subsection{Geometry of the jet space of a principal bundle}
\label{sec:geometry-jet-bundle}

Throughout the article, we will make extensive use of the geometrical tools related to the jet space associated to a principal bundle, as well as its connection bundle, as they are discussed in \cite{MR0315624,springerlink:10.1007/PL00004852}. So, in order to proceed, let $p:Q\to N$ be a principal bundle with structure group $H$; then we can lift the right action of $H$ on $Q$ to a $H$-action on $J^1p$, and so define the bundle
\[
  \overline{p}:C\left(Q\right):=J^1p/H\to M
\]
fitting in the following diagram
\[
  \begin{tikzcd}[ampersand replacement=\&,row sep=.7cm,column sep=.7cm]
    \&
    J^1p
    \arrow[swap]{dl}{p_{10}}
    \arrow{dr}{p_H^{J^1p}}
    \&
    \\
    Q
    \arrow[swap]{dr}{p}
    \&
    \&
    C\left(Q\right)
    \arrow{dl}{\overline{p}}
    \\
    \&
    M
    \&
  \end{tikzcd}  
\]
It can be proved \cite{springerlink:10.1007/PL00004852} that this diagram defines $J^1p$ as a pullback, namely, that
\begin{equation}\label{eq:JetAsConnection}
  J^1p=p^*Q=Q\times_M C\left(Q\right).
\end{equation}
We know that $J^1p$ comes equipped with the \emph{contact structure}, that can be described by means of a $Vp$-valued $1$-form
\begin{equation}\label{eq:CanonicalConnection}
  \left.\theta_{J^1p}\right|_{j_x^1s}:=T_{j_x^1s}p_{10}-T_xs\circ T_{j_x^1s}p_1;
\end{equation}
moreover, because $p:Q\to N$ is a principal bundle, we have the bundle isomorphism on $Q$
\[
  Vp\simeq Q\times\hf.
\]
It means that we can consider $\theta$ as a $\hf$-valued $1$-form; in fact, with respect to the $K$-principal bundle structure
\[
  p_H^{J^1p}:J^1p\to C\left(Q\right),
\]
the $1$-form $\theta$ becomes a connection form, dubbed \emph{canonical connection form}. It has the following property.

\begin{proposition}
  Let $\Gamma_Q:Q\to J^1p$ be a connection on $Q$. Then its connection form $\omega_Q\in\Omega^1\left(Q,\hf\right)$ can be constructed from the canonical connection form through pullback along $\Gamma_Q$,
  \[
    \omega_Q=\Gamma_Q^*\theta_{J^1p}.
  \]
\end{proposition}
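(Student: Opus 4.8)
The plan is to verify the identity pointwise, evaluating both sides on an arbitrary tangent vector $v\in T_qQ$ and unwinding the definitions. A connection on $Q$ in the jet picture is a section $\Gamma_Q$ of $p_{10}$, so $p_{10}\circ\Gamma_Q=\id_Q$ and, since $p_1=p\circ p_{10}$, also $p_1\circ\Gamma_Q=p$. For a fixed $q\in Q$ write $x:=p(q)$ and let $s$ be a local section of $p$ with $\Gamma_Q(q)=j_x^1s$, so that $s(x)=q$; recall that the value of $\theta_{J^1p}$ at $j_x^1s$ depends only on this $1$-jet, hence $T_xs$ is determined by $\Gamma_Q(q)$.

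First I would set $w:=T_q\Gamma_Q(v)\in T_{\Gamma_Q(q)}J^1p$ and substitute into the defining formula \eqref{eq:CanonicalConnection}:
\[
  \bigl(\Gamma_Q^*\theta_{J^1p}\bigr)(v)=T p_{10}(w)-T_xs\circ T p_1(w).
\]
The two terms are computed using the section identities above: $T p_{10}(w)=T(p_{10}\circ\Gamma_Q)(v)=v$, while $T p_1(w)=T(p_1\circ\Gamma_Q)(v)=T_qp(v)$. Therefore the pullback reduces to the explicit expression
\[
  \bigl(\Gamma_Q^*\theta_{J^1p}\bigr)(v)=v-T_xs\bigl(T_qp(v)\bigr).
\]

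The remaining step is to recognize the right-hand side as the vertical projection defining $\omega_Q$. Applying $T_qp$ and using $p\circ s=\id_N$ shows that $v-T_xs(T_qp(v))$ lies in $V_qp$, so it is genuinely $\hf$-valued under $Vp\simeq Q\times\hf$. Moreover $T_xs(T_qp(v))$ is exactly the horizontal lift of $T_qp(v)$ relative to the horizontal distribution $H_q=\Img{\left(T_xs\right)}$ singled out by $\Gamma_Q$, since $Tp$ restricts to an isomorphism $H_q\to T_xN$ with inverse $T_xs$. Hence $v-T_xs(T_qp(v))$ is precisely the vertical component $v_{\mathrm{ver}}$ of $v$, which is the definition of $\omega_Q(v)$. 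To close the argument I would confirm the two characterizing properties of a connection form: its kernel is $H_q$ (the expression vanishes iff $v=v_{\mathrm{hor}}\in H_q$), and for a fundamental vector field $\xi_Q$ one has $T_qp(\xi_Q)=0$, so the pullback returns $\xi_Q(q)$, i.e. $\xi$ under the identification.

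I expect the only delicate point to be the bookkeeping of the two isomorphisms involved: the identification $Vp\simeq Q\times\hf$ and the reading of $T_xs\circ T_qp$ as the horizontal projector associated with $\Gamma_Q$. Neither is deep, but one must check that the representative section $s$ enters solely through $j_x^1s$, so that the formula is independent of the chosen extension of the jet, and that the sign and ordering conventions in \eqref{eq:CanonicalConnection} match the convention under which $\omega_Q$ is the vertical (rather than horizontal) projection.
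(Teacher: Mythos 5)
Your argument is correct, but it takes a genuinely different route from the one the paper actually supplies. The paper leaves the proposition unproved in Section 2.1 and justifies it in Appendix A (the subsection on the canonical connection form on $J^1\pi$) by a computation in local trivializations: there the canonical form is written as $\left.A\right|_{\left[U,x,a,\xi_i\right]}=\left.\lambda\right|_a-\xi_i\,dx^i$, the equivariant map $\Gamma$ is given explicitly in the coordinates $\left[U,x,a,\xi_i\right]$, and the pullback is matched term by term against the reconstruction formula \eqref{eq:ConnectionFormFromLocalData} of the connection form from the local data $\left(A_U,s_U\right)$, i.e. $g_U^*\lambda+\text{Ad}_{g_U(u)^{-1}}\left.A_U\right|_x$. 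You instead work intrinsically from the contact-form definition \eqref{eq:CanonicalConnection}, using only the section identities $p_{10}\circ\Gamma_Q=\text{id}_Q$ and $p_1\circ\Gamma_Q=p$ to reduce the pullback to $v\mapsto v-T_xs\left(T_qp\left(v\right)\right)$, and then identify this expression as the vertical projector attached to the horizontal distribution $\text{Im}\,T_xs$ singled out by $\Gamma_Q$, checking the two characterizing properties of a connection form. Your route is coordinate-free and makes the geometric content transparent (the canonical form, restricted along a connection, is exactly that connection's vertical projection); the paper's route is more computational but dovetails with its cocycle-style description of $P$, $J^1\pi$ and $C\left(P\right)$, which it reuses for the gauge-transformation lemmas. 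The two points you flag as delicate --- that only $T_xs$ enters (so the formula is well defined on $1$-jets) and the identification $Vp\simeq Q\times\hf$ via $\kappa_H$ --- are indeed the only things to verify, and you handle both.
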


For any manifold $M$ of dimension $m$, the bundle $\tau:LM\to M$ is the \emph{frame bundle of $M$}, is defined through
\begin{equation}\label{eq:FrameBundle}
  LM:=\bigcup_{x\in M}\left\{u:\mR^m\to T_xM\text{ linear and with inverse}\right\}.
\end{equation}
It has a canonical free $GL\left(m,\mR\right)$-action, given by the formula 
\[
  u\cdot A:=u\circ A,\qquad u\in LM,A\in GL\left(m,\mR\right).
\]

Let $\eta\in M_m\left(\mR\right)$ be a non degenerate symmetric $m\times m$-matrix with real entries; for definiteness, we will fix
\[
  \eta:=
  \begin{bmatrix}
    -1&0&\cdots&0\\
    0&1&&0\\
    \vdots&&\ddots&\vdots\\
    0&\cdots&0&1
  \end{bmatrix},
\]
although the constructions we will consider in the present article should work with any signature. Then we have a Lie group $K\subset GL\left(m,\mR\right)$ defined by
\begin{equation}\label{eq:LorentzGroup}
  K:=\left\{A\in M_m\left(\mR\right):A\eta A^T=\eta\right\}.
\end{equation}
Then we have an action of $K$ on $LM$; it yields to a bundle
\[
  \tau_\Sigma:\Sigma:=LM/K\to M.
\]
\begin{lemma}
  The bundle $\Sigma$ is the bundle of metrics of $\eta$-signature on $M$.
\end{lemma}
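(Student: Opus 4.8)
The plan is to construct an explicit $K$-invariant ``polarization'' map from the frame bundle to the bundle of symmetric bilinear forms, to identify its fibres with the $K$-orbits, and to check it is onto the forms of $\eta$-signature; the induced map on $\Sigma=LM/K$ will then be the asserted identification. Write $\mathrm{Met}_\eta(M)\subset S^2T^*M$ for the open subbundle whose fibre over $x\in M$ consists of the nondegenerate symmetric bilinear forms on $T_xM$ having the signature of $\eta$. First I would define $\phi\colon LM\to S^2T^*M$ over $\id_M$ by sending a frame $u\colon\mR^m\to T_xM$ to the unique form making $u$ an isometry,
\[
  \phi(u)(v,w):=\eta\bigl(u^{-1}v,u^{-1}w\bigr)=\bigl(u^{-1}v\bigr)^{T}\eta\,\bigl(u^{-1}w\bigr),\qquad v,w\in T_xM,
\]
that is, $\phi(u)=(u^{-1})^{*}\eta$. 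Since $u^{-1}$ is a linear isomorphism, Sylvester's law of inertia shows $\phi(u)$ is nondegenerate with the same signature as $\eta$, so $\phi$ actually lands in $\mathrm{Met}_\eta(M)$ and is visibly smooth.

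Next I would verify that $\phi$ is constant exactly along $K$-orbits. For $A\in K$ one computes
\[
  \phi(u\cdot A)(v,w)=\eta\bigl(A^{-1}u^{-1}v,\,A^{-1}u^{-1}w\bigr)=\eta\bigl(u^{-1}v,u^{-1}w\bigr)=\phi(u)(v,w),
\]
using that $A^{-1}\in K$ preserves $\eta$; hence $\phi$ is $K$-invariant. Conversely, if $\phi(u)=\phi(u')$ with $u,u'$ over the same point, set $A:=u^{-1}u'$ and substitute $v=ua$, $w=ub$ into $\phi(u')=\phi(u)$ to obtain $\eta(A^{-1}a,A^{-1}b)=\eta(a,b)$ for all $a,b\in\mR^m$; thus $A\in K$ and $u'=u\cdot A$, so the fibres of $\phi$ are precisely the $K$-orbits. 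Surjectivity onto $\mathrm{Met}_\eta(M)$ is again Sylvester: a form $g_x$ of $\eta$-signature admits a basis $(f_i)$ of $T_xM$ with Gram matrix $\eta$, and the frame $u(e_i):=f_i$ satisfies $\phi(u)=g_x$.

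Since $K\subset GL\left(m,\mR\right)$ acts freely and properly on $LM$, the projection $p_K^{LM}\colon LM\to\Sigma$ is a principal $K$-bundle, and by the previous step $\phi$ descends to a smooth bijective bundle map $\overline{\phi}\colon\Sigma\to\mathrm{Met}_\eta(M)$ over $\id_M$. To promote $\overline{\phi}$ to a diffeomorphism I would use the associated-bundle viewpoint: the $GL\left(m,\mR\right)$-action $B\cdot\beta=(B^{-1})^{*}\beta$ on symmetric forms is transitive on the $\eta$-signature stratum with stabiliser $K$, whence $\mathrm{Met}_\eta(M)$ is canonically $LM\times_{GL\left(m,\mR\right)}\bigl(GL\left(m,\mR\right)/K\bigr)=LM/K=\Sigma$, and $\overline{\phi}$ is exactly this canonical identification (a dimension count, $\dim\Sigma=m+m^2-\tfrac{m(m-1)}{2}=m+\tfrac{m(m+1)}{2}=\dim\mathrm{Met}_\eta(M)$, is a consistency check). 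I expect the only substantive input to be the two appeals to Sylvester's law of inertia---surjectivity and the invariance of signature; the equivariance identity and the descent are formal, and the diffeomorphism property then follows from the associated-bundle description, or equivalently from the fact that $\overline{\phi}$ covers $\id_M$ and restricts on each fibre to the orbit map $GL\left(m,\mR\right)/K\to(\text{fibre of }\mathrm{Met}_\eta(M)\text{ over }x)$, which is a diffeomorphism of homogeneous spaces.
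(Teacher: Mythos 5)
Your proof is correct. The paper states this lemma without proof (it is treated as a standard fact), and your argument---the polarization map $u\mapsto (u^{-1})^{*}\eta$, identification of its fibres with the $K$-orbits via Sylvester's law, and the associated-bundle description $\mathrm{Met}_\eta(M)\simeq LM\times_{GL\left(m,\mR\right)}\left(GL\left(m,\mR\right)/K\right)$---is exactly the canonical argument the author leaves implicit, and is consistent with the paper's later use of the subbundles $O_\zeta$.
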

Let us indicate with $\kf$ the Lie algebra of $K$; then, we have that
\[
  \kf:=\left\{a\in\mathfrak{gl}\left(m,\mR\right):a\eta+\eta a^T=0\right\}.
\]
Accordingly, we can define the $K$-invariant subspace
\begin{equation}\label{eq:Transvections}
  \pf:=\left\{a\in\mathfrak{gl}\left(m,\mR\right):a\eta-\eta a^T=0\right\},
\end{equation}
usually called \emph{transvections} (see \cite{wise2009symmetric}); it follows that
\[
  \mathfrak{gl}\left(m,\mR\right)=\kf\oplus\pf.
\]

We will need this result concerning some natural properties of the canonical connections.

\begin{lemma}\label{lem:CanonicalConnectionNatural}
  Let $p:Q\to N$ be a $H$-principal bundle and $i_\zeta:R_\zeta\hookrightarrow Q$ a $L$-principal subbundle, with projection
  \[
    p_\zeta:R_\zeta\to N.
  \]
  We have the following relation between the canonical connections on $J^1p$ and $J^1p_\zeta$, namely
  \[
    \left(j^1i_\zeta\right)^*\theta_{J^1p}=\theta_{J^1p_\zeta},
  \]
  where $\theta_{J^1p}\in\Omega^1\left(J^1p,\mathfrak{h}\right)$ and $\theta_{J^1p_\zeta}\in\Omega^1\left(J^1p_\zeta,\mathfrak{l}\right)$ are the corresponding canonical connection forms.
\end{lemma}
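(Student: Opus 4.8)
The plan is to unwind the definition \eqref{eq:CanonicalConnection} of the canonical connection form and exploit the naturality of the jet prolongation with respect to the fibered inclusion $i_\zeta$. Since $p\circ i_\zeta=p_\zeta$, the map $i_\zeta$ covers $\id_N$, and its first prolongation $j^1i_\zeta:J^1p_\zeta\to J^1p$ satisfies the two commutation relations
\[
  p_{10}\circ j^1i_\zeta=i_\zeta\circ \left(p_\zeta\right)_{10},\qquad p_1\circ j^1i_\zeta=\left(p_\zeta\right)_1,
\]
the second one because $j^1i_\zeta$ lives over the identity on $N$. These are the only structural facts about prolongations that the argument needs.

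First I would fix a point $w=j_x^1s_\zeta\in J^1p_\zeta$, set $s:=i_\zeta\circ s_\zeta$ (so that $j^1i_\zeta\left(w\right)=j_x^1s$), and evaluate the pullback on a tangent vector $v\in T_wJ^1p_\zeta$ directly from \eqref{eq:CanonicalConnection}, writing $V:=T_w\left(j^1i_\zeta\right)v$. Applying the chain rule to each of the two terms $Tp_{10}$ and $T_xs\circ Tp_1$ and substituting the commutation relations above, both terms acquire a common factor $T_{s_\zeta\left(x\right)}i_\zeta$. Concretely, the first term becomes $T_{s_\zeta\left(x\right)}i_\zeta\left(T_w\left(p_\zeta\right)_{10}v\right)$, while $T_xs=T_{s_\zeta\left(x\right)}i_\zeta\circ T_xs_\zeta$ turns the second into $T_{s_\zeta\left(x\right)}i_\zeta\left(T_xs_\zeta\circ T_w\left(p_\zeta\right)_1 v\right)$. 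The essential bookkeeping point is that both occurrences of $i_\zeta$ are differentiated at the same base point $s_\zeta\left(x\right)=\left(p_\zeta\right)_{10}\left(w\right)$, so $T_{s_\zeta\left(x\right)}i_\zeta$ may be pulled out of the difference to give
\[
  \bigl(\left(j^1i_\zeta\right)^*\theta_{J^1p}\bigr)\big|_w\left(v\right)
  =T_{s_\zeta\left(x\right)}i_\zeta\Bigl(T_w\left(p_\zeta\right)_{10}v-T_xs_\zeta\circ T_w\left(p_\zeta\right)_1 v\Bigr)
  =T_{s_\zeta\left(x\right)}i_\zeta\bigl(\theta_{J^1p_\zeta}\big|_w\left(v\right)\bigr),
\]
the bracketed expression being exactly the definition of the canonical form on $J^1p_\zeta$.

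It remains to reconcile the two sides as Lie-algebra-valued forms, and this is where the only genuine subtlety lies: the identity above is an equality of $Vp$-valued covectors, whereas the statement compares an $\hf$-valued form with an $\lf$-valued one. To bridge this I would use that $i_\zeta$ is $L$-equivariant, the $L$-action on $Q$ being the restriction of the $H$-action along $L\hookrightarrow H$; differentiating $i_\zeta\left(r\cdot\exp\left(t\xi\right)\right)=i_\zeta\left(r\right)\cdot\exp\left(t\xi\right)$ at $t=0$ gives $T_ri_\zeta\bigl(\xi_{R_\zeta}\left(r\right)\bigr)=\xi_Q\bigl(i_\zeta\left(r\right)\bigr)$ for every $\xi\in\lf$, where on the right $\xi$ is regarded as an element of $\hf$ via $\lf\hookrightarrow\hf$. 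Under the trivialisations $Vp_\zeta\simeq R_\zeta\times\lf$ and $Vp\simeq Q\times\hf$ this says precisely that $Ti_\zeta$ intertwines the two identification maps with the algebra inclusion $\lf\hookrightarrow\hf$. Hence the right-hand side above, read in $\hf$, is the image of $\theta_{J^1p_\zeta}|_w\left(v\right)\in\lf$ under $\lf\hookrightarrow\hf$, which yields $\left(j^1i_\zeta\right)^*\theta_{J^1p}=\theta_{J^1p_\zeta}$. The hardest part is thus not the jet computation, which is forced by naturality, but verifying that the vertical-bundle identifications used to view $\theta$ as a Lie-algebra-valued form are compatible with $i_\zeta$; once the equivariance of $i_\zeta$ is recorded, this compatibility is immediate.
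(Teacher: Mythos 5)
Your proposal is correct and follows essentially the same route as the paper: the paper's entire proof consists of stating the intertwining formula $\left.\theta_{J^1p}\right|_{j_x^1s}\circ T_{j_x^1\sigma}\left(j^1i_\zeta\right)=T_{\sigma\left(x\right)}i_\zeta\circ\left.\theta_{J^1p_\zeta}\right|_{j_x^1\sigma}$, which is exactly the identity you derive by the chain-rule computation on the contact form. Your additional care in checking that the vertical-bundle trivialisations $Vp_\zeta\simeq R_\zeta\times\lf$ and $Vp\simeq Q\times\hf$ are intertwined by $Ti_\zeta$ via the inclusion $\lf\hookrightarrow\hf$ is a detail the paper leaves implicit, and is a welcome clarification rather than a deviation.
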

\begin{proof}
  The lemma follows from the formula
  \[
    \left.\theta_{J^1p}\right|_{j_x^1s}\circ T_{j_x^1\sigma}\left(j^1i_\zeta\right)=T_{\sigma\left(x\right)}i_\zeta\circ\left.\theta_{J^1p_\zeta}\right|_{j_x^1\sigma},
  \]
  valid for any $j_x^1s=j^1i_\zeta\left(j_x^1\sigma\right)$ and $j_x^1\sigma\in J^1p_\zeta$.
\end{proof}

Assuming that some topological conditions on the manifold $M$ hold\footnote{The existence of a metric with $\left(p,q\right)$-signature is equivalent to the splitting of the tangent bundle $TM$ in a direct sum of vector subbundles of rank $p$ and $q$ respectively.}, a family
\[
  \left\{O_\zeta:\zeta:M\to\Sigma\text{ metric}\right\}
\]
of subbundles of $LM$ can be constructed; namely, let us define
\[
  O_\zeta:=\left\{u\in LM:\zeta\left(u\left(w_1\right),u\left(w_2\right)\right)=\eta\left(w_1,w_2\right)\text{ for all }w_1,w_2\in\mR^m\right\}.
\]
Here $\eta:\mR^m\times\mR^m\to\mR$ indicates the bilinear form associated to the matrix $\eta$.


\subsection{Introduction to Cartan geometry}
\label{sec:intr-cart-geom}

Let us recall the definition of Cartan connections on a principal bundle \cite{Wise_2010,alekseevsky95:_differ_cartan,sharpe1997differential}. Let $\left(G,H\right)$ be a pair of Lie groups such that $H\subset G$ is a closed subgroup and $G/H$ is connected. Recall that for every principal bundle $\pi:P\to M$ with structure group $H$ we have a map
\[
  \kappa_H:VP\to\hf
\]
such that
\[
  \kappa_H\left(\xi_P\left(u\right)\right)=\xi
\]
for every $\xi\in\hf$ and every $u\in P$.

\begin{definition}[Cartan geometry]
  A \emph{Cartan geometry} modelled on the pair $\left(G,H\right)$ is an $H$-principal bundle $\pi:P\to M$ together with a $\g$-valued $1$-form $A$ on $P$ such that
  \begin{enumerate}[(i)]
  \item $A_u:T_uP\to\g$ is a linear isomorphism for every $u\in P$,
  \item $R_h^*A=\text{Ad}_{h^{-1}}\circ A$ for every $h\in H$, and
  \item $\left.A\right|VP=\kappa_H$.
  \end{enumerate}
  The form $A$ is called the \emph{(principal) Cartan connection} for the given Cartan geometry.
\end{definition}

A Cartan geometry on a manifold provides us with a associated bundle description of its tangent bundle, as the following result indicates.

\begin{proposition}
  Let $\left(\pi:P\to M,A\right)$ be a Cartan geometry modeled on the pair $\left(G,H\right)$. Then $TM$ is isomorphic as vector bundle to the associated bundle
  \[
    P\times_H\g/\hf,
  \]
  where $H$ acts on $\g/\hf$ through the quotient representation induced by the $\text{Ad}\left(H\right)$-action on $\g$.
\end{proposition}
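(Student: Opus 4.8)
The plan is to use the Cartan connection $A$ to identify $TP$ with the trivial bundle $P\times\g$ and then descend this identification to $M$. The key observation is that condition (i) makes $A_u\colon T_uP\to\g$ a linear isomorphism at every $u$, while condition (iii) forces this isomorphism to carry the vertical subspace $V_uP$ exactly onto $\hf\subset\g$; together these produce, fiberwise, an isomorphism $\g/\hf\cong T_uP/V_uP\cong T_{\pi(u)}M$, and condition (ii) will supply the $H$-equivariance needed to glue these fiberwise maps into a global isomorphism of associated bundles.

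First I would define, for each $u\in P$, the linear map
\[
  \phi_u\colon\g/\hf\to T_{\pi(u)}M,\qquad \phi_u(X+\hf):=T_u\pi\bigl(A_u^{-1}(X)\bigr).
\]
To see this is well defined on the quotient, note that if $X\in\hf$ then $A_u^{-1}(X)=X_P(u)$ is vertical by condition (iii) (using $\kappa_H(X_P(u))=X$), so $T_u\pi(A_u^{-1}(X))=0$. That $\phi_u$ is an isomorphism follows because $A_u^{-1}$ is an isomorphism carrying $\hf$ onto $V_uP=\ker T_u\pi$, so $\phi_u$ is the induced map $\g/\hf\xrightarrow{\sim}T_uP/V_uP\xrightarrow{\sim}T_{\pi(u)}M$; the rank is confirmed by the count $\dim(\g/\hf)=\dim T_uP-\dim V_uP=\dim M$.

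The heart of the argument, and the step I expect to require the most care, is the equivariance computation showing that the family $\{\phi_u\}$ descends to a well-defined map on the associated bundle. Concretely, I would establish
\[
  \phi_{u\cdot h}(X+\hf)=\phi_u\bigl(\Ad_h(X)+\hf\bigr)
\]
for all $h\in H$. Writing $w:=A_{u\cdot h}^{-1}(X)$ and $v:=(T_uR_h)^{-1}(w)$, condition (ii) in the form $A_{u\cdot h}\circ T_uR_h=\Ad_{h^{-1}}\circ A_u$ gives $A_u(v)=\Ad_h(X)$, i.e. $v=A_u^{-1}(\Ad_h(X))$; then $\pi\circ R_h=\pi$ yields $T_{u\cdot h}\pi(w)=T_u\pi(v)$, which is exactly the claimed identity. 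Since the quotient representation is $h\cdot(X+\hf)=\Ad_h(X)+\hf$ (well defined because $\Ad_h(\hf)=\hf$ for $h\in H$), the equality of representatives $[u\cdot h,X+\hf]=[u,\Ad_h(X)+\hf]$ in $P\times_H\g/\hf$ is matched precisely by the identity above, so $\Phi([u,X+\hf]):=\phi_u(X+\hf)$ is well defined.

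Finally I would check that $\Phi$ is a vector bundle isomorphism covering $\id_M$: it is fiberwise linear and fiberwise bijective since each $\phi_u$ is, and smoothness follows from the smoothness of $A$ together with the local triviality of the bundles involved. The main obstacle is purely the bookkeeping of the direction of the adjoint action and of the right-translation derivatives, so that condition (ii) aligns exactly with the quotient representation defining the associated bundle; once the signs and inverses are matched, the remaining verifications are formal.
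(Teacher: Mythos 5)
Your proposal is correct and follows essentially the same route as the paper: both use the Cartan connection to build, at each $u\in P$, an isomorphism between $\g/\hf$ and $T_{\pi(u)}M$ (you via $T_u\pi\circ A_u^{-1}$, the paper via the inverse map $v_x\mapsto\pi_{\g/\hf}(A_u(\widehat{v_x}))$), and then glue using the equivariance supplied by condition (ii). Your write-up is in fact more complete than the paper's, which leaves the well-definedness and equivariance checks to the reader, whereas you carry out the computation $\phi_{u\cdot h}(X+\hf)=\phi_u(\Ad_h(X)+\hf)$ explicitly.
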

\begin{proof}
  Let $u\in P$; because $A_u:T_uP\to\g$ satisfies $\left.A\right|VP=\kappa_H$, it induces a morphism
  \[
    \phi_u:T_xM\to\g/\hf:v_x\mapsto\pi_{\g/\hf}\left(A_u\left(\widehat{v_x}\right)\right),
  \]
  where $x=\pi\left(u\right)$, $\widehat{v_x}\in T_uP$ is any element projecting onto $v_x$ via $T_u\pi$, and
  \[
    \pi_{\g/\hf}:\g\to\g/\hf
  \]
  is the canonical projection onto the quotient. It becomes an isomorphism because $A_u$ is also an isomorphism, and given that this map is $H$-equivariant, we have that
  \[
    \phi_{u\cdot h}=\text{Ad}_{h^{-1}}\circ\phi_u
  \]
  for every $u\in P$ ad $h\in H$. Therefore, we have the isomorphism of bundles $\overline{q}:P\times_H\g/\hf\to TM$ given by
  \[
    \overline{q}\left(\left[u,\left[\xi\right]_\hf\right]_H\right):=\left(\phi_u\right)^{-1}\left(\left[\xi\right]_\hf\right),
  \]
  that can be proved to be well-defined.
\end{proof}

We should stress that a Cartan connection is not a principal connection on $P$, because it takes values in the larger Lie algebra $\g$; nevertheless, there exists a relationship between these concepts \cite{panak1998natural}. Consider $H$ as a $G$-space through left multiplication, and construct the associated bundle
\[
  P\left[G\right]:=P\times_HG;
\]
it is a $G$-principal bundle, and we have a canonical inclusion $\gamma:P\hookrightarrow P\left[G\right]$ given by
\[
  \gamma\left(u\right):=\left[u,e\right]_H.
\]
It is known that every morphism of $G$-principal bundles on a fixed basis $M$ is an isomorphism \cite{steenrod1999the}; this fact gives rise to the following property of the extension procedure described above.
\begin{lemma}\label{lem:BundleIsomorphism}
  Let $P$ be an $H$-principal bundle on $M$ and suppose further that it is a subbundle of a $G$-principal bundle $Q$ on $M$. Then $Q\simeq P\left[G\right]$.
\end{lemma}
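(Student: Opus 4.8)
The plan is to produce an explicit morphism of $G$-principal bundles over $M$ with domain $P[G]$ and codomain $Q$, and then to invoke the rigidity statement recalled just above the lemma, namely that every morphism of $G$-principal bundles over a fixed base is automatically an isomorphism. Writing $i:P\hookrightarrow Q$ for the subbundle inclusion, the hypothesis that $P$ is an $H$-subbundle of $Q$ means precisely that $i$ is $H$-equivariant, i.e. $i(u\cdot h)=i(u)\cdot h$ for all $u\in P$ and $h\in H$, where on the right $H\subset G$ acts by restriction of the principal $G$-action on $Q$.

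First I would define the candidate map $\Phi:P[G]=P\times_HG\to Q$ on representatives by
\[
  \Phi\big([u,g]_H\big):=i(u)\cdot g.
\]
To check that this descends to the quotient defining $P[G]$, recall that $H$ acts on $P\times G$ by $(u,g)\cdot h=(u\cdot h,h^{-1}g)$; then for $h\in H$,
\[
  \Phi\big([u\cdot h,h^{-1}g]_H\big)=i(u\cdot h)\cdot(h^{-1}g)=\big(i(u)\cdot h\big)\cdot(h^{-1}g)=i(u)\cdot g,
\]
using the $H$-equivariance of $i$ and associativity of the $G$-action. Since $(u,g)\mapsto i(u)\cdot g$ is smooth and $H$-invariant, it factors through a smooth map $\Phi$ on $P[G]$.

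Next I would verify that $\Phi$ is a morphism of $G$-principal bundles covering $\id_M$. It covers the identity because the projection of $[u,g]_H$ is $\pi(u)$ while that of $i(u)\cdot g$ equals $\pi_Q(i(u))=\pi(u)$, the $G$-action preserving fibres and $i$ covering the identity. Equivariance follows from
\[
  \Phi\big([u,g]_H\cdot g'\big)=\Phi\big([u,gg']_H\big)=i(u)\cdot(gg')=\big(i(u)\cdot g\big)\cdot g'=\Phi\big([u,g]_H\big)\cdot g'
\]
for every $g'\in G$.

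Having exhibited a $G$-bundle morphism $\Phi:P[G]\to Q$ over $M$, the cited fact immediately upgrades it to an isomorphism, giving $Q\simeq P[G]$. There is no substantial obstacle here: the only point requiring care is the well-definedness and smoothness of $\Phi$ at the level of the associated bundle, which is settled by the $H$-invariance computation above, and the remainder of the argument rests entirely on the quoted rigidity theorem for morphisms of principal bundles over a fixed base.
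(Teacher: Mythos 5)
Your proposal is correct and follows exactly the same route as the paper: the same map $\Phi\left(\left[u,g\right]_H\right)=i\left(u\right)\cdot g$, upgraded to an isomorphism by the quoted rigidity of $G$-principal bundle morphisms over a fixed base. You simply spell out the well-definedness and equivariance checks that the paper leaves implicit.
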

\begin{proof}
  Let $i:P\hookrightarrow Q$ be the immersion of $P$ into $Q$; then we have the morphism
  \[
    \phi:P\left[G\right]\to Q:\left[u,g\right]_H\mapsto i\left(u\right)\cdot g.
  \]
  It is a morphism of $G$-principal bundles on $M$; therefore, $Q$ and $P\left[G\right]$ are isomorphic, as desired.
\end{proof}

The extension of a principal bundle can be used to relate Cartan connections with principal connections \cite{alekseevsky95:_differ_cartan,sharpe1997differential,panak1998natural}.

\begin{proposition}\label{prop:CartanVsPrincipalConn}
  The Cartan connection $A:TP\to\g$ induces on $P\left[G\right]$ a unique principal connection form $\widetilde{A}:TP\left[G\right]\to\g$ such that
  \[
    \gamma^*\widetilde{A}=A.
  \]
  Conversely, suppose that $\dim P=\dim G$ and let $\widetilde{A}$ be a principal connection on $P\left[G\right]$ such that
  \begin{equation}\label{eq:CrossCondition1}
    \ker\widetilde{A}\cap T\gamma\left(TP\right)=\left\{0\right\}.
  \end{equation}
  Then $A:=\gamma^*\widetilde{A}$ is a Cartan connection on $P$.
\end{proposition}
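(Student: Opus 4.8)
The plan is to treat the two implications separately, using throughout the explicit model $P\left[G\right]=\left(P\times G\right)/H$ in which $H$ acts by $\left(u,g\right)\cdot h=\left(u\,h,h^{-1}g\right)$. Write $q:P\times G\to P\left[G\right]$ for the quotient projection and $j:P\to P\times G,\ j\left(u\right)=\left(u,e\right)$, so that $\gamma=q\circ j$, and let $\mu\in\Omega^1\left(G,\g\right)$ be the left Maurer--Cartan form. For the direct implication I would obtain $\widetilde{A}$ by descent from the $\g$-valued $1$-form on $P\times G$
\[
  \Phi_{\left(u,g\right)}\left(v,\eta\right):=\Ad_{g^{-1}}\left(A_u\left(v\right)\right)+\mu_g\left(\eta\right),\qquad v\in T_uP,\ \eta\in T_gG.
\]
The heart of this direction is to check that $\Phi$ is \emph{basic} for the $H$-action, so that $\Phi=q^*\widetilde{A}$ for a unique form $\widetilde{A}$ on $P\left[G\right]$. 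Horizontality---that $\Phi$ annihilates the fundamental fields $\zeta_\xi\left(u,g\right)=\left(\xi_P\left(u\right),-\left(R_g\right)_*\xi\right)$, $\xi\in\hf$---uses axiom (iii), $A\left(\xi_P\right)=\kappa_H\left(\xi_P\right)=\xi$, together with $\mu_g\left(\left(R_g\right)_*\xi\right)=\Ad_{g^{-1}}\xi$; invariance under $\left(u,g\right)\mapsto\left(u\,h,h^{-1}g\right)$ uses axiom (ii), $R_h^*A=\Ad_{h^{-1}}\circ A$, together with the left-invariance of $\mu$.

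Once $\widetilde{A}$ is available, I would verify that it is a principal connection by two short Maurer--Cartan computations: $\mu_g\left(\left(L_g\right)_*\zeta\right)=\zeta$ shows that $\widetilde{A}$ reproduces the fundamental fields of the $G$-action, and $R_{g'}^*\mu=\Ad_{g'^{-1}}\circ\mu$ gives $\Ad$-equivariance. The identity $\gamma^*\widetilde{A}=j^*\Phi=A$ is then immediate, since $Tj\left(v\right)=\left(v,0\right)$ and $\Ad_e=\id$. For uniqueness I would work pointwise on the reduction $\gamma\left(P\right)$, which meets every $G$-fibre: since $A_u$ is an isomorphism one has $\dim P=\dim G$, hence $\dim M=\dim G-\dim H$, and a dimension count yields $T_{\gamma\left(u\right)}P\left[G\right]=T\gamma\left(T_uP\right)+V_{\gamma\left(u\right)}$, where $V_{\gamma\left(u\right)}$ is the vertical space of the $G$-bundle. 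Two principal connections pulling back to $A$ agree on $T\gamma\left(T_uP\right)$ and agree on $V_{\gamma\left(u\right)}$ (both equal $\kappa_G$ there), hence coincide at $\gamma\left(u\right)$; $\Ad$-equivariance then propagates the equality along each fibre.

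For the converse I would verify the three Cartan axioms for $A:=\gamma^*\widetilde{A}$, the point being that $\gamma$ is an $H$-equivariant embedding, $\gamma\left(u\,h\right)=\gamma\left(u\right)\cdot h$. This turns $T\gamma\left(\xi_P\right)$ into the $G$-fundamental field $\xi_{P\left[G\right]}$ for $\xi\in\hf$, so that axiom (iii) is inherited from $\widetilde{A}\left(\xi_{P\left[G\right]}\right)=\xi$, while axiom (ii) follows from the $\Ad$-equivariance of $\widetilde{A}$ restricted to $H$ combined with the equivariance of $\gamma$. The only genuinely new ingredient is axiom (i): writing $A_u=\widetilde{A}_{\gamma\left(u\right)}\circ T_u\gamma$ with $T_u\gamma$ injective, the relation $A_u\left(v\right)=0$ forces $T\gamma\left(v\right)\in\ker\widetilde{A}\cap T\gamma\left(T_uP\right)=\left\{0\right\}$ by \eqref{eq:CrossCondition1}, whence $v=0$; injectivity together with the equidimensionality hypothesis $\dim P=\dim G$ upgrades $A_u$ to an isomorphism. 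I expect the descent step of the direct implication to be the main obstacle, since it is exactly there that all three defining properties of the Cartan connection must be used at once and the Maurer--Cartan bookkeeping (distinguishing left and right translates and the two $\Ad$ twists) has to be handled carefully; by contrast, in the converse the transversality condition \eqref{eq:CrossCondition1} does all the work and that direction reduces to a brief equivariance argument.
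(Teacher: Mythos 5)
Your proposal is correct and follows essentially the same route as the paper: the connection $\widetilde{A}$ is defined by the very same formula (the paper writes $\widetilde{A}|_{[u,g]_H}(W)=\zeta+\mathrm{Ad}_{g^{-1}}A_u(X_u)$ for $W=Tp_H^{P\times G}(X_u,T_eL_g\zeta)$, which is exactly your $\Phi$ read through the Maurer--Cartan form), and the converse is verified axiom by axiom just as you describe. The only difference is one of completeness: the paper asserts well-definedness, the connection axioms for $\widetilde{A}$, and uniqueness without proof, whereas you actually carry out the descent (basicness) computation and the spanning argument $T_{\gamma(u)}P[G]=T\gamma(T_uP)+V_{\gamma(u)}$ for uniqueness --- all of which check out.
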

\begin{proof}
  Because the quotient map $p_{H}^{P\times G}:P\times G\to P\left[G\right]$ is surjective, any element $W\in T_{\left[u,g\right]_H}P\left[G\right]$ can be represented as
  \[
    W=T_{\left(u,g\right)}p_{H}^{P\times G}\left(X_u,T_eL_g\zeta\right)
  \]
  for $X_u\in T_uP$ and $\zeta\in\g$. Then we define
  \[
    \left.\widetilde{A}\right|_{\left[u,g\right]_H}\left(W\right):=\zeta+\text{Ad}_{g^{-1}}\left.A\right|_u\left(X_u\right).
  \]
  It can be proved that it is well-defined, and defines a principal connection on $P\left[G\right]$.

  Now, let us suppose that we have a principal connection $\widetilde{A}$ on $P\left[G\right]$ and define
  \[
    A:=\gamma^*\widetilde{A}:TP\to\g.
  \]
  Let us verify that it is a Cartan connection on $P$. Because of the condition \eqref{eq:CrossCondition1}, the map
  \[
    A_u:T_uP\to\g
  \]
  is a monomorphism; because $\dim{T_uP}=\dim P=\dim G=\dim\g$, we have that this map is an isomorphism.

  Aditionally, for any $h\in H$
  \begin{align*}
    R_h^*A&=R_h^*\gamma^*\widetilde{A}\\
          &=\left(\gamma\circ R_h\right)^*\widetilde{A}\\
          &=\left(R_h\circ\gamma\right)^*\widetilde{A}\\
          &=\gamma^*R_h^*\widetilde{A}\\
          &=\gamma^*\left(\text{Ad}_{h^{-1}}\circ\widetilde{A}\right)\\
          &=\text{Ad}_{h^{-1}}\circ A.
  \end{align*}
  Finally, let us take $\xi\in\hf$; because of the identity
  \[
    T_u\gamma\left(\xi_P\left(u\right)\right)=\xi_{P\left[G\right]}\left(\gamma\left(u\right)\right)
  \]
  we will have that
  \[
    A_u\left(\xi_P\left(u\right)\right)=\widetilde{A}_{\gamma\left(u\right)}\left(T_u\gamma\left(\xi_P\left(u\right)\right)\right)=\widetilde{A}_{\gamma\left(u\right)}\left(\xi_{P\left[G\right]}\left(\gamma\left(u\right)\right)\right)=\xi
  \]
  for any $u\in P$.
\end{proof}


\section{Cartan connections and jet bundles}
\label{sec:cart-conn-jet}

The basic idea for the geometrical interpretation of the correspondence between Chern-Simons field theory and gravity is due to Wise \cite{wise2009symmetric,Wise_2010}, and uses a Chern-Simons Lagrangian evaluated on forms that are not principal connections, but Cartan connections. As we have pointed out before, this approach is not convenient when you try to understand Chern-Simons field theory from the viewpoint of geometric mechanics (i.e., the setting described in \cite{Gotay:1997eg,Gotay:2004ib,Gotay1991203,MarkJ1991375,PROP:PROP2190440304}). Therefore, we will devote the next section to translate the formalism of Cartan connections into the realm of jet bundles, in order to have at our disposal a language suitable for the description of Chern-Simons field theory from this viewpoint.

\subsection{Canonical (generalized) Cartan connection on a jet bundle}
\label{sec:canon-gener-cart}

Thus, we have that a Cartan connection on an $H$-principal bundle $P$ can be seen as a principal connection $\widetilde{A}$ on the extended bundle $P\left[G\right]$, provided that
\begin{enumerate}[(a)]
\item\label{item:DimensionCartan} $P$ has the same dimension than $G$, and
\item\label{item:ComplementaryCartan} the horizontal spaces of this connection are complementary to the tangent spaces of $P$ (viewed as subspaces of the tangent spaces of $P\left[G\right]$).
\end{enumerate}

We can reformulate the second condition in terms of jet bundles; in fact, let us define the set
  \[
    U_\gamma:=\left\{j_x^1s\in J^1\pi_{\left[G\right]}:T_{s\left(x\right)}R_{g^{-1}}\left(T_xs\left(T_xM\right)\right)\cap T_u\gamma\left(T_uP\right)=\left\{0\right\}\text{ iff }s\left(x\right)=\left[u,g\right]\right\}.
  \]
Using decomposition \eqref{eq:JetAsConnection}, we can see that sections having its images in this set corresponds exactly with connections on $P\left[G\right]$ that verify condition \ref{item:ComplementaryCartan} above. It gives us the following characterization for condition \eqref{eq:CrossCondition1} in terms of the jet bundle of the extended principal bundle $P\left[G\right]$.
\begin{proposition}
A connection $\widetilde{\Gamma}:P\left[G\right]\to J^1\pi_{\left[G\right]}$ will satisfy Equation \eqref{eq:CrossCondition1} if and only if $\widetilde{\Gamma}\left(\left[u,g\right]\right)\in U_\gamma$ for all $\left[u,g\right]\in P\left[G\right]$.
\end{proposition}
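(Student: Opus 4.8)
The plan is to translate membership in $U_\gamma$ at an arbitrary point $[u,g]\in P\left[G\right]$ into the transversality condition \eqref{eq:CrossCondition1} evaluated at the single point $\gamma\left(u\right)=[u,e]$, the bridge between the two being the $G$-equivariance of a principal connection form.

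First I would make explicit the horizontal distribution determined by $\widetilde{\Gamma}$. Writing $\widetilde{\Gamma}\left([u,g]\right)=j_x^1s$ with $s\left(x\right)=[u,g]$, the $1$-jet encodes the horizontal subspace at $[u,g]$ as the image $T_xs\left(T_xM\right)$ of the tangent map of the representing section. Via the decomposition \eqref{eq:JetAsConnection} and the earlier Proposition giving $\omega_Q=\Gamma_Q^*\theta_{J^1p}$, the connection form $\widetilde{A}=\widetilde{\Gamma}^*\theta_{J^1\pi_{\left[G\right]}}$ annihilates exactly these horizontal vectors, so that
\[
  \ker\widetilde{A}_{[u,g]}=T_xs\left(T_xM\right).
\]
With this identification, the defining requirement $\widetilde{\Gamma}\left([u,g]\right)\in U_\gamma$ reads
\[
  T_{[u,g]}R_{g^{-1}}\bigl(\ker\widetilde{A}_{[u,g]}\bigr)\cap T_u\gamma\left(T_uP\right)=\{0\}.
\]

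The decisive observation is that the factor $R_{g^{-1}}$ transports the horizontal space from $[u,g]$ back to $[u,g]\cdot g^{-1}=[u,e]=\gamma\left(u\right)$, so the intersection is taken inside $T_{\gamma\left(u\right)}P\left[G\right]$, where $T_u\gamma\left(T_uP\right)$ also lives. Using the equivariance $R_g^*\widetilde{A}=\text{Ad}_{g^{-1}}\circ\widetilde{A}$, the kernel distribution is $G$-invariant, whence $\ker\widetilde{A}_{[u,g]}=T_{\gamma\left(u\right)}R_g\bigl(\ker\widetilde{A}_{\gamma\left(u\right)}\bigr)$; since $R_{g^{-1}}\circ R_g=\id$ one obtains
\[
  T_{[u,g]}R_{g^{-1}}\bigl(\ker\widetilde{A}_{[u,g]}\bigr)=\ker\widetilde{A}_{\gamma\left(u\right)}.
\]
Consequently the $U_\gamma$-condition at $[u,g]$ collapses to $\ker\widetilde{A}_{\gamma\left(u\right)}\cap T_u\gamma\left(T_uP\right)=\{0\}$, which is precisely \eqref{eq:CrossCondition1} read off at the point $\gamma\left(u\right)$. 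As $[u,g]$ sweeps out $P\left[G\right]$ the base point $u$ sweeps out all of $P$, so requiring $\widetilde{\Gamma}\left([u,g]\right)\in U_\gamma$ for every $[u,g]$ is equivalent to imposing \eqref{eq:CrossCondition1} at every point of $\gamma\left(P\right)$, which proves both implications at once. The same equivariance computation, applied to $h\in H$, shows the $U_\gamma$-condition is insensitive to the choice of representative $\left(u,g\right)$ of the point, so the statement is well posed.

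I expect the genuine work to be concentrated in the first step: rigorously identifying $\ker\widetilde{A}_{[u,g]}$ with $T_xs\left(T_xM\right)$ from the contact form \eqref{eq:CanonicalConnection}, and confirming that $R_{g^{-1}}$ is exactly the transport landing in $T_{\gamma\left(u\right)}P\left[G\right]$. Once these identifications are pinned down, the $G$-equivariance manipulation is short and routine.
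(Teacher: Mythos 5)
Your proof is correct and takes essentially the route the paper intends: the paper states this proposition without an explicit proof, presenting it as an immediate consequence of the decomposition \eqref{eq:JetAsConnection}, and your argument---identifying $\ker\widetilde{A}_{\left[u,g\right]}$ with the image $T_xs\left(T_xM\right)$ of the representing section and using the $G$-invariance of the horizontal distribution to transport the $U_\gamma$-condition at $\left[u,g\right]$ back to condition \eqref{eq:CrossCondition1} at $\gamma\left(u\right)$---is exactly the unwinding of that remark, including the check that the condition is independent of the representative $\left(u,g\right)$. Nothing to correct.
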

Now, we have the following result.
\begin{lemma}
  The set $U_\gamma$ is open in $J^1\pi_{\left[G\right]}$.
\end{lemma}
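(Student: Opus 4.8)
The plan is to recognize the defining condition of $U_\gamma$ as a transversality (complementarity) condition on two subspaces of fixed, complementary dimensions, and then to realize this condition as the non-vanishing of a determinant that depends continuously on the jet. First I would fix a jet $j_x^1s$ with $s\left(x\right)=\left[u,g\right]$ and note that, since $R_{g^{-1}}\left(s\left(x\right)\right)=\left[u,e\right]=\gamma\left(u\right)$, both subspaces
\[
  V:=T_{s\left(x\right)}R_{g^{-1}}\left(T_xs\left(T_xM\right)\right),\qquad W:=T_u\gamma\left(T_uP\right)
\]
live in the single vector space $T_{\gamma\left(u\right)}P\left[G\right]$. Because $s$ is a section and $R_{g^{-1}}$ is a diffeomorphism, $\dim V=\dim M=m$; and $\dim W=\dim P=\dim G$ by assumption \ref{item:DimensionCartan}. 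Since $\dim P\left[G\right]=\dim M+\dim G=m+\dim G=\dim V+\dim W$, the condition $V\cap W=\{0\}$ is equivalent to $V\oplus W=T_{\gamma\left(u\right)}P\left[G\right]$, that is, to the invertibility of the sum map $V\oplus W\to T_{\gamma\left(u\right)}P\left[G\right]$.

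The second step is to make this invertibility explicit in a chart. I would choose a local trivialization of $\pi_{\left[G\right]}:P\left[G\right]\to M$ near $x$ that is compatible with the $H$-subbundle $\gamma\left(P\right)$, so that a section is encoded by a map $\sigma$ into $G$ and the jet $j_x^1s$ is encoded by the value $\sigma\left(x\right)$ together with its first derivatives. Taking the representative $u$ with trivialization $\left(x,e\right)$, one finds that $W$ becomes the subspace $T_xM\oplus\hf$, while $T_{s\left(x\right)}R_{g^{-1}}$ carries $T_xs\left(T_xM\right)$ onto the graph $\{(v,\alpha(v)):v\in T_xM\}$ of a linear map $\alpha:T_xM\to\g$ built from the $1$-jet of $\sigma$ (the pullback of the Maurer--Cartan form). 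A direct computation then gives
\[
  V\cap W=\{0\}\iff \pi_{\g/\hf}\circ\alpha:T_xM\to\g/\hf\ \text{is an isomorphism,}
\]
and since $\dim M=\dim\g/\hf=m$ this amounts to the non-vanishing of $\det\left(\pi_{\g/\hf}\circ\alpha\right)$ computed in fixed local frames of $TM$ and $\g/\hf$.

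Finally, since $\alpha$, and hence $\pi_{\g/\hf}\circ\alpha$, depends smoothly on the jet coordinates through the smooth structure maps (right translation, the inclusion $\gamma$, and the linear projection $\pi_{\g/\hf}$), the function $j_x^1s\mapsto\det\left(\pi_{\g/\hf}\circ\alpha\right)$ is continuous on the trivializing chart; thus $U_\gamma$ is locally the preimage of $\mR\setminus\{0\}$ and therefore open, and openness being a local property this finishes the argument. I expect the main difficulty to be bookkeeping rather than conceptual: one must verify that the local identification of $V$ and $W$ is independent of the representative $\left(u,g\right)$ of $s\left(x\right)=\left[u,g\right]_H$---different choices differ by the $H$-action, under which the complementarity condition is invariant because the intertwining map $R_h$ is a diffeomorphism---and that the frames can be chosen to vary smoothly, so that $\det\left(\pi_{\g/\hf}\circ\alpha\right)$ is a genuine continuous function on the chart.
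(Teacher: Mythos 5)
Your proof is correct, and it follows the same overall strategy as the paper's: realize the transversality condition defining $U_\gamma$ as the non-vanishing of a function depending polynomially (hence continuously) on the jet coordinates, so that $U_\gamma$ is locally a preimage of $\mR\setminus\left\{0\right\}$. The difference lies in how that function is produced. The paper never uses the dimension hypothesis: it invokes the general linear-algebra fact that for subspaces $V_1,V_2$ of a finite-dimensional space, $V_1\cap V_2=\left\{0\right\}$ holds if and only if the union of bases of $V_1$ and $V_2$ is linearly independent, which is detected by the non-vanishing of $F_k$, the sum of squares of all $k\times k$ minors of the coordinate matrix ($k=\dim V_1+\dim V_2$). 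You instead use the standing assumption $\dim P=\dim G$ to turn complementarity into invertibility of a single square matrix, namely $\pi_{\g/\hf}\circ\alpha$ with $\alpha$ the pullback of the Maurer--Cartan form in a trivialization adapted to $\gamma\left(P\right)$. Your route buys a conceptually sharper description---it exhibits $U_\gamma$ as precisely the locus where the induced map $T_xM\to\g/\hf$ is an isomorphism, i.e.\ condition (i) in the definition of a Cartan connection---at the price of depending on the assumption $\dim P=\dim G$, which the paper's more elementary minor-rank argument does not need. Your closing remarks on independence of the representative of $\left[u,g\right]_H$ (the condition is $R_h$-equivariant) and on choosing smoothly varying frames address bookkeeping that the paper's proof leaves equally implicit, so there is no gap on either side in that respect.
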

\begin{proof}
  Given $V_1,V_2\subset V$ a pair of subspaces of the (finite-dimensional) vector space $V$, fix basis $B_1,B_2$ and $B$ for each of them. Thus the condition
  \[
    V_1\cap V_2=\left\{0\right\}
  \]
  is equivalent to $F_k\not=0$, where $F_k$ is the sum of the squares of the $k\times k$ minors of the matrix formed by the components of the vectors in $B_1\cup B_2$ respect to the basis $B$, where $k=\dim V_1+\dim V_2$. Consider now
  \begin{align*}
    V&=T_{s\left(x\right)}P\left[G\right]\\
    V_1&=T_{s\left(x\right)}R_{g^{-1}}\left(T_xs\left(T_xM\right)\right)\\
    V_2&=T_u\gamma\left(T_uP\right);
  \end{align*}
  in terms of local coordinates on $J^1\pi_{\left[G\right]}$, $F_k$ gives rise to a polynomial in the jet variables, and the result follows. 
\end{proof}
Therefore, instead of considering the variational problem for a Cartan connection over the entire bundle $J^1\pi_{\left[G\right]}$, we can restrict ourselves to the open set $U_\gamma$, thereby abandoning condition \ref{item:ComplementaryCartan} above. With this consideration in mind, let us generalize the notion of Cartan connection; to this end we will use the notion of \emph{generalized Cartan connection}, as it is defined in \cite{alekseevsky95:_differ_cartan}.

\begin{definition}[Generalized Cartan connection]
  A \emph{generalized (principal) $\g/\hf$-Cartan connection} on an $H$-principal bundle $\pi:P\to M$ is a $\g$-valued $1$-form $A$ on $P$ such that 
  \begin{enumerate}[(i)]
  \item $R_h^*A=\text{Ad}_{h^{-1}}\circ A$ for every $h\in H$, and
  \item $\left.A\right|V\pi=\kappa_H$.
  \end{enumerate}
\end{definition}

Let us consider the following diagram 
\[
  \begin{tikzcd}[ampersand replacement=\&,row sep=1.7cm,column sep=2.5cm]
    J^1\left(\pi\circ\text{pr}_1\right)
    \arrow{r}{\left(\pi\circ\text{pr}_1\right)_{10}}
    \arrow[swap]{d}{j^1p^{P\times G}_H}
    \&
    P\times G
    \arrow{d}{p^{P\times G}_H}
    \arrow{r}{\text{pr}_1}
    \&
    P\arrow{d}{\pi}
    \\
    J^1\pi_{\left[G\right]}
    \arrow{r}{\left(\pi_{\left[G\right]}\right)_{10}}
    \&
    P\left[G\right]
    \arrow{r}{\pi_{\left[G\right]}}
    \&
    M
  \end{tikzcd}      
\]
The bundle $\pi\circ\text{pr}_1:P\times G\to M$ together with the action
\[
  \left(u,g\right)\cdot\left(h,g'\right):=\left(u\cdot h,gg'\right)
\]
is an $H\times G$-principal bundle. Thus we have the following result.

\begin{lemma}\label{lem:ProductConnection}
  If $\theta_{J^1\left(\pi\circ\text{pr}_1\right)}\in\Omega^1\left(J^1\left(\pi\circ\text{pr}_1\right),\hf\times\g\right)$ denotes the canonical connection on $J^1\left(\pi\circ\text{pr}_1\right)$ and $\theta_{J^1\pi_{\left[G\right]}}\in\Omega^1\left(J^1\pi_{\left[G\right]},\g\right)$ is the canonical connection on $J^1\pi_{\left[G\right]}$, then
  \[
    \left(j^1p_H^{P\times G}\right)^*\theta_{J^1\pi_{\left[G\right]}}=\text{pr}_2^{\hf\times\g}\circ\theta_{J^1\left(\pi\circ\text{pr}_1\right)},
  \]
  where
  \[
    \text{pr}_2^{\hf\times\g}:\hf\times\g\to\g
  \]
  stands for the projection onto the second factor.
\end{lemma}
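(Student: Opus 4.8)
The plan is to mirror the strategy of Lemma~\ref{lem:CanonicalConnectionNatural}: I would use the intrinsic description \eqref{eq:CanonicalConnection} of the canonical connection together with the commutativity of the square displayed just above the statement to reduce the asserted identity to a purely fiberwise statement about how the tangent map of $p_H^{P\times G}$ acts on vertical vectors. The point is that both canonical connections involved are, by construction, connection forms for their respective vertical-quotient structures, and hence restrict to the isomorphism $\kappa$ on vertical subspaces; consequently the whole identity should collapse onto the identification of the linear map $\hf\times\g\to\g$ induced by $T p_H^{P\times G}$ on fibers.

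For the reduction, I would evaluate both sides on an arbitrary tangent vector $W\in T_{j_x^1\sigma}J^1\left(\pi\circ\text{pr}_1\right)$, writing $s:=p_H^{P\times G}\circ\sigma$ so that $j^1 p_H^{P\times G}\left(j_x^1\sigma\right)=j_x^1 s$. Applying the defining formula \eqref{eq:CanonicalConnection} to $\theta_{J^1\pi_{\left[G\right]}}$ at $j_x^1 s$ and using the intertwining relations $\left(\pi_{\left[G\right]}\right)_{10}\circ j^1 p_H^{P\times G}=p_H^{P\times G}\circ\left(\pi\circ\text{pr}_1\right)_{10}$ and $\left(\pi_{\left[G\right]}\right)_{1}\circ j^1 p_H^{P\times G}=\left(\pi\circ\text{pr}_1\right)_{1}$, together with $T s=T p_H^{P\times G}\circ T\sigma$, the two inhomogeneous terms should combine to give
\[
  \theta_{J^1\pi_{\left[G\right]}}\left(T j^1 p_H^{P\times G}\left(W\right)\right)=T p_H^{P\times G}\left(\theta_{J^1\left(\pi\circ\text{pr}_1\right)}\left(W\right)\right).
\]
In other words, the left-hand side of the lemma is exactly the pushforward along $T p_H^{P\times G}$ of the value of the product canonical connection, and the latter value lies in the vertical subspace $V\left(\pi\circ\text{pr}_1\right)$.

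It then remains to identify the restriction of $T p_H^{P\times G}$ to vertical vectors as a map $\hf\times\g\to\g$, under the isomorphisms $V\left(\pi\circ\text{pr}_1\right)\simeq\left(P\times G\right)\times\left(\hf\times\g\right)$ and $V\pi_{\left[G\right]}\simeq P\left[G\right]\times\g$ coming from the $H\times G$- and $G$-principal structures. I would do this by testing on fundamental vector fields: the generator of the $G$-factor at $\left(u,g\right)$ is sent by $T p_H^{P\times G}$ to the generator of the $G$-action on $P\left[G\right]$ at $\left[u,g\right]$, contributing the second coordinate under the vertical identification; the remaining task is to compute the image of the generator of the $H$-factor and to show that the combined fiber map reduces to $\text{pr}_2^{\hf\times\g}$. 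This fiber computation is the main obstacle: since $P\left[G\right]$ is the balanced product $P\times_H G$, the projection $p_H^{P\times G}$ is \emph{not} a morphism of principal bundles over a fixed group homomorphism, so the image of the $H$-directions cannot be read off from a structure-group map and must instead be extracted directly from the defining relation $\left[u\cdot h,g\right]=\left[u,hg\right]$ and the $G$-action $\left[u,g\right]\cdot g'=\left[u,gg'\right]$ on $P\left[G\right]$. Getting this fiber computation right — rather than the naturality reduction, which is routine — is where all the care is needed, and it is precisely what pins down the right-hand side.
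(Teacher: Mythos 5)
Your reduction step is exactly the paper's proof: the paper's entire argument is the observation that the canonical connections are the $V$-valued contact forms, hence satisfy
\[
  \left(j^1p_H^{P\times G}\right)^*\theta_{J^1\pi_{\left[G\right]}}=Tp^{P\times G}_H\circ\theta_{J^1\left(\pi\circ\text{pr}_1\right)},
\]
which is precisely the identity you derive from \eqref{eq:CanonicalConnection} and the intertwining relations, followed by the one-line assertion that ``the lemma follows from the identification between the vertical bundles and the Lie algebras.'' So on the naturality half you and the paper coincide.

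The problem is that you stop exactly where the actual content begins. You correctly diagnose that the whole lemma rests on computing the fiber map $\hf\times\g\to\g$ induced by $Tp_H^{P\times G}$ on vertical subspaces, and you correctly warn that $p_H^{P\times G}$ is not a morphism over a group homomorphism --- but you then declare this ``the remaining task'' without doing it, so the statement is not proved. Worse, when one carries it out the answer is not the plain projection: the $G$-generator $\left(0,\zeta\right)$ at $\left(u,g\right)$ does map to $\zeta_{P\left[G\right]}\left(\left[u,g\right]\right)$, but the $H$-generator $\left(\xi,0\right)$ is the velocity of $t\mapsto\left[u\cdot\exp\left(t\xi\right),g\right]=\left[u,\exp\left(t\xi\right)g\right]=\left[u,g\right]\cdot\left(g^{-1}\exp\left(t\xi\right)g\right)$, hence maps to $\left(\text{Ad}_{g^{-1}}\xi\right)_{P\left[G\right]}\left(\left[u,g\right]\right)$, which under the identification $V\pi_{\left[G\right]}\simeq P\left[G\right]\times\g$ is $\text{Ad}_{g^{-1}}\xi\in\g$, not $0$. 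So the combined fiber map is $\left(\xi,\zeta\right)\mapsto\text{Ad}_{g^{-1}}\xi+\zeta$ rather than $\text{pr}_2^{\hf\times\g}$; it reduces to the stated projection only modulo a convention on the vertical identification (or only where the $\hf$-component of the contact form vanishes, e.g.\ along horizontal lifts, which is how the lemma is actually invoked in Proposition \ref{prop:GeneralizedCartanConnection} with $g=e$). You would need to either carry out this computation and reconcile it with the statement, or restrict the claim to the situation in which it is used; as written, the proposal neither proves the identity nor detects that the fiber map is not literally $\text{pr}_2^{\hf\times\g}$ at a general point $\left(u,g\right)$.
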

\begin{proof}
  Recall that the canonical connection on the jet bundle of a principal bundle $\pi:P\to M$ is nothing but the $V\pi$-valued contact form, which can be seen as $\g$-valued through the identification
  \[
    V\pi\simeq P\times\g.
  \]
  It means that they have the property
  \[
    \left(j^1p_H^{P\times G}\right)^*\theta_{J^1\pi_{\left[G\right]}}=Tp^{P\times G}_H\circ\theta_{J^1\left(\pi\circ\text{pr}_1\right)},
  \]
  and the lemma follows from the identification between the vertical bundles and the Lie algebras.
\end{proof}

\subsection{Generalized Cartan connections as equivariant sections of a jet bundle}
\label{sec:cart-conn-as}

Recall that a principal connection on $P$ can be represented by an equivariant map $\Gamma:P\to J^1\pi$. The next result gives an analogous representation for generalized Cartan connections.

\begin{proposition}\label{prop:GeneralizedCartanConnection}
  Any generalized Cartan connection $A:TP\to\g$ gives rise to a $H$-equivariant bundle map
  \[
    \Gamma_A:P\to J^1\pi_{\left[G\right]}
  \]
  covering the immersion $\gamma:P\hookrightarrow P\left[G\right]$. Conversely, for any $H$-equivariant bundle map $\Gamma:P\to J^1\pi_{\left[G\right]}$ making the following diagram
  \[
    \begin{tikzcd}[ampersand replacement=\&,row sep=1.7cm,column sep=1.5cm]
      \&
      J^1\pi_{\left[G\right]}
      \arrow{d}{\left(\pi_{\left[G\right]}\right)_{10}}
      \\
      P
      \arrow{ur}{\Gamma}
      \arrow[hook]{r}{\gamma}
      \&
      P\left[G\right]
    \end{tikzcd}      
  \]
  commutative, there exists a generalized Cartan connection $A_\Gamma:TP\to\g$ such that
  \[
    A_\Gamma=\Gamma^*\theta_{J^1\pi_{\left[G\right]}}.
  \] 
\end{proposition}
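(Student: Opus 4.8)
The plan is to reduce both assertions to the correspondence, already established in Proposition \ref{prop:CartanVsPrincipalConn}, between $\g$-valued forms on $P$ and principal connections on the extended bundle $P\left[G\right]$, combined with the representation of principal connections as equivariant sections of the jet bundle. The only point requiring care at the outset is that the construction producing $\widetilde{A}$ from $A$ in the first half of Proposition \ref{prop:CartanVsPrincipalConn} uses exclusively the equivariance relation $R_h^*A=\Ad_{h^{-1}}\circ A$ and the normalization $A|V\pi=\kappa_H$; it never invokes the isomorphism condition of a Cartan connection. Hence that construction applies verbatim to a \emph{generalized} Cartan connection $A$, yielding a genuine principal connection $\widetilde{A}$ on $P\left[G\right]$ with $\gamma^*\widetilde{A}=A$. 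Throughout, $H$ acts on $P\left[G\right]$ and on $J^1\pi_{\left[G\right]}$ through the inclusion $H\subset G$ and the (lifted) $G$-action, and $\gamma$ is $H$-equivariant for these actions.

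For the forward implication, I would first represent $\widetilde{A}$ by its associated $G$-equivariant section $\widetilde{\Gamma}:P\left[G\right]\to J^1\pi_{\left[G\right]}$, which satisfies $\widetilde{\Gamma}^*\theta_{J^1\pi_{\left[G\right]}}=\widetilde{A}$ by the pullback property of the canonical connection form. Then set $\Gamma_A:=\widetilde{\Gamma}\circ\gamma$. Since $\widetilde{\Gamma}$ is a section of $\left(\pi_{\left[G\right]}\right)_{10}$ we get $\left(\pi_{\left[G\right]}\right)_{10}\circ\Gamma_A=\left(\pi_{\left[G\right]}\right)_{10}\circ\widetilde{\Gamma}\circ\gamma=\gamma$, so $\Gamma_A$ covers $\gamma$; and $H$-equivariance of $\Gamma_A$ follows by restricting the $G$-equivariance of $\widetilde{\Gamma}$ to $H\subset G$ and using the $H$-equivariance of $\gamma$. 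This produces the desired bundle map.

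For the converse, put $A_\Gamma:=\Gamma^*\theta_{J^1\pi_{\left[G\right]}}$ and verify the two defining properties of a generalized Cartan connection. Property (i) is the naturality computation already rehearsed in the proof of Proposition \ref{prop:CartanVsPrincipalConn}: using $\Gamma\circ R_h=R_h\circ\Gamma$ (the $H$-equivariance of $\Gamma$) and the fact that $\theta_{J^1\pi_{\left[G\right]}}$, being a $G$-connection form, obeys $R_h^*\theta_{J^1\pi_{\left[G\right]}}=\Ad_{h^{-1}}\circ\theta_{J^1\pi_{\left[G\right]}}$ for $h\in H\subset G$, one reads off $R_h^*A_\Gamma=\Ad_{h^{-1}}\circ A_\Gamma$. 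For property (ii), I would differentiate the equivariance relation $\Gamma\left(u\cdot\exp\left(t\xi\right)\right)=\Gamma\left(u\right)\cdot\exp\left(t\xi\right)$ at $t=0$ for $\xi\in\hf$ to obtain the intertwining of infinitesimal generators $T_u\Gamma\left(\xi_P\left(u\right)\right)=\xi_{J^1\pi_{\left[G\right]}}\left(\Gamma\left(u\right)\right)$; since $\theta_{J^1\pi_{\left[G\right]}}$ reproduces fundamental vector fields, $A_\Gamma\left(\xi_P\left(u\right)\right)=\theta_{J^1\pi_{\left[G\right]}}\left(\xi_{J^1\pi_{\left[G\right]}}\left(\Gamma\left(u\right)\right)\right)=\xi$, that is, $A_\Gamma|V\pi=\kappa_H$. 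Finally, the two constructions are mutually inverse: $A_{\Gamma_A}=\Gamma_A^*\theta_{J^1\pi_{\left[G\right]}}=\gamma^*\widetilde{\Gamma}^*\theta_{J^1\pi_{\left[G\right]}}=\gamma^*\widetilde{A}=A$.

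The main obstacle, such as it is, is bookkeeping rather than conceptual: one must fix, once and for all, the $H$-action on $J^1\pi_{\left[G\right]}$ as the restriction along $H\subset G$ of the lifted $G$-action, and check that $\gamma$, the section $\widetilde{\Gamma}$, and the canonical form $\theta_{J^1\pi_{\left[G\right]}}$ are all compatible with it, so that the equivariance manipulations in (i) and the generator-intertwining in (ii) are legitimate. The only genuinely substantive verification is that the forward construction of $\widetilde{A}$ survives the weakening from Cartan to generalized Cartan connections, which, as noted, it does because the isomorphism hypothesis plays no role in that half of Proposition \ref{prop:CartanVsPrincipalConn}.
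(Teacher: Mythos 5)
Your proof is correct, and it reaches the conclusion by a noticeably different route than the paper. The paper constructs $\Gamma_A$ directly by the explicit formula $\Gamma_A\left(u\right)\left(v\right)=T_{\left(u,e\right)}p_H^{P\times G}\left(\widehat{v}_u,-A_u\left(\widehat{v}_u\right)\right)$, checks well-definedness on the quotient, and then proves the pullback identity $\Gamma_A^*\theta_{J^1\pi_{\left[G\right]}}=A$ by passing through the product bundle $P\times G$: it fixes an auxiliary principal connection $\omega_0$ on $P$, builds a map $\widehat{\Gamma}_A^{\omega_0}:P\times G\to J^1\left(\pi\circ\text{pr}_1\right)$ satisfying $\left(\widehat{\Gamma}_A^{\omega_0}\circ\text{inc}\right)^*\theta_{J^1\left(\pi\circ\text{pr}_1\right)}=\omega_0+A$, and invokes Lemma \ref{lem:ProductConnection} to discard the $\hf$-component. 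You instead route everything through Proposition \ref{prop:CartanVsPrincipalConn} and the jet-bundle representation of principal connections, $A\mapsto\widetilde{A}\mapsto\widetilde{\Gamma}\mapsto\Gamma_A:=\widetilde{\Gamma}\circ\gamma$, with the pullback identity following from the chain $\Gamma_A^*\theta_{J^1\pi_{\left[G\right]}}=\gamma^*\widetilde{\Gamma}^*\theta_{J^1\pi_{\left[G\right]}}=\gamma^*\widetilde{A}=A$. This avoids the auxiliary connection and Lemma \ref{lem:ProductConnection} altogether, at the cost of having to observe --- correctly, as you do --- that the forward half of Proposition \ref{prop:CartanVsPrincipalConn} never uses the isomorphism condition and therefore survives the weakening from Cartan to generalized Cartan connections. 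You also spell out the converse half, namely that $\Gamma^*\theta_{J^1\pi_{\left[G\right]}}$ satisfies the two axioms of a generalized Cartan connection, which the paper's proof leaves implicit; your verification via the $H$-equivariance of $\Gamma$, the $\Ad$-equivariance of the canonical connection form on the $G$-bundle $J^1\pi_{\left[G\right]}\to C\left(P\left[G\right]\right)$, and the intertwining of infinitesimal generators is the standard one and is sound, provided (as you note) the $H$-action on $J^1\pi_{\left[G\right]}$ is taken to be the restriction of the lifted $G$-action.
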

\begin{proof}
  The map $\Gamma_A:P\to J^1\pi_{\left[G\right]}$ is given by
  \[
    \Gamma_A\left(u\right):v\in T_xM\mapsto T_{\left(u,e\right)}p_H^{P\times G}\left(\widehat{v}_u,-A_u\left(\widehat{v}_u\right)\right),
  \]
  where $\widehat{v}_u\in T_uP$ is any lifting of $v\in T_xM$ to $T_uP$, and
  \[
    p_H^{P\times G}:P\times G\to P\left[G\right]
  \]
  stands for the canonical projection onto the quotient; it is well-defined because of the identity
  \[
    T_{\left(u,g\right)}p_H^{P\times G}\left(-\zeta_P\left(u\right),T_eR_g\left(\zeta\right)\right)=0
  \]
  for every $\left(u,g\right)\in P\times G$ and $\zeta\in\hf$.

  Let us now prove that
  \[
    \Gamma_A^*\theta_{J^1\pi_{\left[G\right]}}=A
  \]
  where $\theta_{J^1\pi_{\left[G\right]}}\in\Omega^1\left(J^1\pi_{\left[G\right]},\g\right)$ is the canonical connection form on $J^1\pi_{\left[G\right]}$. In order to accomplish it, let us define the section $\widetilde{\Gamma}:P\left[G\right]\to J^1\pi_{\left[G\right]}$ for $\left(\pi_{\left[G\right]}\right)_{10}$ such that
  \[
    \widetilde{\Gamma}\left(\left[u,g\right]\right):v\in T_xM\mapsto T_{\gamma\left(u\right)}R_g\left(\Gamma_A\left(u\right)\left(v\right)\right)\in T_{\left[u,g\right]}P\left[G\right].
  \]
  Accordingly, we have the following commutative diagram
    \[
    \begin{tikzcd}[ampersand replacement=\&,row sep=1.7cm,column sep=1.5cm]
      \&
      J^1\pi_{\left[G\right]}
      \\
      P
      \arrow{ur}{\Gamma_A}
      \arrow[hook]{r}{\gamma}
      \&
      P\left[G\right]
      \arrow[swap]{u}{\widetilde{\Gamma}}
    \end{tikzcd}      
  \]
  Moreover, by fixing an auxiliary connection $\omega_0\in\Omega^1\left(P,\hf\right)$ for $P$, we can define a map $\widehat{\Gamma}_A^{\omega_0}:P\times G\to J^1\left(\pi\circ\text{pr}_1\right)$ such that
  \[
    \widehat{\Gamma}_A^{\omega_0}\left(u,g\right)\left(v\right):=\left(v_u^H,-T_eR_{g^{-1}}\left(A_u\left(v^H_u\right)\right)\right)
  \]
  for every $\left(u,g\right)\in P\times G$ and $v\in T_xM$; here $v_u^H$ indicates the horizontal lifting of $v$ to $T_uP$ by means of the connection $\omega_0$. These maps fit in the following diagram
  \[
    \begin{tikzcd}[ampersand replacement=\&,row sep=1.7cm,column sep=2.5cm]
      J^1\left(\pi\circ\text{pr}_1\right)
      \arrow[swap]{d}{j^1p^{P\times G}_H}
      \&
      P\times G
      \arrow[swap]{l}{\widehat{\Gamma}^{\omega_0}_A}
      \arrow[swap]{d}{p^{P\times G}_H}
      \&
      P
      \arrow[hook',swap]{l}{\text{inc}}
      \arrow[hook]{ld}{\gamma}
      \\
      J^1\pi_{\left[G\right]}
      \&
      P\left[G\right]
      \arrow[swap]{l}{\widetilde{\Gamma}}
      \&
    \end{tikzcd}
  \]
  Namely, Cartan connection $\Gamma_A$ can be retrieved through the formula
  \[
    \Gamma_A=\widetilde{\Gamma}\circ\gamma=j^1p_H^{P\times G}\circ\widehat{\Gamma}^{\omega_0}_A\circ\text{inc};
  \]
  therefore, because
  \[
    \left(\widehat{\Gamma}^{\omega_0}_A\circ\text{inc}\right)^*\theta_{J^1\left(\pi\circ\text{pr}_1\right)}=\omega_0+A,
  \]
  Lemma \ref{lem:ProductConnection} implies that
  \[
    \Gamma_A^*\theta_{J^1\pi_{\left[G\right]}}=A
  \]
  as required.
\end{proof}

We can rephrase Proposition \ref{prop:CartanVsPrincipalConn} using this correspondence.
\begin{corollary}
  For any map $\Gamma:P\to J^1\pi_{\left[G\right]}$ covering $\gamma$ we have a section $\widetilde{\Gamma}:P\left[G\right]\to J^1\pi_{\left[G\right]}$ for $\left(\pi_{\left[G\right]}\right)_{10}$, which is defined through
  \[
    \widetilde{\Gamma}\left(\left[u,g\right]\right):v\in T_xM\mapsto T_{\gamma\left(u\right)}R_g\left(\Gamma\left(u\right)\left(v\right)\right)\in T_{\left[u,g\right]}P\left[G\right].
  \]
  Conversely, any connection $\widetilde{\Gamma}:P\left[G\right]\to J^1\pi_{\left[G\right]}$ gives rise to a map $\Gamma:P\to J^1\pi_{\left[G\right]}$ covering $\gamma$ by restriction to $\gamma\left(P\right)\subset P\left[G\right]$.
\end{corollary}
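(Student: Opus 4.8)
The plan is to read the statement as the assertion that the assignment $\Gamma\mapsto\widetilde{\Gamma}$ given by the displayed formula, together with its putative inverse $\widetilde{\Gamma}\mapsto\widetilde{\Gamma}\circ\gamma$, sets up a bijection between the ($H$-equivariant) bundle maps $\Gamma:P\to J^1\pi_{\left[G\right]}$ covering $\gamma$ and the principal connections, i.e.\ the $G$-equivariant sections of $\left(\pi_{\left[G\right]}\right)_{10}$. I would therefore organize the argument into four checks: well-definedness of $\widetilde{\Gamma}$, the fact that it is a section, its $G$-equivariance, and finally that the two passages are mutually inverse. Since the very formula defining $\widetilde{\Gamma}$ already appears in the proof of Proposition~\ref{prop:GeneralizedCartanConnection} (applied to the generalized Cartan connection $A_\Gamma=\Gamma^*\theta_{J^1\pi_{\left[G\right]}}$ attached to $\Gamma$), I would lean on that construction and concentrate on the points not settled there, namely the equivariance and the inverse property.

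The crux is well-definedness, so I would treat it first. Writing $R^P$ and $R^{P\left[G\right]}$ for the right translations on $P$ and on $P\left[G\right]$, the inclusion $\gamma$ satisfies $\gamma\circ R^P_h=R^{P\left[G\right]}_h\circ\gamma$ for $h\in H$, because $\left[u,e\right]\cdot h=\left[u,h\right]=\left[uh,e\right]$. The lifted $G$-action on jets sends the $1$-jet $v\mapsto\Gamma(u)(v)$ at $\gamma(u)$ to $v\mapsto T_{\gamma(u)}R_g(\Gamma(u)(v))$ at $\gamma(u)\cdot g$, so the $H$-equivariance of $\Gamma$ reads $\Gamma(uh)(v)=T_{\gamma(u)}R_h(\Gamma(u)(v))$. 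Comparing the two representatives $\left[u,g\right]=\left[uh,h^{-1}g\right]$ and using the composition rule $R_{g_1g_2}=R_{g_2}\circ R_{g_1}$ for right translations, I would compute
\[
  T_{\gamma(uh)}R_{h^{-1}g}\left(\Gamma(uh)(v)\right)=T_{\gamma(u)}\left(R_{h^{-1}g}\circ R_h\right)\left(\Gamma(u)(v)\right)=T_{\gamma(u)}R_g\left(\Gamma(u)(v)\right),
\]
which is precisely the value coming from the representative $\left(u,g\right)$; hence $\widetilde{\Gamma}$ is independent of the chosen representative. I note that the $H$-equivariance of $\Gamma$ is indispensable here, and is implicit in the phrase "$\Gamma$ covering $\gamma$" in the sense of Proposition~\ref{prop:GeneralizedCartanConnection}.

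With well-definedness in hand I would dispatch the remaining properties by short formal computations. That $\widetilde{\Gamma}$ is a section of $\left(\pi_{\left[G\right]}\right)_{10}$ amounts to two things: its value at $\left[u,g\right]$ is based at $\left[u,g\right]$ (immediate from the target of $T_{\gamma(u)}R_g$), and it is a genuine $1$-jet, i.e.\ $T\pi_{\left[G\right]}\circ\widetilde{\Gamma}\left(\left[u,g\right]\right)=\mathrm{id}_{T_xM}$; the latter holds because $R_g$ covers the identity on $M$, so $T\pi_{\left[G\right]}\circ TR_g=T\pi_{\left[G\right]}$, reducing the claim to the splitting property of $\Gamma(u)$. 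The principal-connection property is the $G$-equivariance $\widetilde{\Gamma}\left(\left[u,g\right]\cdot g'\right)=\widetilde{\Gamma}\left(\left[u,g\right]\right)\cdot g'$, which is again just $T_{\gamma(u)}R_{gg'}=T_{\gamma(u)}\left(R_{g'}\circ R_g\right)$ applied to $\Gamma(u)(v)$. For the converse, restricting a connection $\widetilde{\Gamma}$ to $\gamma(P)$ produces $\Gamma:=\widetilde{\Gamma}\circ\gamma$ with $\left(\pi_{\left[G\right]}\right)_{10}\circ\Gamma=\gamma$, and its $H$-equivariance follows from the $G$-equivariance of $\widetilde{\Gamma}$ combined with $\gamma\circ R^P_h=R^{P\left[G\right]}_h\circ\gamma$.

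Finally I would verify that the two passages are inverse: extending $\Gamma$ and then restricting recovers $\Gamma$, since $R_e=\mathrm{id}$ gives $\widetilde{\Gamma}(\gamma(u))(v)=\Gamma(u)(v)$; and restricting $\widetilde{\Gamma}$ and then extending recovers $\widetilde{\Gamma}$ by its $G$-equivariance. I expect the only genuine obstacle to be bookkeeping: pinning down the correct conventions for the lifted $G$-action on $J^1\pi_{\left[G\right]}$, for the composition order of right translations, and for the identification $\left[u,e\right]\cdot h=\left[uh,e\right]$, since a single order or sign slip there would break the well-definedness computation, which is the one step where the $H$-equivariance of $\Gamma$ is truly used.
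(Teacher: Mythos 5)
Your proposal is correct and follows the route the paper intends: the corollary is stated without proof as a rephrasing of Proposition \ref{prop:CartanVsPrincipalConn} via the correspondence of Proposition \ref{prop:GeneralizedCartanConnection}, whose proof already contains the exact formula for $\widetilde{\Gamma}$, and your four checks (well-definedness, section property, $G$-equivariance, mutual inverse) are precisely the routine verifications the paper leaves implicit. Your observation that the $H$-equivariance of $\Gamma$ is indispensable for well-definedness on representatives $\left[u,g\right]=\left[uh,h^{-1}g\right]$ — and is only implicit in the corollary's phrase ``covering $\gamma$'' — is accurate and worth noting.
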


Recall also the usual extension of principal connections in this context.

\begin{proposition}
  Let $H\subset G$ be a pair of Lie groups. Consider a $G$-principal bundle $\tau:Q\to M$ and let $\pi:P\to M$ be a $H$-principal subbundle; let $i:P\hookrightarrow Q$ be the canonical immersion. For every connection
  \[
    \Gamma:P\to J^1\pi
  \]
  the extension $\widehat{\Gamma}:Q\to J^1\tau$ of $\Gamma$ is the map
  \[
    \widehat{\Gamma}\left(\widehat{u}\right):=j^1i\left(j^1R_g\left(\Gamma\left(u\right)\right)\right)
  \]
  if and only if $\widehat{u}=i\left(u\right)g$.
\end{proposition}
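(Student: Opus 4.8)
The plan is to read the stated formula not as an assertion to be verified against some independently given $\widehat{\Gamma}$, but as a pointwise \emph{definition}: for $\widehat{u}\in Q$ one writes $\widehat{u}=i(u)g$ and declares $\widehat{\Gamma}(\widehat{u})$ to be the right-hand side. The real content then splits into four checks: that every $\widehat{u}\in Q$ admits such a representation; that the right-hand side is independent of the chosen pair $(u,g)$; that the resulting $\widehat{\Gamma}$ is a section of $\tau_{10}:J^1\tau\to Q$ which is $G$-equivariant (hence a genuine principal connection); and that it recovers $\Gamma$ upon restriction to $i(P)$. Existence of the representation is immediate: since $P$ is an $H$-principal subbundle of $Q$, Lemma~\ref{lem:BundleIsomorphism} gives $Q\simeq P\left[G\right]$, so each fibre $Q_x$ is the $G$-orbit of $i\left(P_x\right)$ and thus $i(P)\cdot G=Q$.

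The heart of the argument is well-definedness, so I would first pin down the ambiguity of the representation. If $i(u)g=i(u')g'$, projecting to $M$ forces $u,u'$ into the same fibre $P_x$, which is a single $H$-orbit; hence $u'=u\cdot h$ for a unique $h\in H$, and by $H$-equivariance of $i$ together with freeness of the $G$-action on $Q$ one gets $g'=h^{-1}g$. It therefore suffices to show the formula is invariant under $(u,g)\mapsto(u h,h^{-1}g)$. Reading $j^1i\circ j^1R_{\bullet}$ as the prolongation of $R_{\bullet}\circ i$ (see below), I would combine three ingredients: the $H$-equivariance of $\Gamma$ as a section, $\Gamma(u h)=j^1R_h\bigl(\Gamma(u)\bigr)$; the $H$-equivariance of the immersion, $i\circ R_h=R_h\circ i$; and the group law $R_{h^{-1}g}\circ R_h=R_g$ for the right $G$-action on $Q$. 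Together these collapse $j^1\!\left(R_{h^{-1}g}\circ i\right)\!\bigl(\Gamma(u h)\bigr)$ to $j^1\!\left(R_g\circ i\right)\!\bigl(\Gamma(u)\bigr)$, which is exactly the value computed from $(u,g)$.

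With well-definedness secured, the remaining properties follow formally. To see that $\widehat{\Gamma}$ is a section of $\tau_{10}$ I would represent $\Gamma(u)=j^1_x s$ for a local section $s$ of $\pi$ with $s(x)=u$, so that $j^1i\bigl(j^1R_g(\Gamma(u))\bigr)=j^1_x\!\left(R_g\circ i\circ s\right)$, whose base point is $R_g\bigl(i(u)\bigr)=i(u)g=\widehat{u}$. Equivariance is equally direct: writing $\widehat{u}\cdot a=i(u)(ga)$ and using $R_{ga}=R_a\circ R_g$ yields $\widehat{\Gamma}(\widehat{u}\cdot a)=j^1R_a\bigl(\widehat{\Gamma}(\widehat{u})\bigr)$, so $\widehat{\Gamma}$ is $G$-equivariant and therefore a principal connection on $Q$. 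Finally, specialising to $g=e$ gives $\widehat{\Gamma}\circ i=j^1i\circ\Gamma$, which is precisely the statement that $\widehat{\Gamma}$ extends $\Gamma$.

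I expect the one genuinely delicate point to be the interpretation of the composite $j^1i\circ j^1R_g$ appearing in the formula. Since $R_g$ for $g\notin H$ does not preserve $P$, the symbol $j^1R_g$ cannot be applied to elements of $J^1\pi$ on the nose; the expression must be understood as the single prolongation $j^1\!\left(R_g\circ i\right)$ of the map $R_g\circ i:P\to Q$. I would make this reading explicit at the outset and confirm that, under it, the jet-level manipulations used for well-definedness and equivariance are consistent with the prolongation functoriality $j^1(f\circ k)=j^1f\circ j^1k$. Once this interpretive point is fixed, every remaining step is a routine bookkeeping of equivariance and the group law.
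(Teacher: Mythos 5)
The paper offers no proof of this proposition at all: it is introduced with ``Recall also the usual extension of principal connections in this context'' and is the classical result (Proposition~6.1 of \cite{KN1}, as the paper itself notes in Section~\ref{sec:extens-princ-conn}), so there is no argument of the author's to compare yours against. Your proof is correct and complete. The decomposition into existence of the representation $\widehat{u}=i\left(u\right)g$, reduction of its ambiguity to the substitution $\left(u,g\right)\mapsto\left(uh,h^{-1}g\right)$, the invariance computation combining $\Gamma\left(uh\right)=j^1R_h\left(\Gamma\left(u\right)\right)$, $i\circ R_h=R_h\circ i$ and $R_{h^{-1}g}\circ R_h=R_g$, and the final section/equivariance/restriction checks is exactly what a full proof requires, and each step is carried out correctly. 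The interpretive point you isolate is not pedantry but genuinely necessary: as literally written, $j^1i\circ j^1R_g$ is ill-typed for $g\notin H$, since $R_g$ does not preserve $P$ and so $j^1R_g$ cannot act on $J^1\pi$; your reading as the single prolongation $j^1\left(R_g\circ i\right)$ of $R_g\circ i:P\to Q$ (equivalently $j^1R_g\circ j^1i$, with $j^1i:J^1\pi\to J^1\tau$ the jet of the inclusion and $j^1R_g$ the prolongation of the $G$-action on $Q$) is the unique one under which the value lands in the fibre of $J^1\tau$ over $\widehat{u}=i\left(u\right)g$, and it is the one consistent with the classical statement. The paper silently glosses over this, so your proof supplies content the source omits.
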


\subsection{The bundle of (generalized) Cartan connections}
\label{sec:bundle-cart-conn}

We know \cite{MR0315624,springerlink:10.1007/PL00004852} that connections on a $H$-principal bundle $\pi:P\to M$ can be seen either as equivariant sections of the map $\pi_{10}:J^1\pi\to P$ or as sections of the quotient bundle
\[
  \overline{\pi}:J^1\pi/H\to M.
\]
Both descriptions are related by the following result.
\begin{proposition}
  Let $\pi:P\to M$ be an $H$-principal bundle and suppose that we have a bundle $p:Q\to P$ together with an $H$-action on $Q$ such that $p$ is an $H$-equivariant map; consider the induced bundle $\overline{p}:Q/H\to M$. Then, any section $\sigma:M\to Q/H$ of $\overline{p}$ can be lifted to a section $\widehat{\sigma}:P\to Q$ of $p$ and viceversa, any section of $p$ induces a section of $\overline{p}$ by quotient.
\end{proposition}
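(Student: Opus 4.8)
The plan is to show that the commutative square with vertices $Q$, $Q/H$, $P$, $M$ (edges $p_H^Q$, $\overline{p}$, $p$, $\pi$) is a pullback square, so that $Q\simeq P\times_M\left(Q/H\right)$; the asserted correspondence is then nothing but the familiar correspondence for sections of a pullback bundle, and the relevant sections of $p$ are the $H$-equivariant ones.

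First I would record that the $H$-action on $Q$ is free: if $q\cdot h=q$, then applying the $H$-equivariant map $p$ gives $p\left(q\right)\cdot h=p\left(q\right)$, and freeness of the principal $H$-action on $P$ forces $h=e$. With the smooth structure on $Q/H$ being posited by hypothesis, I would then exhibit the map
\[
  \Phi\colon Q\to P\times_M\left(Q/H\right),\qquad\Phi\left(q\right):=\left(p\left(q\right),p_H^Q\left(q\right)\right),
\]
which is well defined because $\pi\left(p\left(q\right)\right)=\overline{p}\left(p_H^Q\left(q\right)\right)$. Injectivity follows from freeness on $P$ (if $p\left(q_1\right)=p\left(q_2\right)$ and $q_2=q_1\cdot h$, then $h=e$), and surjectivity by choosing any preimage of the $Q/H$-component under $p_H^Q$ and correcting it by the unique $h\in H$ supplied by the principal structure on $P$. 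Thus $\Phi$ is a bundle isomorphism over $P$.

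For the forward direction, given a section $\sigma\colon M\to Q/H$ of $\overline{p}$ I would define
\[
  \widehat{\sigma}:=\Phi^{-1}\circ\left(\id_P,\sigma\circ\pi\right)\colon P\to Q,
\]
so that $\widehat{\sigma}\left(u\right)$ is the unique point of $Q$ with $p\left(\widehat{\sigma}\left(u\right)\right)=u$ and $p_H^Q\left(\widehat{\sigma}\left(u\right)\right)=\sigma\left(\pi\left(u\right)\right)$. By construction $p\circ\widehat{\sigma}=\id_P$, so $\widehat{\sigma}$ is a section of $p$; and it is $H$-equivariant because $p\left(\widehat{\sigma}\left(u\right)\cdot h\right)=u\cdot h$ while $p_H^Q\left(\widehat{\sigma}\left(u\right)\cdot h\right)=p_H^Q\left(\widehat{\sigma}\left(u\right)\right)=\sigma\left(\pi\left(u\cdot h\right)\right)$, whence uniqueness gives $\widehat{\sigma}\left(u\cdot h\right)=\widehat{\sigma}\left(u\right)\cdot h$. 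Conversely, from an $H$-equivariant section $\widehat{\sigma}$ of $p$ the composite $p_H^Q\circ\widehat{\sigma}\colon P\to Q/H$ is $H$-invariant (by equivariance) and satisfies $\overline{p}\circ p_H^Q\circ\widehat{\sigma}=\pi$, hence descends to a section $\sigma\colon M\to Q/H$ characterized by $\sigma\left(\pi\left(u\right)\right)=p_H^Q\left(\widehat{\sigma}\left(u\right)\right)$. A direct check shows these two assignments are mutually inverse.

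The one genuinely delicate point is the pullback identification in the first step, which rests on freeness of the $H$-action on $Q$ — itself a consequence of equivariance of $p$ over the \emph{principal} bundle $\pi$ — together with the existence of $Q/H$ as a bundle. I would also emphasize that equivariance of the lift is not incidental: the quotient construction in the converse direction is well defined precisely because $p_H^Q\circ\widehat{\sigma}$ is constant along $H$-orbits, so the stated bijection is genuinely one between sections of $\overline{p}$ and $H$-\emph{equivariant} sections of $p$, in exact parallel with the description of connections as equivariant sections of $\pi_{10}$ versus sections of $\overline{\pi}$.
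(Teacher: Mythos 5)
Your proof is correct and rests on the same mechanism as the paper's: freeness of the $H$-action on $P$ singles out, for each orbit $\left[q\right]$ and each $u$ in the fiber over $\overline{p}\left(\left[q\right]\right)$, a unique $\widetilde{q}\in\left[q\right]$ with $p\left(\widetilde{q}\right)=u$, which is precisely your bijection $\Phi$. The paper simply states this forward construction and stops; your pullback packaging $Q\simeq P\times_M\left(Q/H\right)$ and the explicit observation that the converse direction requires $H$-equivariance of the section of $p$ are accurate refinements of details the paper leaves implicit.
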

\begin{proof}
  Because the $H$-action on $P$ is free, given $\left[q\right]\in Q/H$ and $u\in\pi^{-1}\left(\overline{p}\left(\left[q\right]\right)\right)$, there exists a unique element $\widetilde{q}\in\left[q\right]$ such that
  \[
    p\left(\widetilde{q}\right)=u.
  \]
  Thus, for every $x\in M$, define
  \[
    \widehat{\sigma}\left(u\right):=\widetilde{q}
  \]
  if and only if $\sigma\left(x\right)=\left[q\right]$ and $\widetilde{q}\in\left[q\right]$ is such that
  \[
    p\left(\widetilde{q}\right)=u.\qedhere
  \]
\end{proof}

In the case of Cartan connections, the description in terms of equivariant maps is provided by the following corollary of Proposition \ref{prop:GeneralizedCartanConnection}.
\begin{corollary}\label{cor:GeneralizedCartan}
  A generalized Cartan connection can be represented as an $H$-equivariant section of the pullback bundle
  \[
    \begin{tikzcd}[ampersand replacement=\&,row sep=1.7cm,column sep=1.5cm]
      \gamma^*\left(J^1\pi_{\left[G\right]}\right)
      \arrow{r}{\text{pr}_2^\gamma}
      \arrow{d}{\text{pr}_1^\gamma}
      \&
      J^1\pi_{\left[G\right]}
      \arrow{d}{\left(\pi_{\left[G\right]}\right)_{10}}
      \\
      P
      \arrow[hook]{r}{\gamma}
      \&
      P\left[G\right]
    \end{tikzcd}      
  \]
\end{corollary}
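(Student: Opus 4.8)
The plan is to deduce this directly from Proposition \ref{prop:GeneralizedCartanConnection} by recognizing the data ``$H$-equivariant map $\Gamma\colon P\to J^1\pi_{\left[G\right]}$ covering $\gamma$'' as precisely an $H$-equivariant section of the pullback bundle $\text{pr}_1^\gamma\colon\gamma^*\left(J^1\pi_{\left[G\right]}\right)\to P$. First I would recall the concrete model of the pullback,
\[
  \gamma^*\left(J^1\pi_{\left[G\right]}\right)=\left\{\left(u,\xi\right)\in P\times J^1\pi_{\left[G\right]}:\gamma\left(u\right)=\left(\pi_{\left[G\right]}\right)_{10}\left(\xi\right)\right\},
\]
with $\text{pr}_1^\gamma$ and $\text{pr}_2^\gamma$ the two restricted projections. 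By the universal property of the pullback, a section $s\colon P\to\gamma^*\left(J^1\pi_{\left[G\right]}\right)$ of $\text{pr}_1^\gamma$ is exactly the same datum as a map $\Gamma:=\text{pr}_2^\gamma\circ s\colon P\to J^1\pi_{\left[G\right]}$ satisfying $\left(\pi_{\left[G\right]}\right)_{10}\circ\Gamma=\gamma$; explicitly $s\left(u\right)=\left(u,\Gamma\left(u\right)\right)$, so sections of $\text{pr}_1^\gamma$ and maps $\Gamma$ covering $\gamma$ are in bijection.

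Next I would equip the pullback with its natural $H$-action $\left(u,\xi\right)\cdot h:=\left(u\cdot h,\xi\cdot h\right)$, the action on the second factor being the restriction to $H\subset G$ of the lifted $G$-action on $J^1\pi_{\left[G\right]}$. The point requiring a small verification is that this action is well-defined, i.e. that it preserves the pullback relation: if $\gamma\left(u\right)=\left(\pi_{\left[G\right]}\right)_{10}\left(\xi\right)$ then, using that $\gamma$ is $H$-equivariant (one checks $\gamma\left(u\cdot h\right)=\left[u\cdot h,e\right]_H=\left[u,h\right]_H=\gamma\left(u\right)\cdot h$) together with the $G$-equivariance of $\left(\pi_{\left[G\right]}\right)_{10}$, one obtains $\gamma\left(u\cdot h\right)=\gamma\left(u\right)\cdot h=\left(\pi_{\left[G\right]}\right)_{10}\left(\xi\right)\cdot h=\left(\pi_{\left[G\right]}\right)_{10}\left(\xi\cdot h\right)$, so that $\left(u\cdot h,\xi\cdot h\right)$ again lies in the pullback. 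With this action in hand, the section $s\left(u\right)=\left(u,\Gamma\left(u\right)\right)$ satisfies $s\left(u\cdot h\right)=s\left(u\right)\cdot h$ if and only if $\Gamma\left(u\cdot h\right)=\Gamma\left(u\right)\cdot h$; hence $s$ is $H$-equivariant precisely when $\Gamma$ is.

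Combining these two observations yields a bijection between $H$-equivariant sections of $\text{pr}_1^\gamma$ and $H$-equivariant maps $\Gamma\colon P\to J^1\pi_{\left[G\right]}$ covering $\gamma$. Finally I would invoke Proposition \ref{prop:GeneralizedCartanConnection}, which identifies the latter maps with generalized Cartan connections $A\colon TP\to\g$ via $A_\Gamma=\Gamma^*\theta_{J^1\pi_{\left[G\right]}}$, thereby giving the claimed representation. The argument is essentially formal, being a repackaging of Proposition \ref{prop:GeneralizedCartanConnection} through the pullback's universal property; the only genuine (and minor) step to get right is the compatibility of the $H$-action with the pullback relation, which is where the equivariance of both $\gamma$ and $\left(\pi_{\left[G\right]}\right)_{10}$ is used.
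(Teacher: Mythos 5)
Your argument is correct and is exactly the intended deduction: the paper states this as an immediate corollary of Proposition \ref{prop:GeneralizedCartanConnection} without a written proof, and your unpacking via the universal property of the pullback, together with the check that the diagonal $H$-action preserves the pullback relation (using the $H$-equivariance of $\gamma$ and the $G$-equivariance of $\left(\pi_{\left[G\right]}\right)_{10}$), supplies precisely the omitted details. No gaps.
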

Thus, for $\g/\hf$-Cartan connections, we have the following commutative diagram of bundles
\[
  \begin{tikzcd}[ampersand replacement=\&,row sep=1.7cm,column sep=1.5cm]
    \gamma^*\left(J^1\pi_{\left[G\right]}\right)
    \arrow{r}{\text{pr}_1^\gamma}
    \arrow[swap]{d}{p_H^{\gamma^*\left(J^1\pi_{\left[G\right]}\right)}}
    \&
    P
    \arrow{d}{\pi}
    \\
    \gamma^*\left(J^1\pi_{\left[G\right]}\right)/H
    \arrow{r}{\overline{\text{pr}}_1^\gamma}
    \&
    M
  \end{tikzcd}      
\]
where on $\gamma^*\left(J^1\pi_{\left[G\right]}\right)$ the $H$-diagonal action is considered. By proceeding in analogy with the principal connections case, we obtain the following definition.

\begin{definition}[Bundle of (generalized) $\g/\hf$-Cartan connections]
  The \emph{bundle of (generalized) $\g/\hf$-Cartan connections} is the bundle
  \[
    \overline{\text{pr}}_1^\gamma:\gamma^*\left(J^1\pi_{\left[G\right]}\right)/H\to M.
  \]
\end{definition}

\begin{remark}\label{rem:CaartanConnectionAndCanonicalConnection}
  The Cartan connection form is thus recovered from a section
  \[
    \sigma:M\to\gamma^*\left(J^1\pi_{\left[G\right]}\right)/H
  \]
  through the following procedure: we construct the unique $H$-equivariant section $\widehat{\sigma}:P\to\gamma^*\left(J^1\pi_{\left[G\right]}\right)$, which can be seen as a map $\Gamma_\sigma:P\to J^1\pi_{\left[G\right]}$. The Cartan connection form is then the pullback form
  \[
    A_\sigma:=\Gamma_\sigma^*\theta_{J^1\pi_{\left[G\right]}}\in\Omega^1\left(P,\g\right).
  \]
  Conversely, given a Cartan connection form $A$, we use Proposition \ref{prop:GeneralizedCartanConnection} in order to construct an $H$-equivariant map $\Gamma_A:P\to J^1\pi_{\left[G\right]}$, and so a section
  \[
    \sigma_A:M\to\gamma^*\left(J^1\pi_{\left[G\right]}\right)/H,
  \]
  as required.
\end{remark}



\section{Extensions and reductions of generalized Cartan connections}
\label{sec:extens-gener-cart}
We want to describe the connection between gravity and field theory using generalized connections on principal fiber bundles with the general linear group as the structure group. To accomplish this task, it will be essential to have an operation of extension and reduction of connections similar to those available for principal connections, but that work on generalized Cartan connections.

Therefore, in the present section we will use the correspondence between generalized Cartan connections and principal connections on a extended bundle, as described by Proposition \ref{prop:CartanVsPrincipalConn}, in order to generalize these constructions to the realm of (generalized) Cartan connection.

\subsection{How to extend a generalized Cartan connection}
\label{sec:how-extend-gener}

Let us consider the following problem: Given the diagram of Lie groups
\begin{equation}\label{eq:GroupsDiagram}
  \begin{tikzcd}[ampersand replacement=\&,row sep=.5cm,column sep=.7cm]
    \&
    G
    \&
    \\
    G_1
    \arrow[hook]{ur}{}
    \&
    \&
    G_2
    \arrow[hook']{ul}{}
    \\
    \&
    H
    \arrow[hook]{ul}{}
    \arrow[hook']{ur}{}
  \end{tikzcd}      
\end{equation}
such that $G_1/H$ and $G/G_2$ have the same dimension than $M$, and a generalized Cartan connection $A:TP\to\g_1$ on a $H$-principal bundle $\pi:P\to M$, construct a generalized Cartan connection $A_{\left[G_2\right]}:TP\left[G_2\right]\to\g$ in a canonical way.

In order to properly address this problem, let us establish the following auxiliary result.

\begin{lemma}
  Let $K\subset G_1\subset G$ be a chain of Lie groups, and consider a $K$-principal bundle $\pi:P\to M$. Then
  \[
    P\left[G\right]\simeq\left(P\left[G_1\right]\right)\left[G\right].
  \]
\end{lemma}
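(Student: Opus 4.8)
The plan is to realize $P$ as a $K$-principal subbundle of the $G$-principal bundle $(P[G_1])[G]$, and then to appeal to Lemma \ref{lem:BundleIsomorphism}, which identifies any $G$-principal bundle containing $P$ as a $K$-subbundle with the extension $P[G]$. Throughout I write the extension concretely as $P[G']=(P\times G')/K'$, where for a $K'$-bundle the structure group acts by $(u,g')\cdot h=(uh,h^{-1}g')$ and the residual right action is $[u,g']_{K'}\cdot a=[u,g'a]_{K'}$; the canonical inclusion sends $u\mapsto[u,e]_{K'}$.

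First I would set up the two canonical inclusions
\[
  \gamma_1:P\hookrightarrow P[G_1],\ u\mapsto[u,e]_K,\qquad \gamma_2:P[G_1]\hookrightarrow(P[G_1])[G],\ p\mapsto[p,e]_{G_1},
\]
and form the composite $j:=\gamma_2\circ\gamma_1$. Since $(P[G_1])[G]$ is by construction a $G$-principal bundle and $j$ is an injective immersion covering $\id_M$, the only substantive point is to verify that $j$ is $K$-equivariant when $K$ acts on the target through the chain $K\subset G_1\subset G$.

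The key computation is the following chase through the two quotients. For $k\in K$ we have $\gamma_1(uk)=[uk,e]_K=[u,k]_K=\gamma_1(u)\cdot k$, where the last product is the $G_1$-action restricted to $K$; applying $\gamma_2$ and using the same identity one step up,
\[
  j(uk)=\bigl[\gamma_1(u)\cdot k,\,e\bigr]_{G_1}=\bigl[\gamma_1(u),\,k\bigr]_{G_1}=\gamma_2(\gamma_1(u))\cdot k=j(u)\cdot k,
\]
the product now being the $G$-action restricted to $K$. Hence $j(P)$ is a $K$-principal subbundle of $(P[G_1])[G]$, and Lemma \ref{lem:BundleIsomorphism} gives $(P[G_1])[G]\simeq P[G]$.

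The main obstacle is purely bookkeeping: one must keep straight that the original $K$-action on $P$ corresponds, under $\gamma_1$, to the $G_1$-action restricted to $K$, and then, under $\gamma_2$, to the $G$-action restricted to $K$, which is precisely what the displayed identity records. Equivalently, and avoiding Lemma \ref{lem:BundleIsomorphism} altogether, one can write down the isomorphism directly as
\[
  \Phi:(P[G_1])[G]\to P[G],\qquad \bigl[[u,g_1]_K,\,g\bigr]_{G_1}\mapsto[u,\,g_1g]_K,
\]
check that it is well defined on both quotients and $G$-equivariant over $M$, and conclude via the standard fact (recalled in the proof of Lemma \ref{lem:BundleIsomorphism}) that a morphism of $G$-principal bundles over a fixed base is automatically an isomorphism.
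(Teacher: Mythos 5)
Your argument is correct, but it is organized differently from the paper's. The paper writes down the explicit map $\Phi:P\left[G\right]\to\left(P\left[G_1\right]\right)\left[G\right]$, $\left[u,g\right]\mapsto\left[\left[u,e\right],g\right]$, and then verifies surjectivity and injectivity by a direct chase through the two quotients. You instead reduce the whole statement to Lemma \ref{lem:BundleIsomorphism}: you exhibit $P$ as a $K$-subbundle of $\left(P\left[G_1\right]\right)\left[G\right]$ via the composite $j=\gamma_2\circ\gamma_1$, where the only real content is the equivariance identity $j\left(uk\right)=j\left(u\right)\cdot k$ that you verify, and then the general subbundle criterion does the rest. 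Your alternative map $\bigl[\left[u,g_1\right]_K,g\bigr]_{G_1}\mapsto\left[u,g_1g\right]_K$ is precisely the inverse of the paper's $\Phi$, and your well-definedness and $G$-equivariance checks, combined with the fact (already invoked in the proof of Lemma \ref{lem:BundleIsomorphism}) that a morphism of $G$-principal bundles over a fixed base is automatically an isomorphism, close the argument without any injectivity/surjectivity computation. What your route buys is economy and reuse of the structural lemma already established; what the paper's route buys is an explicit description of both the map and the element-level identifications, which it then uses implicitly when lifting Diagram \eqref{eq:GroupsDiagram} to Diagram \eqref{eq:PrincipalBundleDiagram}. Both proofs are complete; the choice is a matter of taste.
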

\begin{proof}
  The identification is given by the map
  \[
    \left[u,g\right]\in P\left[G\right]\mapsto\Phi\left(\left[u,g\right]\right):=\left[\left[u,e\right],g\right].
  \]
  In fact, for every $\left[\left[u,h\right],g'\right]\in\left(P\left[G_1\right]\right)\left[G\right]$, we have that
  \[
    \left[\left[u,h\right],g'\right]=\left[\left[u,e\right],hg'\right]=\Phi\left(\left[u,hg'\right]\right),
  \]
  showing that $\Phi$ is surjective. On the other hand, if $\left[u,g_1\right],\left[u,g_2\right]\in P\left[G\right]$ are such that
  \[
    \Phi\left(\left[u_1,g_1\right]\right)=\Phi\left(\left[u_2,g_2\right]\right),
  \]
  we will have that
  \[
    \left[\left[u_1,e\right],g_1\right]=\left[\left[u_2,e\right],g_2\right],
  \]
  meaning that
  \[
    \left[u_1,h^{-1}\right]=\left[u_2,e\right],\qquad hg_1=g_2
  \]
  for some $h\in G_1$. Therefore
  \[
    u_1k^{-1}=u_2,\qquad kh^{-1}=e,\qquad hg_1=g_2
  \]
  and so
  \[
    \left[\left[u_2,e\right],g_2\right]=\left[\left[u_1k^{-1},e\right],kg_1\right]=\left[\left[u_1,e\right],g_1\right].
  \]
  This shows that $\Phi$ is also injective.
\end{proof}

This lemma allows us to lift Diagram \ref{eq:GroupsDiagram} to principal bundles level:
\begin{equation}\label{eq:PrincipalBundleDiagram}
  \begin{tikzcd}[ampersand replacement=\&,row sep=.8cm,column sep=.8cm]
    \&
    P\left[G\right]
    \&
    \\
    P\left[G_1\right]
    \arrow[hook]{ur}{\gamma_{G_1}}
    \&
    \&
    P\left[G_2\right]
    \arrow[hook',swap]{ul}{\gamma_{G_2}}
    \\
    \&
    P
    \arrow[hook]{ul}{\gamma_H^1}
    \arrow[hook',swap]{ur}{\gamma_H^2}
  \end{tikzcd}      
\end{equation}

\begin{remark}
  It is interesting to note that, in view of Lemma \ref{lem:BundleIsomorphism}, the solution to this problem described below, will apply to any diagram of principal bundles
  \[
    \begin{tikzcd}[ampersand replacement=\&,row sep=.8cm,column sep=.8cm]
      \&
      Q
      \&
      \\
      P_1
      \arrow[hook]{ur}{j_1}
      \&
      \&
      P_2
      \arrow[hook',swap]{ul}{j_2}
      \\
      \&
      P
      \arrow[hook]{ul}{i_1}
      \arrow[hook',swap]{ur}{i_2}
    \end{tikzcd}      
  \]
  where $P,P_1,P_2,Q$ are $H,G_1,G_2,G$-principal bundles respectively, and the arrows indicate principal bundle immersions.
\end{remark}

The idea to extend the $\g_1/\hf$-Cartan connection on $P$ is to use Proposition \ref{prop:CartanVsPrincipalConn}; with its help, we can find a principal connection on the bundle $P\left[G_1\right]$, and then lift it through the map $\gamma_{G_1}:P\left[G_1\right]\hookrightarrow P\left[G\right]$. Afterwards, we can induce a $\g/\g_2$-Cartan connection on $P\left[G_2\right]$ using Proposition \ref{prop:CartanVsPrincipalConn} together with the map $\gamma_{G_2}:P\left[G_2\right]\hookrightarrow P\left[G\right]$.

\begin{proposition}\label{prop:extens-gener-cart}
  Given any generalized $\g_1/\hf$-Cartan connection $A:TP\to\g_1$, there exists a unique generalized $\g/\g_2$-Cartan connection $B:TP\left[G_2\right]\to\g$ with the following property: If $\widetilde{B}:TP\left[G\right]\to\g$ is the principal connection on $P\left[G\right]$ associated to $B$ and $\widetilde{A}:TP\left[G_1\right]\to\g_1$ is the corresponding principal connection for $A$, we have that
  \[
    \gamma_{G_1}^*\widetilde{B}=\widetilde{A}.
  \]
\end{proposition}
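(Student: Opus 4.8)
The plan is to follow the route sketched just before the statement, chaining the two directions of Proposition \ref{prop:CartanVsPrincipalConn} together with the classical extension of principal connections, and to read off both existence and uniqueness from the uniqueness clauses already built into those constructions. The identifications $P\left[G\right]\simeq\left(P\left[G_1\right]\right)\left[G\right]\simeq\left(P\left[G_2\right]\right)\left[G\right]$ furnished by the iterated-extension lemma let one regard $P\left[G_1\right]$ and $P\left[G_2\right]$ as principal subbundles of a single bundle $P\left[G\right]$, via $\gamma_{G_1}$ and $\gamma_{G_2}$ of Diagram \eqref{eq:PrincipalBundleDiagram}.

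For existence I would proceed in three steps. First, since the generalized Cartan conditions (i)--(ii) coincide with the hypotheses (ii)--(iii) used in the forward half of Proposition \ref{prop:CartanVsPrincipalConn} (the isomorphism clause is never invoked there), that half applies verbatim to the pair $\left(G_1,H\right)$ and yields a unique principal connection $\widetilde A\colon TP\left[G_1\right]\to\g_1$ with $\left(\gamma_H^1\right)^*\widetilde A=A$. Second, viewing $P\left[G_1\right]$ as a $G_1$-principal subbundle of $P\left[G\right]$ through $\gamma_{G_1}$, I extend $\widetilde A$ to a $G$-principal connection $\widetilde B\colon TP\left[G\right]\to\g$ by the standard principal-connection extension recalled after Proposition \ref{prop:GeneralizedCartanConnection}; this extension is characterized by $\gamma_{G_1}^*\widetilde B=\widetilde A$, which is exactly the property demanded in the statement. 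Third, viewing $P\left[G_2\right]$ as a $G_2$-principal subbundle of $P\left[G\right]$ through $\gamma_{G_2}$, I set $B:=\gamma_{G_2}^*\widetilde B$; the equivariance $R_h^*B=\Ad_{h^{-1}}\circ B$ for $h\in G_2$ and the vertical identity $B|V\pi=\kappa_{G_2}$ follow from the same computations as in the converse half of Proposition \ref{prop:CartanVsPrincipalConn}, none of which uses the dimension or transversality hypotheses, so $B$ is a generalized $\g/\g_2$-Cartan connection. One consistency check is needed: the forward half of Proposition \ref{prop:CartanVsPrincipalConn} applied to $\left(G,G_2\right)$ attaches to $B$ a unique principal connection on $\left(P\left[G_2\right]\right)\left[G\right]\simeq P\left[G\right]$ pulling back to $B$ along $\gamma_{G_2}$, and since $\widetilde B$ already enjoys this property, uniqueness identifies it with $\widetilde B$; thus the ``principal connection associated to $B$'' in the statement is precisely the $\widetilde B$ constructed above.

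For uniqueness, let $B'$ be another generalized $\g/\g_2$-Cartan connection whose associated principal connection $\widetilde{B'}$ satisfies $\gamma_{G_1}^*\widetilde{B'}=\widetilde A$. The assignment $B'\mapsto\widetilde{B'}$ is injective by the uniqueness in the forward half of Proposition \ref{prop:CartanVsPrincipalConn} applied to $\left(G,G_2\right)$, so it suffices to show $\widetilde{B'}=\widetilde B$. But both are $G$-principal connections on $P\left[G\right]$ restricting along $\gamma_{G_1}$ to the same $\g_1$-valued form $\widetilde A$ on the subbundle $P\left[G_1\right]$; the extension of a principal connection across a principal subbundle is unique (the horizontal distribution is $G$-invariant and, by a dimension count, already pinned down on $\gamma_{G_1}\left(P\left[G_1\right]\right)$, hence everywhere), whence $\widetilde{B'}=\widetilde B$ and $B'=\gamma_{G_2}^*\widetilde{B'}=\gamma_{G_2}^*\widetilde B=B$.

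The step I expect to be the main obstacle is the bookkeeping around the two identifications $P\left[G\right]\simeq\left(P\left[G_1\right]\right)\left[G\right]$ and $P\left[G\right]\simeq\left(P\left[G_2\right]\right)\left[G\right]$: one must verify that $\gamma_{G_1}$ and $\gamma_{G_2}$ are compatible with them, so that ``pull back along $\gamma_{G_1}$'' and ``restrict along $\gamma_{G_2}$'' genuinely refer to one and the same bundle $P\left[G\right]$ and one and the same connection $\widetilde B$. Pinning down uniqueness of the principal-connection extension is the other point requiring care; everything else is a transcription of the computations already performed in Proposition \ref{prop:CartanVsPrincipalConn}.
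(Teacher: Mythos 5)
Your proposal is correct and follows essentially the same route the paper takes: convert $A$ to a principal connection $\widetilde{A}$ on $P\left[G_1\right]$ via Proposition \ref{prop:CartanVsPrincipalConn}, extend it along $\gamma_{G_1}$ to $\widetilde{B}$ on $P\left[G\right]$, and restrict along $\gamma_{G_2}$ to obtain $B$, exactly as encoded in Equation \eqref{eq:InducedCartanMap}. The paper in fact gives only the construction and no formal proof, so your explicit verification that the forward half of Proposition \ref{prop:CartanVsPrincipalConn} never uses the isomorphism clause, and your uniqueness argument via uniqueness of principal-connection extension, supply details the paper leaves implicit.
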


In terms of bundle maps, this correspondence proceeds as follows: Given a Cartan connection
\[
  \Gamma:P\to J^1\pi_{\left[G_1\right]},
\]
we can define a principal connection $\widetilde{\Gamma}:P\left[G\right]\to J^1\pi_{\left[G\right]}$ through
\[
  \widetilde{\Gamma}\left(\left[u,g\right]\right):v\in T_xM\mapsto T_{\gamma_{G_1}\left(\left[u,e\right]\right)}R_g\left(T_{u}\gamma_{G_1}\left(\Gamma\left(u\right)\left(v\right)\right)\right).
\]
Then the induced $\g/\g_2$-Cartan connection is the map
\begin{equation}\label{eq:InducedCartanMap}
  \Gamma^\sharp\left(\left[u,g\right]\right):v\in T_xM\mapsto T_{\gamma_{G_1}\left(\left[u,e\right]\right)}R_g\left(T_{u}\gamma_{G_1}\left(\Gamma\left(u\right)\left(v\right)\right)\right)
\end{equation}
for every $\left[u,g\right]\in P\left[G_2\right]\equiv\gamma_{G_2}\left(P\left[G_2\right]\right)$.

\subsection{Reducible Cartan connection}
\label{sec:induc-cart-conn}

What about the converse of this result? Namely, given a $\g/\g_2$-Cartan connection, will it restrict to a $\g_1/\hf$-Cartan connection on $P$? The problem with this question is that, because of the way in which the extension is defined, we have that $\widetilde{B}$ should be $\g_1$-valued, at least when it is restricted to $P\left[G_1\right]$. It poses some restrictions to the desired converse result.

\begin{proposition}\label{prop:induc-cart-conn}
  Let $\Gamma^\sharp:P\left[G_2\right]\to J^1\pi_{\left[G\right]}$ be a generalized $\g/\g_2$-Cartan connection on $P\left[G_2\right]$. If the induced principal connection
  \[
    \widetilde{\Gamma}:P\left[G\right]\to J^1\pi_{\left[G\right]}
  \]
  is such that
  \[
    \widetilde{\Gamma}\left(\gamma_{G_1}\left(\left[u,g_1\right]\right)\right)\in j^1\gamma_{G_1}\left(J^1\pi_{\left[G_1\right]}\right)
  \]
  for all $\left[u,g_1\right]\in P\left[G_1\right]$, then it reduces to a generalized $\g_1/\hf$-Cartan connection on $P$
  \[
    \Gamma:P\to J^1\pi_{\left[G_1\right]}
  \]
  through the formula
  \[
    \widetilde{\Gamma}\left(\gamma_{G_1}\left(\left[u,e\right]\right)\right)= j^1\gamma_{G_1}\left(\Gamma\left(u\right)\right).
  \]
\end{proposition}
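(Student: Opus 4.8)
The plan is to run the correspondence of Proposition~\ref{prop:CartanVsPrincipalConn} backwards, using the reducibility hypothesis to descend the principal connection $\widetilde{\Gamma}$ from $P\left[G\right]$ to $P\left[G_1\right]$ and then restrict to $P$. Note first that the object $\widetilde{\Gamma}$ is legitimately available: since $\Gamma^\sharp$ is a $\g/\g_2$-Cartan connection on the $G_2$-bundle $P\left[G_2\right]$ and $\left(P\left[G_2\right]\right)\left[G\right]\simeq P\left[G\right]$ by the preceding lemma, Proposition~\ref{prop:CartanVsPrincipalConn} produces exactly the principal connection $\widetilde{\Gamma}$ on $P\left[G\right]$ appearing in the statement. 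The key structural fact I would isolate is that the prolongation $j^1\gamma_{G_1}:J^1\pi_{\left[G_1\right]}\to J^1\pi_{\left[G\right]}$ is a fibrewise injection: in the splitting picture it sends a horizontal complement $\lambda:T_xM\to T_{q_1}P\left[G_1\right]$ to $T\gamma_{G_1}\circ\lambda$, and $T\gamma_{G_1}$ is injective because $\gamma_{G_1}$ is an immersion. Hence the hypothesis $\widetilde{\Gamma}\left(\gamma_{G_1}\left(\left[u,g_1\right]\right)\right)\in j^1\gamma_{G_1}\left(J^1\pi_{\left[G_1\right]}\right)$ guarantees that each such value has a unique preimage under $j^1\gamma_{G_1}$.

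First I would take $g_1=e$ in the hypothesis and define $\Gamma:P\to J^1\pi_{\left[G_1\right]}$ by declaring $\Gamma\left(u\right)$ to be the unique element with $j^1\gamma_{G_1}\left(\Gamma\left(u\right)\right)=\widetilde{\Gamma}\left(\gamma_{G_1}\left(\left[u,e\right]\right)\right)$; this is precisely the formula in the statement, and the previous paragraph makes it well defined. That $\Gamma$ covers $\gamma_H^1$ is immediate: $j^1\gamma_{G_1}$ covers $\gamma_{G_1}$, the value $\widetilde{\Gamma}\left(\gamma_{G_1}\left(\left[u,e\right]\right)\right)$ sits over $\gamma_{G_1}\left(\left[u,e\right]\right)$, so its preimage $\Gamma\left(u\right)$ sits over $\left[u,e\right]=\gamma_H^1\left(u\right)$.

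Next I would check $H$-equivariance, which is the step where the full hypothesis (all $g_1$, not merely $g_1=e$) is genuinely used. For $h\in H$, using $\left[u\cdot h,e\right]=\left[u,h\right]$ in $P\left[G_1\right]$ and the $G_1$-equivariance of $\gamma_{G_1}$,
\[
  j^1\gamma_{G_1}\left(\Gamma\left(u\cdot h\right)\right)=\widetilde{\Gamma}\left(\gamma_{G_1}\left(\left[u,h\right]\right)\right)=\widetilde{\Gamma}\left(\gamma_{G_1}\left(\left[u,e\right]\right)\cdot h\right),
\]
the hypothesis with $g_1=h$ ensuring the middle term lies in the image of $j^1\gamma_{G_1}$. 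Then the $G$-equivariance of the principal connection $\widetilde{\Gamma}$ (restricted to $H\subset G_1\subset G$) together with the naturality of $j^1\gamma_{G_1}$ with respect to the right actions inherited from $G_1\hookrightarrow G$ gives
\[
  \widetilde{\Gamma}\left(\gamma_{G_1}\left(\left[u,e\right]\right)\cdot h\right)=\widetilde{\Gamma}\left(\gamma_{G_1}\left(\left[u,e\right]\right)\right)\cdot h=j^1\gamma_{G_1}\left(\Gamma\left(u\right)\right)\cdot h=j^1\gamma_{G_1}\left(\Gamma\left(u\right)\cdot h\right),
\]
whence injectivity of $j^1\gamma_{G_1}$ yields $\Gamma\left(u\cdot h\right)=\Gamma\left(u\right)\cdot h$.

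Having produced an $H$-equivariant bundle map $\Gamma:P\to J^1\pi_{\left[G_1\right]}$ covering $\gamma_H^1$, I would finish by invoking Proposition~\ref{prop:GeneralizedCartanConnection} for the pair $\left(G_1,H\right)$: it delivers the generalized $\g_1/\hf$-Cartan connection $A_\Gamma=\Gamma^*\theta_{J^1\pi_{\left[G_1\right]}}$ on $P$, whose defining properties (equivariance and agreement with $\kappa_H$ on verticals) are automatic from that proposition, so no crossing condition is required. The main obstacle I anticipate is not a single computation but the bookkeeping around the functor $j^1\gamma_{G_1}$: one must verify carefully that it is fibrewise injective and that it intertwines the right $H$-actions, since both the well-definedness of $\Gamma$ and its $H$-equivariance rest squarely on these two facts.
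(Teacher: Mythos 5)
The paper states this proposition without proof, offering only the commutative diagram \eqref{eq:ConnectionExtensionDiagram} as illustration; your argument is the natural unwinding of exactly that diagram and of the machinery of Propositions \ref{prop:CartanVsPrincipalConn} and \ref{prop:GeneralizedCartanConnection}, and it is correct --- injectivity of $j^1\gamma_{G_1}$ gives well-definedness, equivariance of $\widetilde{\Gamma}$, $\gamma_{G_1}$ and its prolongation gives $H$-equivariance of $\Gamma$, and the converse direction of Proposition \ref{prop:GeneralizedCartanConnection} finishes. (A small remark: the hypothesis for all $g_1\in G_1$ is in fact implied by the case $g_1=e$ together with the $G_1$-equivariance you invoke, so your use of it at $g_1=h$ is legitimate but not logically independent.)
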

A commutative diagram could be clearer in illustrating these matters
\begin{equation}\label{eq:ConnectionExtensionDiagram}
  \begin{tikzcd}[ampersand replacement=\&,row sep=.8cm,column sep=.8cm]
    J^1\pi_{\left[G\right]}
    \&
    \&
    P\left[G\right]
    \arrow[swap]{ll}{\widetilde{\Gamma}}
    \&
    \\
    \&
    P\left[G_1\right]
    \arrow[hook]{ur}{\gamma_{G_1}}
    \&
    \&
    P\left[G_2\right]
    \arrow[hook']{ul}{\gamma_{G_2}}
    \arrow[bend right=70,swap]{ulll}{\Gamma^\sharp}
    \\
    J^1\pi_{\left[G_1\right]}
    \arrow[]{uu}{j^1\gamma_{G_1}}
    \arrow[swap]{ur}{\left(\pi_{\left[G_1\right]}\right)_{10}}
    \&
    \&
    P
    \arrow[hook,swap]{ul}{\gamma_H^1}
    \arrow[hook',swap]{ur}{\gamma_H^2}
    \arrow[]{ll}{\Gamma}
    \arrow[hook]{uu}{\gamma_G}
  \end{tikzcd}
\end{equation}

\subsection{Examples of extensions for (generalized) Cartan connections}
\label{sec:exampl-exten-cartan}

Here we will consider some examples for the constructions devised above; the example concerning $K$-structures will be relevant when we will consider the relationship between Wise formulation of the correspondence Chern-Simons and gravity and the formulation of the correspondence in terms of bundle with the group $GL\left(m\right)$ as structure group (see Section \ref{sec:chern-simons-vari-as-Wise-problem} below).

\subsubsection{Extension of principal connections}
\label{sec:extens-princ-conn}

Let $G$ be a Lie group and $H\subset G$ a closed Lie subgroup. As a first example, let us consider the pair of diagrams
\begin{equation}\label{eq:GroupsDiagramConnec}
  \begin{tikzcd}[ampersand replacement=\&,row sep=.9cm,column sep=1cm]
    \&
    Q
    \&
    \&
    \&
    G
    \&
    \\
    P
    \arrow[hook]{ur}{}
    \&
    \&
    Q
    \arrow[equal]{ul}{}
    \&
    H
    \arrow[hook]{ur}{}
    \&
    \&
    G
    \arrow[equal]{ul}{}
    \\
    \&
    P
    \arrow[equal]{ul}{}
    \arrow[hook']{ur}{}
    \&
    \&
    \&
    H
    \arrow[equal]{ul}{}
    \arrow[hook']{ur}{}
  \end{tikzcd}      
\end{equation}
where $\pi:P\to M$ is a $H$-principal bundle and $p:Q\to M$ is a $G$-principal bundle; the left diagram is the principal bundles diagram, and the right diagram corresponds to the underlying Lie groups. In this case, Proposition \ref{prop:extens-gener-cart} reduces to the usual result on extensions of principal connections (see Proposition $6.1$ in \cite{KN1}); on the other hand, hypothesis in Proposition \ref{prop:induc-cart-conn} is equivalent to the reducibility of the principal connection on $Q$.

\subsubsection{Induced principal connection for a Cartan connection}
\label{sec:induc-princ-conn}

Let $H,G,P,Q$ be as in the previous section. We can put Proposition \ref{prop:CartanVsPrincipalConn} also in this context; to this end, let us consider the diagrams
\[
  \begin{tikzcd}[ampersand replacement=\&,row sep=.9cm,column sep=1cm]
    \&
    Q
    \&
    \&
    \&
    G
    \&
    \\
    Q
    \arrow[equal]{ur}{}
    \&
    \&
    Q
    \arrow[equal]{ul}{}
    \&
    G
    \arrow[equal]{ur}{}
    \&
    \&
    G
    \arrow[equal]{ul}{}
    \\
    \&
    P
    \arrow[hook]{ul}{}
    \arrow[hook']{ur}{}
    \&
    \&
    \&
    H
    \arrow[hook]{ul}{}
    \arrow[hook']{ur}{}
  \end{tikzcd}
\]
comprising Lie groups and their principal bundle counterparts. In this setting, the additional hypothesis in Proposition \ref{prop:induc-cart-conn} is automatically fulfilled. So, once we realize that $Q\simeq P\left[G\right]$ via Lemma \ref{lem:BundleIsomorphism}, Proposition \ref{prop:extens-gener-cart} and Proposition \ref{prop:induc-cart-conn} are nothing but Proposition \ref{prop:CartanVsPrincipalConn}.

\subsubsection{$K$-structures on space-time}
\label{sec:k-structures-space}

The previous scheme can also be applied to the case in which the original relationship between Palatini gravity (formulated on a $SO\left(2,1\right)$-subbundle of $LM$) and Chern-Simons field theory (on a $SO\left(2,1\right)\ltimes\mR^3$-principal bundle) is promoted to a relationship between gravity formulated in the bundle $LM$ and Chern-Simons theory on a bundle with structure group $GL\left(3\right)\ltimes\mR^3$. Diagram \eqref{eq:GroupsDiagram} then becomes
\begin{equation}\label{eq:GroupsDiagramSpec}
  \begin{tikzcd}[ampersand replacement=\&,row sep=.9cm,column sep=.6cm]
    \&
    GL\left(m\right)\ltimes\mR^m
    \&
    \\
    SO\left(p,q\right)\ltimes\mR^m
    \arrow[hook]{ur}{}
    \&
    \&
    GL\left(m\right)
    \arrow[hook']{ul}{}
    \\
    \&
    SO\left(p,q\right)
    \arrow[hook]{ul}{}
    \arrow[hook']{ur}{}
  \end{tikzcd}      
\end{equation}
that in terms of principal bundles turns out to be
\begin{equation*}
  \begin{tikzcd}[ampersand replacement=\&,row sep=.9cm,column sep=.9cm]
    \&
    AM
    \&
    \\
    O_\zeta^{\text{aff}}
    \arrow[hook]{ur}{}
    \&
    \&
    LM
    \arrow[hook']{ul}{}
    \\
    \&
    O_\zeta
    \arrow[hook]{ul}{}
    \arrow[hook']{ur}{}
  \end{tikzcd}      
\end{equation*}
Here $O_\zeta\subset LM$ and $O_\zeta^{\text{aff}}\subset AM$ are the $SO(p,q)$- and $SO(p,q)\ltimes\mR^m$-structures respectively, defined through
\[
  O_\zeta:=\bigcup_{x\in M}\left\{u:\mR^m\to T_xM:\zeta(u\left(v\right),u\left(w\right))=\eta\left(v,w\right)\text{ for all }v,w\in\mR^m\right\}
\]
and
\[
  O_\zeta^{\text{aff}}:=\bigcup_{x\in M}\left\{u:\mR^m\to A_xM:\zeta(\beta\left(u\left(v\right)\right),\beta\left(u\left(w\right)\right))=\eta\left(v,w\right)\text{ for all }v,w\in\mR^m\right\}
\]
for some metric $\zeta:M\to\Sigma:=LM/K$ with $\left(p,q\right)$-signature.

Let us use the local description for the frame bundle and its affine counterpart (see Appendix \ref{sec:lift-conn-affine}) in order to show how the proposed extension works in this case. Let us consider $\left(x^\mu,f^\nu_i\right)$ the natural coordinates of an element in $O_\zeta$; it means that
\[
  g_{\mu\nu}f^\mu_if^\nu_j=\eta_{ij},
\]
where $\zeta=g_{\mu\nu}dx^\mu\otimes dx^\nu$. The map
\[
  \phi:LM\times_{GL\left(m\right)}A\left(m\right)\to AM
\]
constructed in Proposition \ref{prop:FrameBundleExtension}, Appendix \ref{sec:affine-frame-bundle}, reads in these coordinates
\begin{equation}\label{eq:LocalIdentificationAffineFrames}
  \phi\left(\left[\left(x^\mu,f^\nu_i\right),\left(a^i_j,0\right)\right]\right)=\left(x^\mu,a^j_if_j^\nu\right).
\end{equation}
Now, from a local data
\begin{equation}\label{eq:CartanConnectionLocalData}
  \left(A_{O_\zeta}\right)_U=\left(\Gamma^\alpha_{\gamma\beta}dx^\beta\otimes E_\alpha^\gamma,\sigma^\alpha_\beta dx^\beta\otimes e_\alpha\right)
\end{equation}
for the Cartan connection $A_{O_\zeta}\in\Omega^1\left(O_\zeta,\kf\ltimes\mR^m\right)$, using Proposition \ref{prop:GeneralizedCartanConnection} we obtain a map
\[
  \Gamma_{A_{O_\zeta}}:O_\zeta\to J^1\left(\left.\left(\beta\circ\tau\right)\right|_{O_\zeta^{\text{aff}}}\right)
\]
that is locally given by
\[
  \Gamma_{A_{O_\zeta}}\left(x^\mu,f^\nu_i\right)=dx^\mu\otimes\left(\frac{\partial}{\partial x^\mu}-f^\gamma_i\Gamma_{\gamma\mu}^\alpha\frac{\partial}{\partial e^\alpha_i}-\sigma_\mu^\alpha\frac{\partial}{\partial v^\alpha}\right)
\]
\begin{remark}\label{rem:MetricityMeaning}
As an aside comment that will become important later, it can be proved that relation \eqref{eq:MetricityConditionLocal} below implies that
\[
  F_\mu:=\frac{\partial}{\partial x^\mu}-f^\gamma_i\Gamma_{\gamma\mu}^\alpha\frac{\partial}{\partial e^\alpha_i}
\]
is a vector field tangent to $O_\zeta$, and so
\[
  \Gamma_{A_{O_\zeta}}\left(x^\mu,f^\nu_i\right)\in J^1\left(\left.\left(\beta\circ\tau\right)\right|_{O_\zeta^{\text{aff}}}\right).
\]
Namely, as expected, the metricity condition implies that $\Gamma_{A_{O_\zeta}}$ takes values in the jet bundle of the affine subbundle determined by the metric $\zeta$.
\end{remark}
Now, using Equation \eqref{eq:InducedCartanMap}, we obtain the induced Cartan connection given by the map
\[
  \Gamma_{A_{O_\zeta}}^\sharp\left(\left[\left(x^\mu,f^\nu_i\right),\left(a^i_j,0\right)\right]\right)=dx^\mu\otimes\left(\frac{\partial}{\partial x^\mu}-a_i^jf^\gamma_j\Gamma_{\gamma\mu}^\alpha\frac{\partial}{\partial e^\alpha_i}-\sigma_\mu^\alpha\frac{\partial}{\partial v^\alpha}\right),
\]
and using identification \eqref{eq:LocalIdentificationAffineFrames}, it will become
\[
  \Gamma_{A_{O_\zeta}}^\sharp\left(x^\mu,e^\nu_i\right)=dx^\mu\otimes\left(\frac{\partial}{\partial x^\mu}-e^\gamma_j\Gamma_{\gamma\mu}^\alpha\frac{\partial}{\partial e^\alpha_i}-\sigma_\mu^\alpha\frac{\partial}{\partial v^\alpha}\right).
\]
This is nothing but the Cartan connection associated to local data \eqref{eq:CartanConnectionLocalData}, when it is considered as providing a $\mathfrak{a}\left(m\right)/\mathfrak{gl}\left(m\right)$-Cartan connection on the frame bundle $LM$.

\section{Variational problems for Chern-Simons field theory and gravity}
\label{sec:vari-probl-chern}

We will use the present section to introduce the variational problem for Chern-Simons field theory used by Wise in \cite{wise2009symmetric,Wise_2010}, and the variational problem for gravity with basis, both in a form suitable for the purposes of this article. Concretely, we will try to find a formulation for these variational problems fitting in the scheme devised by Gotay in the pioneering works \cite{GotayCartan,Gotay1991203}. It means that we need to find for each of these descriptions a triple
\[
  \left(\pi:P\to M,\lambda,\cI\right),
\]
where $\pi:P\to M$ is a bundle on the base space $M$, $\lambda\in\Omega^n\left(P\right)$ is an $n$-form (where $n=\dim{M}$), and $\cI\subset\Omega^\bullet\left(P\right)$ is a differential ideal in the exterior algebra of $P$ (a so called \emph{exterior differential system}, see \cite{BryantNine,BCG,CartanBeginners}). As Gotay explained in the article referenced above, with these data it is possible to formulate a variational problem in the following way: To find the stationary sections $\sigma:U\subset M\to P$ for the action
\[
  S\left[\sigma\right]:=\int_U\sigma^*\lambda
\]
subject to the constraints $\sigma^*\alpha=0$ for all $\alpha\in\cI$. In the present article we will use the term \emph{Gotay variational problem} to refer to this kind of variational problems; in this secrtion we will describe Chern-Simons field theory and gravity from this viewpoint.

\subsection{Wise variational problem}
\label{sec:wise-vari-probl}

Our first objective is to find a variational problem of this type for the Chern-Simons field theory, as described in the articles by Wise \cite{wise2009symmetric,Wise_2010}. To do this, we will use as a starting point the description of this field theory given by Tejero Prieto \cite{TejeroPrieto2004}, which uses a formulation in terms of local variational problems. First, we will study how the canonical forms of the jet space of a principal fiber bundle can be used to globalize this collection of local variational problems. Then, we will introduce a constraint that will relate the degrees of freedom of the underlying principal bundle with the degrees of freedom of the Cartan connection. It should be clarified that this constraint is absent in Wise's description because the relevant degrees of freedom are associated only with the connection, whereas in our approach, by using the frame bundle as the underlying principal bundle, we have degrees of freedom that can be used to represent the coframe (see Remark \ref{rem:Kk0Constraint} below).

\subsubsection{Local formulation for Wise variational problem}
\label{sec:wise-vari-probl-1}
 
Let us suppose that we have a Cartan geometry modelled on the pair $\left(G,H\right)$. The Cartan connection $A$ associated to this geometry can be described locally by a collection of pairs $\left(U,A_U\right)$, where $U\subset M$ are open sets and $A_U\in\Omega^1\left(U,\g\right)$ are $\g$-valued $1$-forms such that the map
\[
  \pi_\hf\circ\left.A_U\right|_x:T_xU\to\g/\hf
\]
is a linear isomorphism. These forms are related to the Cartan connection $A$ through a section $s_U:U\to P$, via the map
\[
  A_U=s_U^*A.
\]
Now, according to Freed \cite{Freed:1992vw}, on every trivializing neighborhood $U\subset M$ it is possible to define an action for Chern-Simons through
\[
  S_U\left[s_U,A\right]:=\int_Us_U^*Tq\left(A,F\right).
\]
Modulo some topological assumptions regarding the structure group of the principal group, it can be proved that this action is independent of the section $s_U$ involved in its definition. Moreover, whenever $\dim{M}=3$ and the Lie group $H$ is simply connected, any $H$-principal bundle on $M$ is necessarily trivial, so that this prescription gives rise to a well-defined variational problem on the whole principal bundle.

On the other hand, Tejero Prieto \cite{TejeroPrieto2004} is able to give global sense to this collection of local actions by requiring that, whenever the domains intersect, the associated Euler-Lagrange equations are the same for any of the local actions. Thus, we will interpret the Wise variational problem as a collection of local actions
\[
  S_U\left[A\right]:=\int_UTq\left(A_U,F_U\right)
\]
given by the local description $\left(U,A_U\right)$ of a Cartan connection, because its Euler-Lagrange equations will coincide on the intersection of the corresponding domains.

Using Proposition~\ref{prop:CartanVsPrincipalConn}, we consider the equivalent variational problem on the bundle $\pi_\gamma:U_\gamma\to M$; therefore, we can see the Wise variational problem as a problem whose fields are principal connections $\widetilde{A}$ on $P\left[G\right]$, and the action is given by the formula 
\[
  S_U\left[\widetilde{A}\right]:=\int_UTq\left(\widetilde{A}_U,\widetilde{F}_U\right),
\]
where $\widetilde{A}_U=\widetilde{s}_U^*\widetilde{A}$ is the local description of the principal connection $\widetilde{A}$, with
\[
  \widetilde{s}_U\left(x\right)=\gamma\left(s_U\left(x\right)\right)=\left[s_U\left(x\right),e\right].
\]

\subsubsection{Wise variational problem in jet bundle formulation}
\label{sec:wise-vari-probl-2}

It is our next aim to formulate Wise variational problem in terms of a Lagrangian form and a bundle; we are looking for a formulation in which the integrand $Tq\left(A_U,F_U\right)$ in the action comes from the Lagrangian form through pullback along a section of a suitable bundle, and this section is uniquely determined by the Cartan connection. Recalling the discussion carried out in Section \ref{sec:bundle-cart-conn}, we will choose the bundle
\[
  \pi\circ\text{pr}_1^\gamma:\gamma^*\left(J^1\pi_{\left[G\right]}\right)\to M
\]
as the bundle whose sections can be put in one-to-one correspondence with $\g/\hf$-Cartan connections on $P$.

Our next task is to find the Lagrangian form. To this end, consider a generalized Cartan connection $\Gamma:P\to J^1\pi_{\left[G\right]}$, with local description $\left\{A_U\right\}$. Then Remark \ref{rem:CaartanConnectionAndCanonicalConnection} tells us that
\[
  A_U=\left(\Gamma\circ s_U\right)^*\theta_{J^1\pi_{\left[G\right]}},
\]
where $\theta_{J^1\pi_{\left[G\right]}}\in\Omega^1\left(J^1\pi_{\left[G\right]},\g\right)$ is the canonical connection form and $s_U:U\to P$ is a local section for the principal bundle $\pi:P\to M$. Let us define
\begin{equation}\label{eq:PullbackCanonicalConnection}
  \theta_{J^1\pi_{\left[G\right]}}^*:=\left(\text{pr}_2^\gamma\right)^*\theta_{J^1\pi_{\left[G\right]}}
\end{equation}
and
\[
  \Theta_{J^1\pi_{\left[G\right]}}^*:=\left(\text{pr}_2^\gamma\right)^*\Theta_{J^1\pi_{\left[G\right]}}
\]
where $\text{pr}_2^\gamma:\gamma^*\left(J^1\pi_{\left[G\right]}\right)\to J^1\pi_{\left[G\right]}$ is the horizontal projection in the pullback diagram in Corollary \ref{cor:GeneralizedCartan}. Therefore, the Lagrangian form is given by
\[
  \lambda_{CS}:=Tq\left(\theta_{J^1\pi_{\left[G\right]}}^*,\Theta_{J^1\pi_{\left[G\right]}}^*\right)\in\Omega^{2k-1}\left(\gamma^*\left(J^1\pi_{\left[G\right]}\right)\right);
\]
we have that
\[
  L_U:=Tq\left(A_U,F_U\right)=\left(s_U,\Gamma\circ s_U\right)^*\lambda_{CS}.
\]
Thus, we obtain the following corollary.
\begin{corollary}\label{cor:wise-vari-probl}
  The extremals of the local variational problem determined by the Lagrangian $L_U$ are in one to one correspondence with the local extremals of the variational problem given by the triple
  \[
    \left(\pi\circ\text{pr}_1^\gamma:\gamma^*\left(J^1\pi_{\left[G\right]}\right)\to M,\lambda_{CS},0\right).
  \]
\end{corollary}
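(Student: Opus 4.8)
The plan is to deduce the statement from the identity $L_U=\left(s_U,\Gamma\circ s_U\right)^*\lambda_{CS}$ established just above, together with the dictionary between generalized Cartan connections and sections of $\gamma^*\left(J^1\pi_{\left[G\right]}\right)$ built in Section \ref{sec:bundle-cart-conn}. I would begin by fixing a trivializing open set $U\subset M$ with a local section $s_U:U\to P$, and unwinding what a local section $\sigma:U\to\gamma^*\left(J^1\pi_{\left[G\right]}\right)$ of $\pi\circ\text{pr}_1^\gamma$ is: writing $\sigma=\left(\sigma_1,\sigma_2\right)$ with $\sigma_1=\text{pr}_1^\gamma\circ\sigma$ and $\sigma_2=\text{pr}_2^\gamma\circ\sigma$, the first component is a local section of $\pi:P\to M$ and the second satisfies $\left(\pi_{\left[G\right]}\right)_{10}\circ\sigma_2=\gamma\circ\sigma_1$. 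Since $\lambda_{CS}=Tq\left(\left(\text{pr}_2^\gamma\right)^*\theta_{J^1\pi_{\left[G\right]}},\left(\text{pr}_2^\gamma\right)^*\Theta_{J^1\pi_{\left[G\right]}}\right)$ and the transgression operation $Tq$ commutes with pullback, this gives
\[
  \sigma^*\lambda_{CS}=Tq\left(\sigma_2^*\theta_{J^1\pi_{\left[G\right]}},\sigma_2^*\Theta_{J^1\pi_{\left[G\right]}}\right)=Tq\left(A_U,F_U\right)=L_U,
\]
where $A_U:=\sigma_2^*\theta_{J^1\pi_{\left[G\right]}}$ is exactly the local representative, with respect to $\sigma_1$, of the generalized Cartan connection attached to $\sigma$ through Remark \ref{rem:CaartanConnectionAndCanonicalConnection}, and $F_U=\sigma_2^*\Theta_{J^1\pi_{\left[G\right]}}$ is its curvature (the structure equation relating $\theta_{J^1\pi_{\left[G\right]}}$ and $\Theta_{J^1\pi_{\left[G\right]}}$ is preserved under pullback by $\sigma_2$).

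This computation shows that the two action functionals coincide under the correspondence $\sigma\leftrightarrow A_U$. I would then note that Remark \ref{rem:CaartanConnectionAndCanonicalConnection} and Corollary \ref{cor:GeneralizedCartan} make the assignment $\sigma\mapsto A_U$ a surjection onto local Cartan connection data that becomes a genuine bijection once the first component is pinned down to $s_U$: in that case $\sigma_2=\Gamma_A\circ s_U$ for the $H$-equivariant map $\Gamma_A:P\to J^1\pi_{\left[G\right]}$ associated to $A$. Because the exterior differential system in the triple is the zero ideal, the associated Gotay problem imposes no constraints, so its extremals are precisely the unconstrained stationary sections of $\sigma\mapsto\int_U\sigma^*\lambda_{CS}$. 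Since $\int_U\sigma^*\lambda_{CS}=\int_U L_U$ holds identically along the correspondence, a section is stationary if and only if the Cartan connection it represents is stationary for $S_U\left[A\right]=\int_U Tq\left(A_U,F_U\right)$; matching admissible variations (a vertical variation of $\sigma$ fixing the base moves the jet component $\sigma_2$, hence varies $A_U$) then yields the claimed one-to-one correspondence of extremals.

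The step that requires the most care is the bookkeeping of the extra data a section of $\gamma^*\left(J^1\pi_{\left[G\right]}\right)$ carries compared with the bare local form $A_U$, namely the choice of $\sigma_1=\text{pr}_1^\gamma\circ\sigma$. I would argue that this redundancy is pure gauge: replacing $\sigma_1$ by $\sigma_1\cdot h$ for an $H$-valued function $h$ on $U$ changes $A_U$ by a gauge transformation and, under the standing topological hypotheses on the structure group that make the transgression $Tq$ well behaved, leaves $\int_U L_U$ invariant, so it does not affect stationarity. Hence the correspondence of extremals descends to the level of Cartan connections and is well defined, and fixing $s_U$ once and for all removes the ambiguity and gives the stated bijection on the nose.
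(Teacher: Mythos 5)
Your proposal is correct and follows essentially the same route as the paper: the paper's proof consists precisely of exhibiting the correspondence $\left(A_U,s_U\right)\mapsto\left(s_U,\Gamma\circ s_U\right)$ and invoking the identity $L_U=\left(s_U,\Gamma\circ s_U\right)^*\lambda_{CS}$ established immediately before the corollary, which is exactly the computation you carry out in detail. Your additional remarks on the curvature being preserved under pullback, the zero ideal imposing no constraints, and the gauge redundancy in the choice of $\sigma_1$ are consistent elaborations of what the paper leaves implicit.
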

\begin{proof}
  The correspondence is given by
  \[
    \left(A_U,s_U\right)\mapsto\left(s_U,\Gamma\circ s_U\right)\in\left(\pi\circ\text{pr}_1^\gamma\right)^{-1}\left(U\right)\subset\gamma^*\left(J^1\pi_{\left[G\right]}\right).\qedhere
  \]
\end{proof}

\subsubsection{Wise variational problem for first order geometries}
\label{sec:wise-vari-probl-3}

Recall \cite{sharpe1997differential,Barakat2004} that a pair $\left(P,A\right)$, where $\pi:P\to M$ is an $H$-principal bundle and $A$ is a $\g/\hf$-Cartan connection, is called a \emph{first order geometry} if and only if the representation
\[
  \text{Ad}:H\to\mathop{\text{GL}}{\left(\g/\hf\right)}
\]
is faithful. Then we have the following result.
\begin{proposition}
  $P$ admits a first order geometry if and only if it is an $H$-structure.
\end{proposition}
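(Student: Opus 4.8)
The plan is to prove both implications by exploiting the soldering form carried by any Cartan connection, that is, the $\g/\hf$-valued $1$-form obtained by composing $A$ with the projection $\pi_{\g/\hf}:\g\to\g/\hf$. The central observation is that the Proposition identifying $TM$ with the associated bundle $P\times_H\g/\hf$ furnishes, for each $u\in P$ over $x=\pi(u)$, a linear isomorphism $\phi_u:T_xM\to\g/\hf$ satisfying the equivariance $\phi_{u\cdot h}=\Ad_{h^{-1}}\circ\phi_u$. Since the Cartan axioms force $\dim\left(\g/\hf\right)=\dim M=m$, a choice of basis identifies $\g/\hf\cong\mR^m$ and $GL\left(\g/\hf\right)\cong GL\left(m,\mR\right)$, and the whole argument turns on the faithfulness of $\Ad:H\to GL\left(\g/\hf\right)$.

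For the direct implication I would assume $\left(P,A\right)$ is a first order geometry and define the bundle map $j:P\to LM$ over $\id_M$ by $j\left(u\right):=\phi_u^{-1}:\g/\hf\cong\mR^m\to T_xM$, which is a frame at $x$ because $\phi_u$ is an isomorphism. The equivariance of $\phi$ gives $j\left(u\cdot h\right)=\phi_{u\cdot h}^{-1}=\phi_u^{-1}\circ\Ad_h=j\left(u\right)\cdot\Ad\left(h\right)$, so $j$ intertwines the $H$-action on $P$ with the $GL\left(m,\mR\right)$-action on $LM$ through the homomorphism $\Ad:H\to GL\left(m,\mR\right)$. On each fibre $j$ is a torsor morphism covering $\Ad$; faithfulness of $\Ad$ makes it injective on fibres, and since $j$ covers $\id_M$ it is globally injective and an immersion. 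Hence $j$ realises $P$ as an $\Ad\left(H\right)$-reduction of $LM$, i.e. an $H$-structure.

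For the converse I would assume $P$ is an $H$-structure, so that $H$ is faithfully a subgroup of $GL\left(m,\mR\right)$ and $P\hookrightarrow LM$. I would then take the model $G:=H\ltimes\mR^m$, so that $\g=\hf\oplus\mR^m$, the quotient $\g/\hf\cong\mR^m$ carries the defining representation of $H\subset GL\left(m,\mR\right)$, and $\Ad:H\to GL\left(\g/\hf\right)$ is faithful by hypothesis. To produce a Cartan connection I would restrict the canonical (tautological) $\mR^m$-valued $1$-form $\theta$ on $LM$ to $P$, pick any principal connection $\omega\in\Omega^1\left(P,\hf\right)$, and set $A:=\omega+\theta\in\Omega^1\left(P,\g\right)$. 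The three Cartan axioms are then checked as follows: $A_u$ is an isomorphism because $\omega$ restricts to $\kappa_H$ on $V\pi$ where $\theta$ vanishes, while $\theta$ restricts to an isomorphism onto $\mR^m$ on an $\omega$-horizontal complement, and $\dim T_uP=\dim H+m=\dim\g$; the equivariance $R_h^*A=\Ad_{h^{-1}}\circ A$ follows from the transformation laws of $\omega$ and of the tautological form $\theta$, matched with the semidirect-product adjoint action on $\hf\oplus\mR^m$; and $A|_{V\pi}=\kappa_H$ since $\theta$ is horizontal. Thus $\left(P,A\right)$ is a first order geometry.

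The routine verifications (equivariance of the tautological form and the axiom checks in the converse) are standard; the genuine content, and the only place faithfulness enters, is the passage from the equivariant map $j$ to a reduction of $LM$. I expect the main obstacle to be making precise that an $H$-equivariant bundle morphism $j:P\to LM$ over $\id_M$, covering an injective structure-group homomorphism $\Ad:H\hookrightarrow GL\left(m,\mR\right)$, has image an embedded $\Ad\left(H\right)$-subbundle — that is, that injectivity on the structure group upgrades the equivariant map to an honest $H$-structure.
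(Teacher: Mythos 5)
Your forward implication is exactly the paper's argument: from the family of isomorphisms $\phi_u:T_xM\to\g/\hf$ with $\phi_{u\cdot h}=\text{Ad}_{h^{-1}}\circ\phi_u$, one builds the map $u\mapsto\phi_u^{-1}$ (the paper writes it as $u\mapsto\left(\phi_u^{-1}\left(w_1\right),\cdots,\phi_u^{-1}\left(w_m\right)\right)$ for a basis $\left\{w_i\right\}$ of $\g/\hf$) and uses faithfulness of $\text{Ad}$ to see this as a reduction of $LM$. Where you genuinely diverge is that the paper's proof stops there: it never addresses the converse at all, whereas you construct it explicitly by taking the model $G=H\ltimes\mR^m$, restricting the tautological form $\theta$ of $LM$ to the $H$-structure $P$, choosing any principal connection $\omega$ on $P$, and setting $A=\omega+\theta$. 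That construction is correct: the three Cartan axioms check out as you say (in particular the equivariance matches the semidirect-product adjoint action, Equation \eqref{eq:AdjointActionSemidirect} with $\xi=0$), and $\text{Ad}:H\to GL\left(\g/\hf\right)$ is the defining representation of $H\subset GL\left(m,\mR\right)$, hence faithful. So your proof is strictly more complete than the paper's; the only caveat worth stating explicitly is that "admits a first order geometry" in the converse direction requires you to be free to choose the model pair $\left(G,H\right)$, which your choice $H\ltimes\mR^m$ supplies. The final obstacle you flag (upgrading an equivariant fibrewise-injective map over $\id_M$ covering an injective group homomorphism to an embedded subbundle) is the standard reduction-of-structure-group fact and is not an issue.
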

\begin{proof}
  Recall that a Cartan connection $A$ on $P$ induces a family of isomorphisms
  \[
    \phi_u:T_xM\to\g/\hf,\qquad u\in\pi^{-1}\left(x\right)
  \]
  such that $\phi_{uh}=\text{Ad}_{h^{-1}}\circ\phi_u$. Thus, the isomorphism is given by
  \[
    u\in P\mapsto\left(\phi_u^{-1}\left(w_1\right),\cdots,\phi_u^{-1}\left(w_m\right)\right),
  \]
  where $\left\{w_1,\cdots,w_m\right\}$ is a basis for $\g/\hf$.
\end{proof}
In the present section we will assume that $P$ admits a first order geometry; according to the previous proposition, it means that $P$ can be considered a subbundle of the bundle of frames $LM$. Therefore, for every element $v_x\in T_xM$ and $u\in\pi^{-1}\left(x\right)$, we have two ways to represent it, namely
\[
  v_x=\phi_u^{-1}\left(w_u\right)=u\left(c_u\right)
\]
for some elements $w_u\in\g/\hf$ and $c_u\in\mR^m$; it gives rise to a linear isomorphism
\begin{equation}\label{eq:CartanConnectionConstraint}
  \kappa_u:\mR^m\to\g/\hf:c\mapsto\phi_u\circ u\left(c\right).
\end{equation}

\begin{example}[Affine connections]\label{example:AffineConnections}
  Let us calculate this isomorphism for $\mathfrak{a}\left(m\right)/\mathfrak{gl}\left(m\right)$-Cartan connections on $LM$, the so called \emph{generalized affine connections} \cite{KN1}. Assuming that the local version of a Cartan connection on $LM$ is
  \[
    \widetilde{\omega}_U=\left(\Gamma^\alpha_{\gamma\beta}dx^\beta\otimes E_\alpha^\gamma,\sigma^\alpha_\beta dx^\beta\otimes e_\alpha\right)
  \]
  for some local functions $\Gamma^\alpha_{\beta\gamma}$ and $\sigma^\alpha_\beta$, from Equation \eqref{eq:AffineConnectionGlobal} we obtain
  \[
    \phi_{\left(x^\alpha,e^\beta_i\right)}=e_\alpha^i\sigma^\alpha_\beta dx^\beta\otimes e_i.
  \]
  Now, recall that in this case $\mathfrak{a}\left(m\right)/\mathfrak{gl}\left(m\right)=\mR^m$ in canonical fashion, so that $\kappa_{\left(x^\alpha,e^\beta_i\right)}$ is a linear endomorphism of $\mR^m$. Therefore, the matrix for this morphism in terms of the canonical basis $\left\{e_i\right\}$ will become
  \[
    \left[\kappa_{\left(x^\alpha,e^\beta_i\right)}\right]_j^i=e_\alpha^ie^\beta_j\sigma^\alpha_\beta.
  \]
  According to the classical definition \cite{KN1}, a Cartan connection on $LM$ is an \emph{affine connection} if and only if $\kappa_{\left(x^\alpha,e^\beta_i\right)}$ is the identity on $\mR^m$, or equivalently
  \[
    e_\alpha^ie^\beta_j\sigma^\alpha_\beta=\delta_j^i.
  \]
  We can see this constraint in terms of the jet bundle description for the Cartan connection (see Section \ref{sec:canon-gener-cart}). In order to proceed, let us define the map
  \begin{equation}\label{eq:ImmersionDefinition}
    j:J^1\tau\hookrightarrow J^1\left(\tau\circ\beta\right):j_x^1s\mapsto j^1_x\left(\gamma\circ s\right)-\left.\varphi\right|_{s\left(x\right)},
  \end{equation}
  where $\varphi$ indicates the canonical $\mR^m$-valued $1$-form on $LM$, which can be seen as an element of
  \[
    T_x^*M\otimes\mR^m\subset T^*_xM\otimes\mathfrak{a}\left(m\right)\simeq T_x^*M\otimes V_{\gamma\left(s\left(x\right)\right)}\left(AM\right)
  \]
  and thus acts on the affine space $J^1_{\gamma\left(s\left(x\right)\right)}\left(\tau\circ\beta\right)$. In local terms, we have that
  \[
    \left.\varphi\right|_{\left(x^\alpha,e^\beta_i\right)}=e^i_\beta dx^\beta\otimes e_i
  \]
  and if an element $j_x^1s\in J^1\tau$ has coordinates $\left(x^\alpha,e_i^\beta,e^\beta_{i\gamma}\right)$, then it represents the map
  \[
    j_x^1s:\frac{\partial}{\partial x^\alpha}\mapsto\frac{\partial}{\partial x^\alpha}+e^\beta_{i\alpha}\frac{\partial}{\partial e^\beta_i};
  \]
  additionally, the map $\gamma:LM\to AM$ simply reads
  \[
    \gamma\left(x^\alpha,e_i^\beta\right)=\left(x^\alpha,e_i^\beta,0\right).
  \]
  Then, from Definition \eqref{eq:ImmersionDefinition} we can conclude that
  \begin{equation}\label{eq:jInLocalTerms}
    j\left(x^\alpha,e_i^\beta,e^\alpha_{i\gamma}\right)=\left(x^\alpha,e_i^\beta,0,e^\alpha_{i\gamma},-e^\alpha_ie^i_\beta\right),
  \end{equation}
  as required.
\end{example}

Our next task is to adapt the Equation \eqref{eq:CartanConnectionConstraint} to the description discussed in Section \ref{sec:cart-conn-as}, where Cartan connections were considered as sections in a bundle. The main reason to do that is that, just as we saw in the previous example, the pullback bundle $\gamma^*\left(J^1\pi_{\left[G\right]}\right)$ would contain degrees of freedom associated to the principal bundle $P$ and the $\g/\hf$-part of the Cartan connection. In fact, given an element $\overline{u}=\left(u,j_x^1s\right)\in\gamma^*\left(J^1\pi_{\left[G\right]}\right)$, we can construct the map
\[
  {\phi}_{\overline{u}}:T_xM\to\g/\hf
\]
in the following way: Given a tangent vector $v_x\in T_xM$, fix a lift $\widehat{v}_u\in T_uP$, and use it to construct the tangent vector
\[
  W_{s\left(x\right)}:=T_u\gamma\left(\widehat{v}_{u}\right)-T_xs\left({v}_{x}\right)\in T_{s\left(x\right)}\left(P\left[G\right]\right).
\]
Then we have that
\[
  T_{s\left(x\right)}\pi_{\left[G\right]}\left(T_u\gamma\left(\widehat{v}_{u}\right)\right)=v_x
\]
and so $W_{s\left(x\right)}\in V_{s\left(x\right)}\left(P\left[G\right]\right)$; it means in particular that there exists an element $\widetilde{W}\left(\widehat{v}_u\right)\in\g$ such that
\[
  W_{s\left(x\right)}=\left(\widetilde{W}\left(\widehat{v}_u\right)\right)_{P\left[G\right]}\left(s\left(x\right)\right).
\]
Now, changing the lift $\widehat{v}_u$ in this definition will produce a shift in $\widetilde{W}\left(\widehat{v}_u\right)$ by an element living in $\hf$; therefore, we can define the map ${\phi}_{\overline{u}}:T_xM\to\g/\hf$ by projecting into the quotient by $\hf$, namely
\[
  {\phi}_{\overline{u}}\left(v_x\right):=\left[\widetilde{W}\left(\widehat{v}_u\right)\right]_\hf,
\]
where $\left[\cdot\right]_\hf$ indicates the equivalence class in $\g/\hf$.

Using Example \ref{example:AffineConnections} and Equation \eqref{eq:CartanConnectionConstraint}, we can define a map
\[
  \kappa_{\overline{u}}:=\phi_{\overline{u}}\circ u:\mR^m\to\g/\hf
\]
for every $\overline{u}=\left(u,j_x^1s\right)\in\gamma^*\left(J^1\pi_{\left[G\right]}\right)$. It will allow us to identify the degrees of freedom mentioned above by means of the constraint
\begin{equation}\label{eq:FirstOrderGeometryConstraint}
  \kappa_u=\kappa_0
\end{equation}
for some fixed isomorphism $\kappa_0:\mR^m\to\g/\hf$, under the assumption that $P$ admits a first order geometry. So we are ready to introduce the following definition.
\begin{definition}[Wise variational problem for first order geometries]\label{def:WiseVariationalProblem}
  Let $\pi:P\to M$ be an $H$-principal bundle admitting a first order geometry associated to the pair $\left(G,H\right)$; fix an isomorphism $\kappa_0:\mR^m\to\g/\hf$. The \emph{Wise variational problem} is the triple
  \[
    \left(\pi\circ\text{pr}_1^\gamma:\gamma^*\left(J^1\pi_{\left[G\right]}\right)\to M,\lambda_{CS},\mathcal{K}_{\kappa_0}\right).
  \]
  where $\mathcal{K}_{\kappa_0}$ is the EDS induced by the constraint \eqref{eq:FirstOrderGeometryConstraint}.
\end{definition}

\begin{remark}[On the the constraint $\mathcal{K}_{\kappa_0}$]\label{rem:Kk0Constraint}
  The introduction of the constraint $\mathcal{K}_{\kappa_0}$ implies a departure from the scheme devised by Wise in the previously cited works, where the degrees of freedom associated to the $\g/\hf$-part of the Cartan connection are the ones devoted to describe the vielbein when the relationship with gravity is established. In our present description, we are choosing to use the degrees of freedom of the underlying principal bundle in order to describe it, and the r\^{o}le of the constraint $\mathcal{K}_{\kappa_0}$ is to enforce the identification of the $\g/\hf$-part of the Cartan connection with the group coordinates in the principal bundle $P$. For example, when $P=LM,H=GL\left(m\right),G=A\left(m\right)$, we can deal with the constraint $\mathcal{K}_{\kappa_0}$ in a straightforward manner. In fact, because of the Equation \eqref{eq:ImmersionDefinition}, we have that a section $\sigma:U\subset M\to\gamma^*\left(J^1\left(\tau\circ\beta\right)\right)$ is integral for this EDS (with $\kappa_0=\text{id}$, given the identification $\g/\hf=\mR^m$ we have at our disposal is this case) if and only if there exists a section $s:U\to J^1\tau$ such that
  \[
    \sigma\left(x\right)=\left(\tau_{10}\left(s\left(x\right)\right),j\left(s\left(x\right)\right)\right).
  \]
  Thus we can fulfill this constraint by using this form for the sections we consider in the variational problem.
\end{remark}

\subsection{Variational problem for gravity with basis}
\label{sec:vari-probl-grav}

It is usual \cite{MR974271} to describe gravity with a pair $\left(e^\mu_i,\omega^i_j\right)$, where $e^\mu_i$ are the components of a local basis of the tangent bundle $TM$ respect to some coordinates $x^\mu$, and $\omega^i_j$ are a set of local $1$-form, the so called \emph{spin connection forms}. This pair can be represented in terms of the bundle of frames $LM$ on the space-time: Fixing a common open domain $U\subset M$, $e_i^\mu$ gives rise to a local section $\sigma$ of $LM$ through the formula
\[
  x\in U\mapsto\sigma\left(x\right):=\left(e_1^\mu\left(x\right)\frac{\partial}{\partial x^\mu},\cdots,e_m^\mu\left(x\right)\frac{\partial}{\partial x^\mu}\right).
\]
The forms $\omega_j^i$ together with the section $\sigma$ can be used in order to define the Christoffel symbols $\Gamma^\mu_{\nu\sigma}$ of the underlying connection, through the so called \emph{vielbein postulate}
\begin{equation}\label{eq:TetradPostulate}
  \omega_j^i=e_\mu^i\left(e^\nu_j\Gamma^\mu_{\nu\sigma}+\frac{\partial e_j^\mu}{\partial x^\sigma}\right)dx^\sigma.
\end{equation}
The connection form on $LM$\footnote{Or a subbundle...} associated to $\left\{\omega_j^i\right\}$ can be retrieved by the formula \cite{Naka,KN1}
\[
  \left.\left(\omega_{LM}\right)^i_j\right|_u=e_\mu^ie^\nu_j\Gamma^\mu_{\nu\sigma}dx^\sigma+e^i_\mu de^\mu_j
\]
where $u\in LM$ is represented by the set of coordinates $\left(x^\mu,e^\mu_i\right)$. This gives rise to a locally defined action
\begin{equation}\label{eq:PalatiniLocalAction}
  S_{\text{PG}}:=\int_U\epsilon_{ijkl}\eta^{kp}e^i_\mu e^j_\nu dx^\mu\wedge dx^\nu\wedge\Omega^l_p,
\end{equation}
where $\Omega_j^i$ is the local curvature $2$-form associated to $\omega^i_j$. A constraint that should be adopted on the set of forms $\omega^i_j$ is that the underlying connection form is $\mathfrak{o}\left(3,1\right)$-valued, namely
\begin{equation}\label{eq:SkewSymmetricConnection}
  \eta^{ij}\omega^k_j+\eta^{kj}\omega_j^i=0.
\end{equation}

Before to continue, we have to deal with the constraints given by Eqs. \eqref{eq:SkewSymmetricConnection}; according to formula \eqref{eq:TetradPostulate}, it is equivalent to
\begin{equation}\label{eq:MetricityConditionLocal}
  \frac{\partial g^{\mu\nu}}{\partial x^\sigma}+g^{\mu\rho}\Gamma^\nu_{\rho\sigma}+g^{\nu\rho}\Gamma^\mu_{\rho\sigma}=0
\end{equation}
namely, the connection associated to the symbols $\Gamma^\mu_{\nu\sigma}$ is metric with respect to the metric associated to the vielbein,
\[
  g^{\mu\nu}=\eta^{ij}e^\mu_ie^\nu_j.
\]
At jet bundle level it is equivalent to the set of equations
\begin{equation}\label{eq:MetricityLocalEquation}
  \left(\eta^{kj}e_\mu^i+\eta^{ij}e_\mu^k\right)\left(-e^\mu_{j\sigma}+\frac{\partial e_j^\mu}{\partial x^\sigma}\right)=0.
\end{equation}
Thus, a solution for gravity with basis is a section
\[
  \sigma:U\subset M\to J^1\tau
\]
which is both an extremal for the functional \eqref{eq:PalatiniLocalAction} and also verifies the condition \eqref{eq:MetricityLocalEquation}.

On the other hand, there is a formulation for field theory in which the action is calculated using a Lagrangian density, that is a bundle map
\[
  \cL:J^1q\to\wedge^4\left(T^*M\right),
\]
where $q:E\to M$ is a bundle on $M$. Namely, for every section $\sigma:M\to E$ of $q$, we have a map
\[
  \cL\circ j^1\sigma:M\to\wedge^4\left(T^*M\right),
\]
which is a $4$-form on $M$, and so you can integrate,
\[
  S_{PG}\left[\sigma\right]:=\int_M\cL\circ j^1\sigma.
\]
We want to find this kind of formulation for gravity with basis; in order to carry out this task, it is necessary to identify the bundle $E$, and then to write down a Lagrangian density on this bundle. Because $e_i^\alpha$ are part of the degrees of freedom for this flavor of gravity, the bundle $E$ should include it, namely, we are searching for a bundle of the form
\[
  E=LM\times_M\star,
\]
where $\star$ stands for the degrees of freedom associated to $\omega_j^i$. As we said before, these data corresponds to the specification of a linear connection on $LM$; from a geometrical viewpoint, the relevant bundle in this regard is the connection bundle $\overline{\tau}:C\left(LM\right)\to M$, defined in such a way that it fits in the following diagram
\[
  \begin{tikzcd}[ampersand replacement=\&,row sep=1.3cm,column sep=1.6cm]
    J^1\tau
    \arrow{r}{\tau_{10}}
    \arrow[swap]{d}{p_{GL\left(m\right)}^{J^1\tau}}
    \&
    LM
    \arrow{d}{\tau}
    \\
    C\left(LM\right)
    \arrow{r}{\overline{\tau}}
    \&
    M
  \end{tikzcd}      
\]
Therefore, a section of the connection bundle is equivalent to a equivariant section of the projection $\tau_{10}$. Now, recall that an element of $J^1\tau$ is a linear map $m:T_xM\to T_u\left(LM\right)$ such that
\[
  T_u\tau\circ m=\text{id}_{T_xM}.
\]
Using the data $\omega_j^i$, this section can be constructed according to the formula
\begin{align*}
  \Gamma\left(u\right)&:=dx^\mu\otimes\left(\frac{\partial}{\partial x^\mu}\right)_u^H\\
  &=dx^\mu\otimes\left(\frac{\partial}{\partial x^\mu}-e_j^\rho\Gamma^\sigma_{\rho\mu}\frac{\partial}{\partial e^\sigma_j}\right)
\end{align*}
where $u=\left(X_1,\cdots,X_m\right)\in LM$ and the symbols $\Gamma^\mu_{\nu\sigma}$ are calculated through Equation \eqref{eq:TetradPostulate}; here $\left(\cdot\right)^H_u$ indicates the horizontal lift of a vector field to $T_uLM$. Thus the bundle describing this flavor of gravity will become
\[
  E=LM\times_M C\left(LM\right).
\]
Moreover, it can be proved \cite{MR0315624,springerlink:10.1007/PL00004852} that this bundle is isomorphic as affine bundle on $LM$ to the first jet bundle $J^1\tau$; in terms of the induced coordinates $\left(x^\mu,e_i^\nu,e^\sigma_{k\rho}\right)$, this correspondence is given by the formula
\begin{equation}\label{eq:OmegaVsJets}
  \Gamma^\mu_{\nu\sigma}=-e^k_\nu e^\mu_{k\sigma}.
\end{equation}
Then, the relevant bundle in the description of gravity with basis will be the first jet bundle of the frame bundle.

Having identified the basic bundle, we need to write down a Lagrangian density and to encode the constraint imposed by Eq. \eqref{eq:SkewSymmetricConnection}. In order to proceed, let us define the \emph{canonical connection form}
\[
  \left.\theta_{J^1\tau}\right|_{j_x^1s}:=e^i_\mu\left(de^\mu_j-e^\mu_{j\rho}dx^\rho\right)\otimes E^j_i
\]
where $j_x^1s=\left(x^\mu,e^\nu_j,e^\sigma_{k\rho}\right)$ are the induced coordinates on $J^1\tau$; it can be proved that this formula defines a $\mathfrak{gl}\left(m\right)$-valued $1$-form on $J^1\tau$, and that it becomes a connection form on the $GL\left(m\right)$-principal bundle
\[
  p_{GL\left(m\right)}^{J^1\tau}:J^1\tau\to C\left(LM\right).
\]
Let
\[
  \rho\left(x\right):=\left(x^\mu,e^\mu_i\left(x\right),e^\nu_{k\sigma}\left(x\right)\right)
\]
be a local description for a section $\rho:U\to J^1\tau$; then from the tetrad postulate \eqref{eq:TetradPostulate} and using Eq. \eqref{eq:OmegaVsJets}, we have that
\[
  \rho^*\theta_{J^1\tau}=\left(\omega_{LM}\right)^i_j\otimes E^j_i.
\]
Let $\Omega_{LM}\in\Omega^2\left(J^1\tau,\mathfrak{gl}\left(m\right)\right)$ be the curvature form associated to $\gamma$; naturatility of the pullback implies that
\[
  \rho^*\Theta_{J^1\tau}=\left(\Omega_{LM}\right)^i_j\otimes E^j_i
\]
at curvature forms level. The formula
\[
  \left.\varphi\right|_{j_x^1s}:=e^i_\mu dx^\mu\otimes e_j
\]
also defines a global $\mR^m$-valued $1$-form on $J^1\tau$; out from this form and the bilinear map $\eta:\mR^m\times\mR^m\to\mR$, we can define a $GL\left(m\right)$-valued $\left(m-2\right)$-form as follows: First, define the $\wedge^{m-2}\mR^m$-valued $\left(m-2\right)$-form given by the formula
\[
  \varphi^{m-2}:=\overbrace{\varphi\wedge\cdots\wedge\varphi}^{m-2}.
\]
Then, use the star map $\star:\wedge^{m-2}\mR^m\to\wedge^{2}\mR^m$ determined by the bilinear form $\eta$ to define a $\wedge^{2}\mR^m$-valued $\left(m-2\right)$-form $\star\left(\varphi^{m-2}\right)$; finally, use the isomorphism $\eta:\mR^m\to\left(\mR^m\right)^*$ to define a map
\[
  \eta^\sharp:\wedge^2\mR^m\to\left(\mR^m\right)^*\otimes\mR^m=\mathfrak{gl}\left(m\right)
\]
giving rise to a $\mathfrak{gl}\left(m\right)$-valued $\left(m-2\right)$-form, called \emph{Sparkling form}
\[
  \varphi_{m-2}^\sharp:=\eta^\sharp\left(\varphi^{m-2}\right).
\]
Additionally, $\eta$ induces a bilinear pairing between $\mathfrak{gl}\left(m\right)$-forms on $J^1\tau$. Using the naturality properties of the pullback, we can conclude that (in the case $m=4$)
\[
  \rho^*\left(\eta\left(\varphi_{2}^\sharp\stackrel{\wedge}{,}\Gamma\right)\right)=\epsilon_{ijkl}\eta^{kp}e^i_\mu e^j_\nu dx^\mu\wedge dx^\nu\wedge\Omega^l_p;
\]
also, for the case $m=3$ it would result
\[
  \rho^*\left(\eta\left(\varphi_{1}^\sharp\stackrel{\wedge}{,}\Gamma\right)\right)=\epsilon_{ijk}\eta^{kp}e^i_\mu dx^\mu\wedge\Omega^j_p.
\]
These considerations allows us to define the \emph{Lagrangian form for gravity with basis}, that becomes
\begin{equation}\label{eq:GravityWithBasisLagrangian}
  \lambda_{\text{PG}}:=\eta\left(\varphi_{m-2}^\sharp\stackrel{\wedge}{,}\Gamma\right).
\end{equation}
Strictly speaking, $\lambda_{PG}$ is not a $\tau_1$-horizontal form on $J^1\tau$, and so it is not associated to a Lagrangian density; if we wanted to deal with the underlying variational problem as an usual variational problem, we would have to lift it to the jet bundle $J^1\tau_1$ (or perhaps to the subbundle $J^2\tau$). Instead, we will treat it as a Griffiths variational problem, in order to avoid the introduction of additional variables.

Regarding the constraint \eqref{eq:SkewSymmetricConnection}, let
\[
  \mathfrak{gl}\left(m\right)=\pf\oplus\kf
\]
be the decomposition of the general linear Lie algebra in terms of the $\pm1$-eigenspaces of the involutive operator
\[
  A\mapsto\eta A^T\eta;
\]
then this constraint is recovered by the equation
\begin{equation}\label{eq:GlobalMetricity}
  \rho^*\left(\pi_\pf\circ\theta_{J^1\tau}\right)=0,
\end{equation}
where $\pi_\pf:\mathfrak{gl}\left(m\right)\to\pf$ is the associated projection onto the factor $\pf$.

\begin{definition}[Variational problem for gravity with basis]\label{def:VarProblemPalatini}
  The \emph{variational problem for gravity with basis} is the triple
  \[
    \left(\tau_1:J^1\tau\to M,\lambda_{\text{PG}},\cI_{\text{PG}}\right)
  \]
  where $\cI_{\text{PG}}$ is the exterior differential system on $J^1\tau$ generated by the forms \eqref{eq:GlobalMetricity}.
\end{definition}

\section{Extension of generalized Cartan connections and Chern-Simons field theory}
\label{sec:extens-gener-cart-1}

In this section we will formulate a Chern-Simons field theory on a principal bundle with structure group $A\left(m\right)$, and we will relate it with the usual Chern-Simons field theory on a $K$-structure using the operation of extension for generalized Cartan connections.

\subsection{Chern-Simons variational problem with Lie group $A\left(3,\mR\right)$}
\label{sec:chern-simons-vari}

In the present section we will work with the affine frame bundle on a manifold of dimension $m=3$. Using Corollary \ref{cor:wise-vari-probl} and the geometrical constructions performed in Appendix \ref{sec:lift-conn-affine}, we will define a Griffiths variational problem on $\gamma^*\left(J^1\left(\tau\circ\beta\right)\right)$ in order to represent Chern-Simons gauge theory. To proceed, let us define the bilinear form $\left<\cdot,\cdot\right>:\mathfrak{gl}\left(m\right)\times\mathfrak{gl}\left(m\right)\to\mR$ given by
\[
  \left<a,b\right>:=a^k_ib^i_k.
\]
\begin{lemma}
  The bilinear form $\left<\cdot,\cdot\right>$ is non degenerate and $GL\left(3\right)$-invariant.
\end{lemma}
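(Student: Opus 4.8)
The plan is to verify the two asserted properties of the pairing $\left<a,b\right>=a^k_ib^i_k$ directly, since both reduce to standard facts about the trace form on $\gl\left(m\right)$. First I would observe that $\left<a,b\right>=a^k_ib^i_k=\tr{ab}$, the trace of the matrix product, so the claims are really the familiar statements that the trace form is nondegenerate and $\Ad$-invariant on $\gl\left(m\right)$. Writing the pairing as a trace makes both verifications short and conceptual rather than index-heavy.

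For nondegeneracy, I would argue that if $\left<a,b\right>=0$ for all $b$, then in particular testing against the elementary matrices $E^i_k$ (the matrix with a single $1$ in row $i$, column $k$) gives $a^k_i=0$ for every pair $\left(i,k\right)$, hence $a=0$. Equivalently, one may exhibit the form explicitly: pairing $E^i_j$ with $E^k_l$ yields $\delta^i_l\delta^k_j$, so the Gram matrix of the form in the basis of elementary matrices is a permutation matrix and therefore invertible. Either phrasing shows the radical is trivial.

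For $GL\left(3\right)$-invariance I would use the trace description together with the $\Ad$-action $\Ad_g(a)=gag^{-1}$. The computation is the one-line identity
\[
  \left<\Ad_g(a),\Ad_g(b)\right>=\tr{gag^{-1}gbg^{-1}}=\tr{gabg^{-1}}=\tr{ab}=\left<a,b\right>,
\]
using cyclicity of the trace. This establishes invariance under the full group; differentiating at the identity (or repeating the cyclicity argument with $\ad{x}$) gives the infinitesimal version $\left<\ad{x}a,b\right>+\left<a,\ad{x}b\right>=0$ should it be needed later.

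I do not anticipate a genuine obstacle here; the only point requiring minimal care is bookkeeping of the index contraction $a^k_ib^i_k$ to confirm it is exactly $\tr{ab}$ rather than $\tr{ab^T}$ or some transpose variant. Once that identification is pinned down, both properties follow from nondegeneracy and cyclicity of the trace, which are standard.
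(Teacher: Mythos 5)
Your proof is correct: the identification $\left<a,b\right>=a^k_ib^i_k=\tr{ab}$ is right (with upper index as row, $(ab)^k_k=a^k_ib^i_k$), and nondegeneracy via elementary matrices plus $\Ad$-invariance via cyclicity of the trace are exactly the standard arguments. The paper states this lemma without any proof, so there is nothing to compare against; your argument is the expected one and fills the gap cleanly.
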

Recall that $K\subset GL\left(m,\mR\right)$ is the Lorentz group defined by the matrix $\eta$ (see Equation \eqref{eq:LorentzGroup}); as always, $\kf$ will indicate its Lie algebra. Then we have the isomorphism \cite{Wise_2010}
\[
  \kf\simeq\mR^3
\]
given by
\[
  \xi=\left(\xi^i\right)\mapsto a_i^j:=\eta^{jk}\epsilon_{ikl}\xi^l.
\]
It allows us to use the prescription
\[
  \left<\left(a,\xi\right),\left(b,\zeta\right)\right>:=\left<a,\zeta\right>+\left<b,\xi\right>
\]
for the extension of the bilinear form defined above to $\mathfrak{gl}\left(3\right)\ltimes\mR^3$. Thus we have a quadratic form
\[
  q:\mathfrak{a}\left(3\right)\to\mR:\left(a,\xi\right)\mapsto\left<\left(a,\xi\right),\left(a,\xi\right)\right>,
\]
and using Equation \eqref{eq:TransgressionBilinearForm}, we obtain the following definition.
\begin{definition}[Chern-Simons Lagrangian $3$-form]\label{def:chern-simons-vari}
  The \emph{Lagrangian form for Chern-Simons variational problem} is the $3$-form $\cL_{CS}\in \Omega^3\left(\gamma^*\left(J^1\left(\tau\circ\beta\right)\right)\right)$ defined through
  \[
    \cL_{CS}:=\left<\theta^*_{J^1\left(\tau\circ\beta\right)}\stackrel{\wedge}{,}\Theta^*_{J^1\left(\tau\circ\beta\right)}\right>-\frac{1}{6}\left<\theta^*_{J^1\left(\tau\circ\beta\right)}\stackrel{\wedge}{,}\left[\theta^*_{J^1\left(\tau\circ\beta\right)}\stackrel{\wedge}{,}\theta^*_{J^1\left(\tau\circ\beta\right)}\right]\right>,
  \]
  where $\theta^*_{J^1\left(\tau\circ\beta\right)}\in\Omega^1\left(\gamma^*\left(J^1\left(\tau\circ\beta\right)\right)\right)$ is the pullback to $\gamma^*\left(J^1\left(\tau\circ\beta\right)\right)$ of the canonical connection on the principal bundle $p_{A\left(3\right)}^{J^1\left(\tau\circ\beta\right)}:J^1\left(\tau\circ\beta\right)\to C\left(AM\right)$ (see Equation \eqref{eq:PullbackCanonicalConnection} above).
\end{definition}
Because $\theta^*_{J^1\left(\tau\circ\beta\right)}$ is $\mathfrak{gl}\left(3\right)\oplus\mR^3$-valued, we can write
\[
  \theta^*_{J^1\left(\tau\circ\beta\right)}=o^*_{J^1\tau}+e
\]
Using Equation \eqref{eq:LocalCoordinatesCanonicalFormAffine} we see that, in local terms, this form becomes
\[
  \theta^*_{J^1\left(\tau\circ\beta\right)}=e_\beta^j\left(de_i^\beta-e^\beta_{i\alpha}dx^\alpha\right)\otimes E^i_j-e^i_\beta v^\beta_{\alpha}dx^\alpha\otimes e_i,
\]
so that
\[
  o^*_{J^1\left(\tau\circ\beta\right)}=e_\beta^j\left(de_i^\beta-e^\beta_{i\alpha}dx^\alpha\right)\otimes E^i_j,\qquad e=-e^i_\beta v^\beta_{\alpha}dx^\alpha\otimes e_i.
\]
Also,
\[
 \Theta^*_{J^1\left(\tau\circ\beta\right)}=O^*_{J^1\left(\tau\circ\beta\right)}+E.
\]
Now, the splitting $\mathfrak{a}\left(3\right)=\gl\left(3\right)\oplus\mR^3$ has the following properties
\[
  \left[\gl\left(3\right),\gl\left(3\right)\right]\subset\gl\left(3\right),\qquad\left[\gl\left(3\right),\mR^3\right]\subset\mR^3,\qquad\left[\mR^3,\mR^3\right]=0
\]
and
\[
  \gl\left(3\right)\perp\gl\left(3\right),\qquad\mR^3\perp\mR^3,
\]
so that we can obtain a result that is the equivalent in this context to Proposition $1$ in \cite{wise2009symmetric}.
\begin{proposition}\label{prop:ChernSimonsRewrittenLagrangian}
  The Chern-Simons Lagrangian can be written as
  \[
    \cL_{\text{CS}}=\left<e\stackrel{\wedge}{,}O^*_{J^1\left(\tau\circ\beta\right)}\right>.
  \]
\end{proposition}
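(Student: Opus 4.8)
The plan is to reproduce, in this affine setting, the mechanism behind Proposition~$1$ of \cite{wise2009symmetric}: expand the curvature via the structure equation, substitute the decomposition $\mathfrak{a}\left(3\right)=\gl\left(3\right)\oplus\mR^3$, and then use the two pieces of data recorded just above the statement — the bracket relations $\left[\gl\left(3\right),\gl\left(3\right)\right]\subset\gl\left(3\right)$, $\left[\gl\left(3\right),\mR^3\right]\subset\mR^3$, $\left[\mR^3,\mR^3\right]=0$ and the orthogonality relations $\gl\left(3\right)\perp\gl\left(3\right)$, $\mR^3\perp\mR^3$ — to discard every contribution that does not pair the $\gl\left(3\right)$-part against the $\mR^3$-part. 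Writing $\theta^*,\Theta^*,o^*,e,O^*,E$ for the objects $\theta^*_{J^1\left(\tau\circ\beta\right)}$, etc., the first step is to read off from $\Theta^*=d\theta^*+\tfrac12\left[\theta^*\stackrel{\wedge}{,}\theta^*\right]$ and the bracket relations the component identities
\[
  O^*=d o^*+\tfrac12\left[o^*\stackrel{\wedge}{,}o^*\right],\qquad E=d e+\left[o^*\stackrel{\wedge}{,}e\right],
\]
the second one using $\left[\mR^3,\mR^3\right]=0$.

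Next I would treat the two summands of $\cL_{CS}$ separately. For the quadratic term, substituting $\theta^*=o^*+e$, $\Theta^*=O^*+E$ and killing the diagonal pairings $\left<o^*\stackrel{\wedge}{,}O^*\right>$ and $\left<e\stackrel{\wedge}{,}E\right>$ by orthogonality leaves
\[
  \left<\theta^*\stackrel{\wedge}{,}\Theta^*\right>=\left<o^*\stackrel{\wedge}{,}E\right>+\left<e\stackrel{\wedge}{,}O^*\right>.
\]
For the cubic term, the $\mR^3$--$\mR^3$ bracket drops out so that $\left[\theta^*\stackrel{\wedge}{,}\theta^*\right]=\left[o^*\stackrel{\wedge}{,}o^*\right]+2\left[o^*\stackrel{\wedge}{,}e\right]$, and pairing against $\theta^*=o^*+e$ while again discarding the diagonal pieces gives
\[
  \left<\theta^*\stackrel{\wedge}{,}\left[\theta^*\stackrel{\wedge}{,}\theta^*\right]\right>=2\left<o^*\stackrel{\wedge}{,}\left[o^*\stackrel{\wedge}{,}e\right]\right>+\left<e\stackrel{\wedge}{,}\left[o^*\stackrel{\wedge}{,}o^*\right]\right>.
\]

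The crux is to collapse these cross-pairings, and this is where $\Ad$-invariance of $\left<\cdot,\cdot\right>$ enters. Infinitesimal invariance makes the trilinear form $\left(u,v,w\right)\mapsto\left<\left[u,v\right],w\right>$ totally antisymmetric; combined with the graded symmetry of the wedge of three $1$-forms, this produces the key identity $\left<o^*\stackrel{\wedge}{,}\left[o^*\stackrel{\wedge}{,}e\right]\right>=\left<e\stackrel{\wedge}{,}\left[o^*\stackrel{\wedge}{,}o^*\right]\right>$, so that the cubic term equals $3\left<e\stackrel{\wedge}{,}\left[o^*\stackrel{\wedge}{,}o^*\right]\right>$ and the $-\tfrac16$ prefactor turns it into $-\tfrac12\left<e\stackrel{\wedge}{,}\left[o^*\stackrel{\wedge}{,}o^*\right]\right>$. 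Feeding $E=d e+\left[o^*\stackrel{\wedge}{,}e\right]$ into $\left<o^*\stackrel{\wedge}{,}E\right>$ and integrating by parts once through $d\left<o^*\stackrel{\wedge}{,}e\right>=\left<d o^*\stackrel{\wedge}{,}e\right>-\left<o^*\stackrel{\wedge}{,}d e\right>$ rewrites $\left<o^*\stackrel{\wedge}{,}E\right>$ in terms of $\left<e\stackrel{\wedge}{,}O^*\right>$; the half-bracket produced here is exactly cancelled by the cubic contribution, and assembling everything yields $\cL_{CS}=\left<e\stackrel{\wedge}{,}O^*\right>$ up to an overall normalization constant and the exact term $d\left<o^*\stackrel{\wedge}{,}e\right>$, both of which are immaterial for the associated Griffiths variational problem.

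I expect the antisymmetry identity for $\left<\cdot,\cdot\right>$ to be the delicate point, since it requires genuine $\Ad\left(A\left(3\right)\right)$-invariance of the extended form on all of $\mathfrak{a}\left(3\right)$ rather than merely the $GL\left(3\right)$-invariance established in the preceding lemma. I would therefore first verify that the prescription $\left<\left(a,\xi\right),\left(b,\zeta\right)\right>:=\left<a,\zeta\right>+\left<b,\xi\right>$ is $\ad$-invariant against the semidirect bracket of $\gl\left(3\right)\ltimes\mR^3$, using the $\epsilon$-identification $\kf\simeq\mR^3$ to compute $\left<\left[u,v\right],w\right>$ explicitly. The remaining care is purely bookkeeping: tracking the signs in the graded-symmetric wedge pairings and confirming that the surviving discrepancy from $\left<e\stackrel{\wedge}{,}O^*\right>$ is only the harmless boundary term $d\left<o^*\stackrel{\wedge}{,}e\right>$.
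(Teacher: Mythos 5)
Your route is, as far as one can tell, exactly the intended one: the paper states this proposition without any written proof, offering only the bracket relations, the isotropy of $\gl\left(3\right)$ and $\mR^3$, and a pointer to Proposition $1$ of Wise; your expansion of $\Theta^*_{J^1\left(\tau\circ\beta\right)}$ into the component curvatures, the elimination of the diagonal pairings, the collapse of the cubic term by total antisymmetry, and the final integration by parts is the standard computation that fills that gap. You are also right --- and more careful than the statement itself --- in observing that what actually comes out is $2\left<e\stackrel{\wedge}{,}O^*_{J^1\left(\tau\circ\beta\right)}\right>-d\left<o^*_{J^1\left(\tau\circ\beta\right)}\stackrel{\wedge}{,}e\right>$, so the asserted equality holds only up to an overall factor and an exact term; this is immaterial for the variational problem, but it is a genuine discrepancy with the literal statement and with the local coordinate formula displayed immediately after it.

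The one point where your plan would hit trouble is precisely the verification you defer to the end. The extended pairing $\left<\left(a,\xi\right),\left(b,\zeta\right)\right>=\left<a,\zeta\right>+\left<b,\xi\right>$, built from the trace form on $\gl\left(3\right)$ and the identification $\xi\mapsto\eta^{jk}\epsilon_{ikl}\xi^l$ of $\mR^3$ with $\kf$, is \emph{not} $\text{ad}$-invariant for the full semidirect bracket of $\mathfrak{a}\left(3\right)$: invariance under $\text{ad}\left(y,0\right)$ requires $\left<\left[y,x\right],\zeta\right>+\left<x,y\zeta\right>=0$ for all $x,y\in\gl\left(3\right)$ and $\zeta\in\mR^3$, and this already fails for $y=\text{id}$, where the first term vanishes and the second equals $\left<x,\zeta\right>$. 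The $\epsilon$-identification intertwines the vector and adjoint representations only for the subalgebra $\kf$, so the pairing is invariant only under $\kf\ltimes\mR^3=\mathfrak{iso}\left(2,1\right)$, not under $\mathfrak{a}\left(3\right)$. Consequently your key identity $\left<o^*\stackrel{\wedge}{,}\left[o^*\stackrel{\wedge}{,}e\right]\right>=\left<e\stackrel{\wedge}{,}\left[o^*\stackrel{\wedge}{,}o^*\right]\right>$ is valid only modulo terms involving the $\pf$-component of $o^*_{J^1\left(\tau\circ\beta\right)}$, i.e. only after pulling back along admissible sections, where the constraint $\sigma^*\left(\pi_\pf\circ\theta^*_{J^1\left(\tau\circ\beta\right)}\right)=0$ kills those terms. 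You should either restrict the claimed identity to the constraint locus or carry the $\pf$-correction terms explicitly; as an unconditional identity of forms on $\gamma^*\left(J^1\left(\tau\circ\beta\right)\right)$ it does not hold.
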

In local coordinates, this Lagrangian becomes
\[
  \cL_{\text{CS}}=\epsilon_{jkl}\eta^{kp}v^\nu_\beta e^j_\nu dx^\beta\wedge\left(O_{J^1\tau}^*\right)^l_p.
\]
It is interesting to note that, except by the factors $v^\alpha_\beta$, this Lagrangian is equivalent to the Lagrangian for $2+1$-gravity with basis; we need to take care of them, and it will be done through the imposition of constraints.

So, it is necessary to prescribe the set of constraints that sections $\sigma:U\subset M\to\gamma^*\left(J^1\left(\tau\circ\beta\right)\right)$ should obey in order to be evaluated in the action associated to $\cL_{CS}$. The first set of constraints we need to consider are those imposed by Equation \eqref{eq:SkewSymmetricConnection} above; in order to achieve it, let us consider the decomposition
\[
  \mathfrak{a}\left(3\right)=\left(\pf+0\right)\oplus\left(\kf\oplus\mR^3\right);
\]
accordingly, let
\[
  \pi_\pf:\mathfrak{a}\left(3\right)\to\pf+0,\qquad\pi_\kf:\mathfrak{a}\left(3\right)\to\kf\oplus\mR^3
\]
be the corresponding projectors. Then we will have that
\[
  \pi_\pf\circ\theta^*_{J^1\left(\tau\circ\beta\right)}=\eta^{ik}\left(o^*_{J^1\left(\tau\circ\beta\right)}\right)_k^j+\eta^{jk}\left(o^*_{J^1\left(\tau\circ\beta\right)}\right)_k^i,
\]
so that this set of forms is suitable for the incarnation of Equation \eqref{eq:SkewSymmetricConnection} in this setting. 

As we mentioned above, another constraint to be taken into account has to do with coordinates $v^\alpha_\beta$; the most natural thing is to use the map $j:J^1\tau\to J^1\left(\tau\circ\beta\right)$ discussed in Example \ref{example:AffineConnections}, where the classical notion for affine connection was introduced. 
In particular, it was proved there (see Equation \eqref{eq:jInLocalTerms}) that the image set of the map $j$ is described by the equations
\[
  v^\alpha=0,\quad v^\alpha_\beta=-e_\beta^ie_i^\alpha.
\]
Then the constraint in this case deals with the extra degrees of freedom $v^\alpha_\beta$ through this map; concretely, we are imposing the form
\[
  x\mapsto\left(\tau_{10}\left(s\left(x\right)\right),j\left(s\left(x\right)\right)\right)\in\gamma^*\left(J^1\left(\tau\circ\beta\right)\right)
\]
for the allowed sections in the variational problem with Lagrangian $\cL_{\text{CS}}$, where $s:U\subset M\to J^1\tau$ is a section of $\tau_1:J^1\tau\to M$.

\begin{definition}[Constraints for Chern-Simons variational problem on $A\left(3,\mR\right)$]
  We will say that a section $\sigma:U\subset M\to\gamma^*\left(J^1\left(\tau\circ\beta\right)\right)$ is \emph{admissible for the Chern-Simons variational problem} if and only if
  \[
    \sigma\left(x\right)=\left(\tau_{10}\left(s\left(x\right)\right),j\left(s\left(x\right)\right)\right)
  \]
  for some section $s:U\subset M\to J^1\tau$ and also
  \[
    \sigma^*\left(\pi_\pf\circ\theta^*_{J^1\left(\tau\circ\beta\right)}\right)=0.
  \]
\end{definition}
The admissible section will have the following property.
\begin{proposition}\label{prop:JMapProperties}
  Let $\theta_{J^1\left(\tau\circ\beta\right)}\in\Omega^1\left(J^1\left(\tau\circ\beta\right),\mathfrak{a}\left(m\right)\right)$ be the canonical connection on $J^1\left(\tau\circ\beta\right)$ and $\theta_{J^1\tau}\in\Omega^1\left(J^1\tau,\gl\left(m\right)\right)$ be the canonical connection form on $J^1\tau$. Then
  \[
    j^*\theta_{J^1\left(\tau\circ\beta\right)}=\theta_{J^1\tau}+\tau_{10}^*\varphi.
  \]
\end{proposition}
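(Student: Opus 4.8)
The plan is to exploit the two-step description of $j$ contained in its definition \eqref{eq:ImmersionDefinition}, namely $j\left(j^1_xs\right)=j^1_x(\gamma\circ s)-\varphi|_{s(x)}$, which exhibits $j$ as the prolongation $j^1\gamma\colon J^1\tau\to J^1(\tau\circ\beta)$ of the inclusion $\gamma\colon LM\hookrightarrow AM$ followed by an affine translation, in the fibres of $(\tau\circ\beta)_{10}\colon J^1(\tau\circ\beta)\to AM$, by the soldering form $\varphi$ regarded as a $V(AM)$-valued $1$-form through $\mR^m\subset\mathfrak{a}(m)$. Accordingly I would split $j^*\theta_{J^1(\tau\circ\beta)}$ into a prolongation part and a translation part and treat them separately.

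For the prolongation part I would invoke Lemma \ref{lem:CanonicalConnectionNatural}. The map $\gamma$ realises $LM$ as a $GL(m)$-principal subbundle of the $A(m)$-principal bundle $AM$ --- its image consists precisely of the affine frames with vanishing translation component --- so the lemma applies with $Q=AM$, $R_\zeta=\gamma(LM)$, $i_\zeta=\gamma$ and yields $(j^1\gamma)^*\theta_{J^1(\tau\circ\beta)}=\theta_{J^1\tau}$, the a priori $\mathfrak{a}(m)$-valued pullback landing automatically in $\mathfrak{gl}(m)$ because a canonical connection restricts to the canonical connection of a subbundle. For the translation part I would use that the canonical connection is affine along the fibres of $(\tau\circ\beta)_{10}$: shifting a jet by a tensorial $V(AM)$-valued $1$-form $\psi$ shifts the pulled-back canonical connection by the image of $\psi$ in $\mathfrak{a}(m)$ --- this is the very affine structure that makes $\Gamma\mapsto\Gamma^*\theta$ send translation of connections to addition of tensorial forms. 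Applying this with $\psi=-\varphi$ and adding the two contributions gives $j^*\theta_{J^1(\tau\circ\beta)}=\theta_{J^1\tau}+\tau_{10}^*\varphi$.

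The step I expect to be delicate is the translation part, specifically fixing the correct sign: because $\theta$ is built from the difference $Tp_{10}-Ts\circ Tp_1$, a jet shift by $-\varphi$ re-enters the contact form with the opposite sign, producing $+\tau_{10}^*\varphi$, and this is easy to get wrong at the level of the identification $V(AM)\cong AM\times\mathfrak{a}(m)$. To make the argument airtight I would confirm it in the adapted coordinates of Appendix \ref{sec:lift-conn-affine}. Writing $\theta_{J^1(\tau\circ\beta)}=e^i_\mu\!\left(de^\mu_j-e^\mu_{j\rho}dx^\rho\right)\otimes E^j_i+e^i_\mu\!\left(dv^\mu-v^\mu_\rho dx^\rho\right)\otimes e_i$ from \eqref{eq:LocalCoordinatesCanonicalFormAffine} and substituting the coordinate expression \eqref{eq:jInLocalTerms} for $j$ (so that $v^\mu\mapsto0$, hence $dv^\mu\mapsto0$, and $v^\mu_\rho\mapsto-e^\mu_ie^i_\rho=-\delta^\mu_\rho$), the $\mathfrak{gl}(m)$-summand is carried verbatim onto $\theta_{J^1\tau}$, while the $\mR^m$-summand becomes $e^i_\mu\delta^\mu_\rho\,dx^\rho\otimes e_i=e^i_\mu dx^\mu\otimes e_i=\tau_{10}^*\varphi$. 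This coordinate check reproduces the claimed identity and removes any sign ambiguity.
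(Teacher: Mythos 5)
Your proposal is correct, and its decisive step --- substituting the local expression \eqref{eq:jInLocalTerms} for $j$ into the coordinate formula \eqref{eq:LocalCoordinatesCanonicalFormAffine} for $\theta_{J^1\left(\tau\circ\beta\right)}$, so that the $\gl\left(m\right)$-summand survives as $\theta_{J^1\tau}$ while $j^*\left(dv^\beta-v^\beta_\alpha dx^\alpha\right)=\delta^\beta_\alpha dx^\alpha$ produces $\tau_{10}^*\varphi$ --- is exactly the paper's own proof. The preliminary conceptual decomposition of $j$ into the prolongation $j^1\gamma$ (handled by Lemma \ref{lem:CanonicalConnectionNatural}) plus an affine shift by $-\varphi$ is a sound and illuminating gloss that the paper omits, but since you ultimately ground the sign in the same coordinate check, the two arguments coincide in substance.
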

\begin{proof}
  We will give an argument in local terms. Using Equation \eqref{eq:LocalCoordinatesCanonicalFormAffine} and the local expression \eqref{eq:jInLocalTerms}, it follows that
\[
  j^*\theta_{J^1\left(\tau\circ\beta\right)}=e_\beta^j\left(de_i^\beta-e^\beta_{i\alpha}dx^\alpha\right)\otimes E^i_j+e^i_\beta\delta^\beta_{\alpha}dx^\alpha\otimes e_i=\theta_{J^1\tau}+\tau_{10}^*\varphi,
\]
as required.
\end{proof}

With these elements at hand, it is immediate to formulate the variational problem we will use to represent Chern-Simons gauge theory in this context.

\begin{definition}[Chern-Simons variational problem on $A\left(3,\mR\right)$]\label{def:ChernSimonsA3}
  It is the variational problem prescribed by the action
  \[
    \sigma\mapsto\int_U\sigma^*\left(\cL_{CS}\right)
  \]
  for $\sigma:U\subset M\to\gamma^*\left(J^1\left(\tau\circ\beta\right)\right)$ an admissible section.
\end{definition}

From Proposition \ref{prop:ChernSimonsRewrittenLagrangian} and \ref{prop:JMapProperties} we obtain the correspondence between Chern-Simons field theory and gravity with basis in this setting.

\begin{theorem}
  The extremals of the Chern-Simons variational problem on $A\left(3,\mR\right)$ are in one-to-one correspondence with the extremals of the variational problem described in Definition \ref{def:VarProblemPalatini}.
\end{theorem}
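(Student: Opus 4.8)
The plan is to produce the bijection directly at the level of sections and then verify separately that it matches the constraints and the two Lagrangian forms. First I would observe that, by the very shape of an admissible section in Definition \ref{def:ChernSimonsA3}, every admissible $\sigma\colon U\to\gamma^*\left(J^1\left(\tau\circ\beta\right)\right)$ is of the form $\sigma\left(x\right)=\left(\tau_{10}\left(s\left(x\right)\right),j\left(s\left(x\right)\right)\right)$ for a section $s\colon U\to J^1\tau$ of $\tau_1$. Since $j$ is an immersion and $\text{pr}_2^\gamma\circ\sigma=j\circ s$, the section $s$ is recovered from $\sigma$, so $s\mapsto\sigma$ is a bijection between sections of $\tau_1$ and sections of $\gamma^*\left(J^1\left(\tau\circ\beta\right)\right)$ having this form. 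This is the candidate correspondence; it remains to match the residual constraint with the metricity EDS and to match the two actions.

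For the constraints I would use Proposition \ref{prop:JMapProperties}. Pulling back along $\sigma$, and recalling that $\theta^*_{J^1\left(\tau\circ\beta\right)}=\left(\text{pr}_2^\gamma\right)^*\theta_{J^1\left(\tau\circ\beta\right)}$ together with $\text{pr}_2^\gamma\circ\sigma=j\circ s$, one gets
\[
  \sigma^*\theta^*_{J^1\left(\tau\circ\beta\right)}=s^*\,j^*\theta_{J^1\left(\tau\circ\beta\right)}=s^*\left(\theta_{J^1\tau}+\tau_{10}^*\varphi\right).
\]
Because $\varphi$ is $\mR^m$-valued while $\pf\subset\gl\left(m\right)$, the projection $\pi_\pf$ annihilates the $\tau_{10}^*\varphi$ term, so that $\sigma^*\left(\pi_\pf\circ\theta^*_{J^1\left(\tau\circ\beta\right)}\right)=s^*\left(\pi_\pf\circ\theta_{J^1\tau}\right)$. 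Hence the second admissibility condition on $\sigma$ holds if and only if $s$ is integral for the exterior differential system $\cI_{\text{PG}}$ generated by \eqref{eq:GlobalMetricity}; the constraint sets of the two problems are thus identified under $s\mapsto\sigma$.

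For the Lagrangians I would start from Proposition \ref{prop:ChernSimonsRewrittenLagrangian}, which gives $\cL_{\text{CS}}$ in the local form $\epsilon_{jkl}\eta^{kp}v^\nu_\beta e^j_\nu\,dx^\beta\wedge\left(O^*_{J^1\tau}\right)^l_p$. On the image of $j$, Equation \eqref{eq:jInLocalTerms} forces $v^\nu_\beta=-e^\nu_i e^i_\beta$, whence $v^\nu_\beta e^j_\nu=-e^j_\beta$; substituting and using the antisymmetry of $\epsilon$ to absorb the resulting sign, together with the fact that the $\gl\left(m\right)$-curvature $O^*_{J^1\tau}$ pulls back to $\Omega_{LM}$ along $s$, one obtains $\sigma^*\cL_{\text{CS}}=s^*\lambda_{\text{PG}}$, exactly the Palatini $3$-form of Section \ref{sec:vari-probl-grav}. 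Thus the two actions agree on corresponding sections.

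Finally I would conclude that the bijection $s\mapsto\sigma$ identifies both the constraint sets and the action functionals, and therefore identifies the stationary sections. The step I expect to be the main obstacle is precisely this last transfer of extremality: in the Gotay--Griffiths formulation an extremal is a section that is stationary under all variations preserving the differential ideal, so one must check that admissible variations of $\sigma$ correspond to metric variations of $s$ and conversely. This follows because the correspondence is induced by the fixed maps $j$ and $\tau_{10}$, so that vertical variations transport covariantly between the two jet bundles; but making this rigorous, and checking that no admissible variation is lost on either side, is where the genuine work lies.
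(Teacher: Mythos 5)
Your proposal is correct and follows essentially the same route as the paper, which simply exhibits the bijection $s\mapsto\sigma=\left(\tau_{10}\circ s,j\circ s\right)$ and invokes Propositions \ref{prop:ChernSimonsRewrittenLagrangian} and \ref{prop:JMapProperties}; your computations matching the constraint ideals and the Lagrangian forms are exactly the details the paper leaves implicit. The transfer of extremality you flag at the end is not a genuine obstacle in this case, since both variational problems are by definition posed only over their admissible (integral) sections, and your bijection identifies these classes together with the actions, so variations on one side correspond exactly to variations on the other.
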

\begin{proof}
  The correspondence is determined by the bijective map
  \[
    s\mapsto\sigma:=\left(\tau_{10}\circ s,j\circ s\right)
  \]
  between a section $s$ of $\tau_1:J^1\tau\to M$ and an admissible section $\sigma$ for $\gamma^*\left(J^1\left(\tau\circ\beta\right)\right)$.
\end{proof}

\subsection{Gauge properties of the Chern-Simons Lagrangian $3$-form}
\label{sec:gauge-prop-chern}

In order to prove that this variational problem is able to reproduce the usual Chern-Simons field theory on a subbbundle, it will be necessary to prove that $\cL_{CS}$ established by Definition \ref{def:chern-simons-vari} has the correct transformation properties with respect to a gauge transformation (see Appendix \ref{sec:geom-princ-bundles}). To this end is devoted the following lemma, which is a reformulation of a previous result of Freed \cite{Freed:1992vw} to this new setting.

\begin{lemma}
  Let $\left(Q,\pi,N\right)$ be a $H$-principal bundle; indicate with $\theta\in\Omega^1\left(J^1\pi,\hf\right)$ the canonical connection on the bundle
  \[
    p_H^{J^1\pi}:J^1\pi\to C\left(Q\right).
  \]
  Also, let $s:U\subset M\to J^1\pi$ be a local section, and $g:U\to H$ a map. Define the new section
  \[
    \overline{s}:U\to J^1\pi:x\mapsto s\left(x\right)\cdot g\left(x\right).
  \]
  Then
  \[
    \left.\left(\overline{s}^*\theta\right)\right|_x=\text{Ad}_{g\left(x\right)}\circ\left.\left(s^*\theta\right)\right|_x+\left.\left(g^*\lambda\right)\right|_x,
  \]
  where $\lambda\in\Omega^1\left(H,\hf\right)$ is the (left) Maurer-Cartan $1$-form on $H$.
\end{lemma}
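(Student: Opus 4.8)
The plan is to deduce the formula from a single differentiation of $\overline{s}$, using only two structural properties of the canonical connection $\theta$: its equivariance $R_h^*\theta=\Ad_{h^{-1}}\circ\theta$ as a principal connection on $p_H^{J^1\pi}\colon J^1\pi\to C(Q)$, and the reproducing property $\theta(\xi_{J^1\pi})=\xi$, valid for every $\xi\in\hf$. First I would write the gauge-transformed section as the composite $\overline{s}=\Phi\circ(s,g)$, where $\Phi\colon J^1\pi\times H\to J^1\pi$, $\Phi(q,h):=q\cdot h$, is the lifted right action and $(s,g)\colon U\to J^1\pi\times H$ pairs the given section with the gauge map. The whole lemma is then a statement about how $\theta$ interacts with this composite.

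The crucial step is to compute $T_x\overline{s}(v)$ for $v\in T_xU$ by applying the Leibniz rule for the differential of a group action. Writing a curve through $g(x)$ as $g(x)\,a(t)$ with $a(0)=e$ and using $q\cdot(g(x)a(t))=(q\cdot g(x))\cdot a(t)$, the derivative splits into two contributions: the translate $T_{s(x)}R_{g(x)}\bigl(T_xs(v)\bigr)$ of the derivative of $s$, and the fundamental vector field $\bigl((g^*\lambda)_x(v)\bigr)_{J^1\pi}\bigl(\overline{s}(x)\bigr)$ generated by the (left) Maurer--Cartan value of $g$ along $v$. Establishing this splitting cleanly --- and in particular identifying the vertical part with the pullback $g^*\lambda$ --- is the only subtle point of the argument, and I expect it to be the main obstacle; everything after it is formal.

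Finally I would evaluate $\theta_{\overline{s}(x)}$ on each summand. Since $\overline{s}(x)=R_{g(x)}(s(x))$, equivariance converts the first summand into an adjoint action on $(s^*\theta)_x$, while the reproducing property returns the Lie-algebra element $(g^*\lambda)_x(v)$ from the second summand. Adding the two contributions produces exactly the transformation law of the statement, the adjoint factor being determined by the equivariance convention $R_h^*\theta=\Ad_{h^{-1}}\circ\theta$ and the inhomogeneous term by the Maurer--Cartan form $\lambda$.
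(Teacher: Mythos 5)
Your argument is correct, and it supplies a proof where the paper gives none: the lemma is stated as a reformulation of a result of Freed and is left unproved, so the only material in the paper to compare against is the coordinate expression $\left.\theta\right|_{\left[U,x,a,\xi_i\right]}=\left.\lambda\right|_a-\xi_i\,dx^i$ of Appendix \ref{sec:canon-conn-form}, which would yield the same identity by a direct but less intrinsic computation. Your key step, the Leibniz decomposition $T_x\overline{s}\left(v\right)=T_{s\left(x\right)}R_{g\left(x\right)}\left(T_xs\left(v\right)\right)+\left(\left(g^*\lambda\right)_x\left(v\right)\right)_{J^1\pi}\left(\overline{s}\left(x\right)\right)$, is exactly right, and the two properties of $\theta$ you then invoke do hold for the canonical connection on $p_H^{J^1\pi}:J^1\pi\to C\left(Q\right)$, since the fundamental vector fields of the lifted $H$-action span the vertical bundle of that projection.

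One point you should not gloss over: carried out with the convention you yourself cite, $R_h^*\theta=\text{Ad}_{h^{-1}}\circ\theta$, your computation produces $\left.\left(\overline{s}^*\theta\right)\right|_x=\text{Ad}_{g\left(x\right)^{-1}}\circ\left.\left(s^*\theta\right)\right|_x+\left.\left(g^*\lambda\right)\right|_x$, i.e.\ the adjoint action of the \emph{inverse} of $g\left(x\right)$, whereas the lemma as printed has $\text{Ad}_{g\left(x\right)}$. Your version is the correct one: it matches the paper's own gauge-transformation law \eqref{eq:LocalGaugeTransformation} combined with \eqref{eq:SectionGaugeTransformation}, and it matches the factor $\text{Ad}_{g^{-1}}$ that reappears in Corollary \ref{cor:GaugeInvarianceChernSimons}. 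The printed statement therefore contains a typo, and the sentence in your write-up claiming that the two contributions ``produce exactly the transformation law of the statement'' is the one place where you are not literally accurate; you should state the corrected formula explicitly rather than assert agreement with the printed one.
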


In the previous setting let us take $Q=AM$. By using the identification
\[
  LM\simeq\gamma\left(LM\right)\subset AM
\]
we can consider
\[
  \gamma^*\left(J^1\left(\tau\circ\beta\right)\right)\subset J^1\left(\tau\circ\beta\right);
\]
therefore, any section $s:U\subset\gamma^*\left(J^1\left(\tau\circ\beta\right)\right)$ can be seen as a section of the bundle $\left(\tau\circ\beta\right)_1:J^1\left(\tau\circ\beta\right)\to M$ taking values in this subbundle. With this in mind, we have the following consequence of the previous lemma.

\begin{corollary}\label{cor:GaugeInvarianceChernSimons}
  For $\cL_{CS}$ given by Definition \ref{def:chern-simons-vari} and a section $\overline{s}$ constructed as in the previous lemma, the following relation holds
  \[
    \overline{s}^*\cL_{CS}=s^*\cL_{CS}+d\left<\text{Ad}_{g^{-1}}\circ s^*\left(\theta^*\right)\stackrel{\wedge}{,}g^*\lambda\right>-\frac{1}{6}\left<g^*\lambda\stackrel{\wedge}{,}\left[g^*\lambda\stackrel{\wedge}{,}g^*\lambda\right]\right>.
  \]
\end{corollary}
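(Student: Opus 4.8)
The plan is to read $\cL_{CS}$ (Definition~\ref{def:chern-simons-vari}) as the universal Chern--Simons $3$-form attached to the $\Ad$-invariant bilinear form $\InPr{\cdot,\cdot}$ on $\af(3)$, and to reduce the statement to the classical gauge-variation identity for that form. Abbreviating $A:=s^*\theta^*$ and $\mu:=g^*\lambda$, pulling $\cL_{CS}$ back along $s$ gives
\[
  s^*\cL_{CS}=\InPr{A\stackrel{\wedge}{,}F}-\tfrac16\InPr{A\stackrel{\wedge}{,}[A\stackrel{\wedge}{,}A]},
\]
with $F:=s^*\Theta^*=dA+\tfrac12[A\stackrel{\wedge}{,}A]$, and the identical expression with $A$ replaced by $\overline{s}^*\theta^*$ computes $\overline{s}^*\cL_{CS}$. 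Thus the whole content is the algebraic identity relating the Chern--Simons form of $\overline{s}^*\theta^*$ to that of $A$.

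First I would feed in the transformation law supplied by the previous lemma, namely $\overline{s}^*\theta^*=\Ad_{g}\circ A+\mu$, together with the covariance of the curvature $\overline{s}^*\Theta^*=\Ad_{g}\circ F$; the latter I would obtain either directly from the structure equation $\Theta^*=d\theta^*+\tfrac12[\theta^*\stackrel{\wedge}{,}\theta^*]$ and the Maurer--Cartan equation $d\mu=-\tfrac12[\mu\stackrel{\wedge}{,}\mu]$, or simply from the tensorial (equivariant) character of the curvature of a principal connection.

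Next I would substitute both laws into the expression for $\overline{s}^*\cL_{CS}$ and expand. The terms depending only on $A$ reproduce $s^*\cL_{CS}$, because $\Ad_g$ is a Lie-algebra automorphism preserving $\InPr{\cdot,\cdot}$ and because $d$ intertwines with $\Ad_g$ up to a bracket with $\mu$. The crucial step is to organize the remaining terms, which mix $A$ and $\mu$: using the invariance of the pairing (equivalently, the infinitesimal identity $\InPr{[X,Y]\stackrel{\wedge}{,}Z}+\InPr{Y\stackrel{\wedge}{,}[X,Z]}=0$) together with the Leibniz rule and the Maurer--Cartan equation, these either cancel pairwise or gather into the total derivative $d\,\InPr{\Ad_{g^{-1}}\circ A\stackrel{\wedge}{,}\mu}$, while the terms built solely from $\mu$ collapse, again by Maurer--Cartan, into the single cubic contribution $-\tfrac16\InPr{\mu\stackrel{\wedge}{,}[\mu\stackrel{\wedge}{,}\mu]}$.

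I expect the main obstacle to be precisely this bookkeeping: the pairing combined with the wedge product carries graded signs, and one must repeatedly and carefully invoke $\Ad$-invariance and $d\mu=-\tfrac12[\mu\stackrel{\wedge}{,}\mu]$ to recognize the mixed terms as an exact form. Everything else is formal once the transformation law of the previous lemma and the curvature covariance are in place.
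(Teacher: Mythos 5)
Your proposal is correct and takes essentially the same route the paper intends: the corollary is presented as a direct consequence of the preceding lemma via the classical Freed gauge-variation identity for the Chern--Simons transgression form, and that is precisely the computation you carry out (substitute the transformation law and the equivariance of the curvature, then use $\text{Ad}$-invariance of the pairing and the Maurer--Cartan equation to collect the mixed terms into an exact form plus the cubic term). The one detail to pin down when expanding is the $\text{Ad}_{g}$ versus $\text{Ad}_{g^{-1}}$ convention: the lemma as printed gives $\overline{s}^*\theta=\text{Ad}_{g}\circ s^*\theta+g^*\lambda$ while the boundary term in the corollary carries $\text{Ad}_{g^{-1}}$, so you must fix one convention consistently or the exact term will come out with the wrong conjugation.
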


\begin{remark}
  The $3$-form
  \[
    \left<g^*\lambda\stackrel{\wedge}{,}\left[g^*\lambda\stackrel{\wedge}{,}g^*\lambda\right]\right>
  \]
  is closed; therefore, the last term does not contribute to the action when performing variations, and so
  \[
    \delta\int_M\overline{s}^*\cL_{CS}=\delta\int_Ms^*\cL_{CS}.
  \]
\end{remark}

\subsection{Chern-Simons variational problem on $A\left(3\right)$ as extension of the Wise variational problem}
\label{sec:chern-simons-vari-as-Wise-problem}

We will prove in this section that sections of $J^1\tau$ that are extensions of extremals for the Wise variational problem on the jet bundle $J^1\pi_\zeta$ associated to any $K$-structure $O_\zeta$, are extremals for the Chern-Simons variational problem on $A\left(3\right)$ and viceversa. In order to achieve this result, we will need to use the relationship that connects extensions of Cartan connections with the original connections, as described in Diagram \eqref{eq:ConnectionExtensionDiagram}. In this case, with the help of the naturality of the canonical connection established in Lemma \ref{lem:CanonicalConnectionNatural}, the following proposition can be proven (the notation used is the one employed in Section \ref{sec:extens-gener-cart}).

\begin{proposition}\label{prop:CartanExtensionForms}
  Let
  \[
    \Gamma:P\to J^1\pi_{\left[G_1\right]}\qquad\Gamma^\sharp:P\left[G_2\right]\to J^1\pi_{\left[G\right]}
  \]
  be a pair of (generalized) Cartan connections such that
  \[
    j^1\gamma_{G_1}\circ\Gamma=\Gamma^\sharp\circ\gamma^2_{H}.
  \]
  Then
  \[
    \left(\Gamma^\sharp\circ\gamma^2_H\right)^*\theta_{J^1\pi_{\left[G\right]}}=\Gamma^*\theta_{J^1\pi_{\left[G_1\right]}}.
  \]
\end{proposition}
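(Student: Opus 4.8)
The plan is to reduce the claim to two elementary ingredients: functoriality of the pullback of differential forms, and the naturality of the canonical connection form established in Lemma~\ref{lem:CanonicalConnectionNatural}. The key observation is that the hypothesis
\[
  j^1\gamma_{G_1}\circ\Gamma=\Gamma^\sharp\circ\gamma_H^2
\]
identifies the two maps $P\to J^1\pi_{\left[G\right]}$ that appear implicitly on the two sides of the desired identity, so that pulling back $\theta_{J^1\pi_{\left[G\right]}}$ along either of them produces the same $\g$-valued $1$-form on $P$.

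First I would rewrite the left-hand side using this hypothesis. Since $\Gamma^\sharp\circ\gamma_H^2$ and $j^1\gamma_{G_1}\circ\Gamma$ coincide as smooth maps $P\to J^1\pi_{\left[G\right]}$, functoriality of the pullback gives
\[
  \left(\Gamma^\sharp\circ\gamma_H^2\right)^*\theta_{J^1\pi_{\left[G\right]}}=\left(j^1\gamma_{G_1}\circ\Gamma\right)^*\theta_{J^1\pi_{\left[G\right]}}=\Gamma^*\left(\left(j^1\gamma_{G_1}\right)^*\theta_{J^1\pi_{\left[G\right]}}\right).
\]
Next I would invoke Lemma~\ref{lem:CanonicalConnectionNatural}. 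The map $\gamma_{G_1}\colon P\left[G_1\right]\hookrightarrow P\left[G\right]$ is precisely an inclusion of a $G_1$-principal subbundle into a $G$-principal bundle over the same base $M$, so the lemma (applied with $Q=P\left[G\right]$, $R_\zeta=P\left[G_1\right]$ and $i_\zeta=\gamma_{G_1}$) yields $\left(j^1\gamma_{G_1}\right)^*\theta_{J^1\pi_{\left[G\right]}}=\theta_{J^1\pi_{\left[G_1\right]}}$. Substituting this into the previous display gives $\Gamma^*\theta_{J^1\pi_{\left[G_1\right]}}$, which is exactly the asserted conclusion.

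I do not anticipate a serious obstacle here: the argument is essentially a two-line chain of equalities. The only point meriting care is to confirm that $\gamma_{G_1}$ genuinely satisfies the hypotheses of Lemma~\ref{lem:CanonicalConnectionNatural}, namely that it is a principal subbundle inclusion with the structure groups being $G_1\subset G$; this is guaranteed by the construction of the extension together with the identification $P\left[G\right]\simeq\left(P\left[G_1\right]\right)\left[G\right]$ established earlier in this section. The map $\gamma_H^2$ plays no analytic role of its own: its only function is to make $\Gamma$ and $\Gamma^\sharp$ comparable through the matching hypothesis, and once that hypothesis is used to collapse the composite $\Gamma^\sharp\circ\gamma_H^2$ onto $j^1\gamma_{G_1}\circ\Gamma$, the computation is purely formal.
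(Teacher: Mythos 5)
Your proposal is correct and follows exactly the route the paper intends: the paper gives no written proof beyond asserting that the result follows from the naturality of the canonical connection (Lemma \ref{lem:CanonicalConnectionNatural}), and your two-step chain --- replace $\Gamma^\sharp\circ\gamma_H^2$ by $j^1\gamma_{G_1}\circ\Gamma$ using the hypothesis, then apply the lemma to the subbundle inclusion $\gamma_{G_1}\colon P\left[G_1\right]\hookrightarrow P\left[G\right]$ --- is precisely that argument, including the check that $\gamma_{G_1}$ qualifies as a principal subbundle inclusion.
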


We will use this result to prove that the Chern-Simons variational problem on $A\left(3\right)$ can be seen as an extension of the Wise variational problem, namely, that we can establish a one-to-one correspondence between the extremals of these variational problems through the operations of extension and reduction of generalized Cartan connections, as defined in Section \ref{sec:extens-gener-cart}. To this end, let us apply Proposition \ref{prop:CartanExtensionForms} to the following diagram

\begin{equation*}
  \begin{tikzcd}[ampersand replacement=\&,row sep=.9cm,column sep=.9cm]
    \&
    AM
    \&
    \\
    O_\zeta^{\text{aff}}
    \arrow[hook]{ur}{\gamma^{\text{aff}}_\zeta}
    \&
    \&
    LM
    \arrow[hook',swap]{ul}{\gamma}
    \\
    \&
    O_\zeta
    \arrow[hook]{ul}{\gamma_\zeta}
    \arrow[hook',swap]{ur}{\gamma_{O_\zeta}}
  \end{tikzcd}      
\end{equation*}
Now, recall from Section \ref{sec:k-structures-space} that the choice of a metric $\zeta:M\to\Sigma$ allows us to select a $K$-structure $O_\zeta\subset LM$ and a subbundle $O_\zeta^{\text{aff}}\subset AM$; let us indicate with
\[
  \tau_\zeta^{\text{aff}}:O_\zeta^{\text{aff}}\to M
\]
the restriction of the canonical projection $\tau^{\text{aff}}:AM\to M$ to this subbundle. It follows that if
\[
 \sigma:U\to\gamma^*\left(J^1\left(\tau\circ\beta\right)\right)
\]
is a section of the jet space for the affine frame bundle, and
\[
  \sigma_\zeta:U\to\gamma_\zeta^*\left(J^1\left(\left.\left(\beta\circ\tau\right)\right|_{O_\zeta^{\text{aff}}}\right)\right)
\]
is a section of the jet space for the restriction of the affine frame bundle to orthonormal basis respect to the metric $\zeta$, then they will be related by a relation of reduction or extension if and only if
\[
  \tau_{10}\circ\text{pr}_1\circ\sigma=\left(\tau_\zeta^{\text{aff}}\right)_{10}\circ\text{pr}_1\circ\sigma_\zeta
\]
and
\[
  j^1\gamma_{\zeta}^{\text{aff}}\circ\text{pr}_2\circ\sigma=\text{pr}_2\circ\sigma_\zeta\circ\gamma_{O_\zeta},
\]
where $\text{pr}_i,i=1,2$ are the projections onto the factors in the cartesian product. Then by Proposition \ref{prop:CartanExtensionForms} we will have that
\[
  \sigma^*\cL_{\text{CS}}=\sigma_\zeta^*\lambda_{\text{CS}},
\]
for sections related by the operations of reduction or extension of generalized Cartan connections; it means that the correspondences
\[
  \sigma\mapsto\sigma_\zeta,\qquad\sigma_\zeta\mapsto\sigma
\]
establish a one-to-one correspondence between the Chern-Simons variational problem on $A\left(3\right)$ and the Wise variational problem, as required. Thus, we can prove the following result.

\begin{theorem}
  The operations of reduction and extension of generalized Cartan connections establish a one-to-one correspondence between the extremals of the Wise variational problem for any first order geometry associated to the pair $\left(SO\left(2,1\right)\ltimes\mR^3,SO\left(2,1\right)\right)$ (Definition \ref{def:WiseVariationalProblem}) and the Chern-Simons variational problem on $A\left(3\right)$ (as it is described by Definition \ref{def:ChernSimonsA3}).
\end{theorem}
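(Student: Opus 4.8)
The plan is to build the one-to-one correspondence directly from the extension and reduction operations of Section~\ref{sec:extens-gener-cart}, specialised to the diagram of principal bundles relating $O_\zeta$, $O_\zeta^{\text{aff}}$, $LM$ and $AM$, whose structure groups are $K=SO(2,1)$, $K\ltimes\mR^3$, $GL(3)$ and $A(3)$ respectively. A choice of metric $\zeta:M\to\Sigma$ fixes the $K$-structure $O_\zeta$ and its affine companion $O_\zeta^{\text{aff}}$; on this datum, Proposition~\ref{prop:extens-gener-cart} turns a generalized $(\kf\ltimes\mR^3)/\kf$-Cartan connection on $O_\zeta$ into a generalized $\mathfrak{a}(3)/\mathfrak{gl}(3)$-Cartan connection on $LM$, while Proposition~\ref{prop:induc-cart-conn} performs the reverse reduction whenever its reducibility hypothesis holds. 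First I would recast these two operations in the language of admissible sections $\sigma_\zeta$ and $\sigma$, checking that the two compatibility identities displayed before the statement are exactly the condition $j^1\gamma_{G_1}\circ\Gamma=\Gamma^\sharp\circ\gamma^2_H$ required by Proposition~\ref{prop:CartanExtensionForms}, so that the assignments $\sigma_\zeta\mapsto\sigma$ and $\sigma\mapsto\sigma_\zeta$ are mutually inverse on the respective sets of admissible sections.

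The next step is to verify that admissibility is preserved in both directions. The first--order geometry constraint $\mathcal{K}_{\kappa_0}$ (with $\kappa_0=\text{id}$) is, in either problem, the requirement that the section factor through the immersion $j:J^1\tau\hookrightarrow J^1(\tau\circ\beta)$ of Equation~\eqref{eq:ImmersionDefinition}; since the same map $j$ appears on both sides, this part of the constraint transfers verbatim. The skew--symmetry constraint $\sigma^*\left(\pi_\pf\circ\theta^*_{J^1(\tau\circ\beta)}\right)=0$ of the Chern--Simons problem is where the metric reappears: by Remark~\ref{rem:MetricityMeaning}, metricity with respect to $g^{\mu\nu}=\eta^{ij}e^\mu_ie^\nu_j$ is precisely what forces the horizontal field $F_\mu$ to be tangent to $O_\zeta^{\text{aff}}$, that is, it is equivalent to the reducibility hypothesis $\widetilde{\Gamma}(\gamma_{G_1}([u,g_1]))\in j^1\gamma_{G_1}\bigl(J^1\pi_{[G_1]}\bigr)$ of Proposition~\ref{prop:induc-cart-conn}. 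Thus an admissible $\sigma$ reduces to an admissible $\sigma_\zeta$ on the $O_\zeta$ selected by its own vielbein, and conversely every admissible $\sigma_\zeta$ extends to an admissible $\sigma$, so that the two admissibility packages are in bijection.

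With the bijection in place, I would invoke the pointwise identity
\[
  \sigma^*\cL_{\text{CS}}=\sigma_\zeta^*\lambda_{\text{CS}},
\]
already obtained before the statement from Proposition~\ref{prop:CartanExtensionForms} together with the naturality of the canonical connection (Lemma~\ref{lem:CanonicalConnectionNatural}). Integrating over $U$ shows that the two action functionals of Definitions~\ref{def:WiseVariationalProblem} and~\ref{def:ChernSimonsA3} agree on corresponding sections. Because $\sigma\leftrightarrow\sigma_\zeta$ is a bijection of the constrained solution sets, any one--parameter family of admissible deformations of $\sigma_\zeta$ maps to a one--parameter family of admissible deformations of $\sigma$ and vice versa, and the derivative of the action along the family is the same on both sides; hence $\sigma$ is a stationary section of the Chern--Simons problem on $A(3)$ exactly when $\sigma_\zeta$ is a stationary section of the Wise problem, which is the asserted correspondence of extremals.

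The hard part will be the second step: showing rigorously that the skew--symmetry constraint is equivalent to the reducibility hypothesis of Proposition~\ref{prop:induc-cart-conn}, so that reduction is available on \emph{every} admissible Chern--Simons section rather than only on a subfamily. This hinges on Remark~\ref{rem:MetricityMeaning} and on the fact that the metric determining $O_\zeta$ is itself read off from the vielbein part of $\sigma$; care is needed to ensure that the ideal $\mathcal{K}_{\kappa_0}$ and the Chern--Simons admissibility ideal cut out diffeomorphic constraint submanifolds, since it is this identification of constraint structures---and not merely the equality of Lagrangian values---that guarantees admissible variations correspond and therefore that extremality is transported in both directions.
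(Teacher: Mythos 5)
Your first two steps track the paper's setup faithfully: the compatibility identities before the theorem, the transfer of the constraints, and the pointwise equality $\sigma^*\cL_{\text{CS}}=\sigma_\zeta^*\lambda_{\text{CS}}$ via Proposition \ref{prop:CartanExtensionForms} are all exactly what the paper uses. The gap is in your third step, and it sits precisely where the paper flags ``the complicated part of the proof.'' You assert that because $\sigma\leftrightarrow\sigma_\zeta$ is a bijection of constrained solution sets, ``any one-parameter family of admissible deformations of $\sigma_\zeta$ maps to a one-parameter family of admissible deformations of $\sigma$ and vice versa.'' The ``vice versa'' is false as stated: an admissible variation $\sigma_t$ of $\sigma$ in the $A\left(3\right)$ problem moves the vielbein $\text{pr}_1\circ\sigma_t$ freely in $LM$, hence changes the metric $\zeta_t:=\eta^{ij}X_i^t\otimes X_j^t$ read off from it. Under the section-level bijection this produces a family of sections of \emph{different} Wise problems (one for each $O_{\zeta_t}$), not a variation within the single Wise problem on the fixed $K$-structure $O_{\zeta_0}$. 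Stationarity of $\sigma_{\zeta_0}$ under variations confined to $O_{\zeta_0}$ therefore says nothing, by itself, about the derivative of the $A\left(3\right)$ action along $\sigma_t$. Equality of Lagrangian values on corresponding sections does not transport extremality when the admissible variation classes are strictly nested, which is the situation here.

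The missing idea is the gauge invariance of $\cL_{\text{CS}}$ (Corollary \ref{cor:GaugeInvarianceChernSimons}). The paper closes the gap by choosing, for each $t$, a gauge transformation $g_t:U\to A\left(3\right)$ so that $\widetilde{\sigma}_t:=\sigma_t\cdot g_t$ has $\mathop{\text{Im}}\left(\text{pr}_1\circ\widetilde{\sigma}_t\right)\subset O_{\zeta_0}$; the transformed family \emph{is} induced by a variation $\overline{\sigma}_t$ of $\sigma_{\zeta_0}$, and the corollary guarantees that the gauge transformation changes $\sigma_t^*\cL_{\text{CS}}$ only by an exact term plus a closed term, so $\overline{\sigma}_t^*d\lambda_{\text{CS}}=\widetilde{\sigma}_t^*d\cL_{\text{CS}}=\sigma_t^*d\cL_{\text{CS}}$ and extremality is transported. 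You also misidentify the ``hard part'': it is not the equivalence of the skew-symmetry constraint with the reducibility hypothesis of Proposition \ref{prop:induc-cart-conn} (that is handled by Remark \ref{rem:MetricityMeaning} and is comparatively routine), but the mismatch of variation classes just described. Without invoking gauge invariance, or some substitute argument showing that the extra variations of the $A\left(3\right)$ problem are gauge-trivial, your proof establishes only that extremals of the $A\left(3\right)$ problem reduce to extremals of the Wise problem, not the converse.
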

\begin{proof}
  Let us suppose that we have a section
  \[
    \sigma:U\subset M\to\gamma^*\left(J^1\left(\tau\circ\beta\right)\right)\subset LM\times J^1\left(\tau\circ\beta\right);
  \]
  then, the induced section
  \[
    s_\sigma:=\text{pr}_1\circ\sigma
  \]
  has its image in a subbundle $O_\zeta\subset LM$ for some metric $\zeta$, which is given by the formula
  \[
    \zeta:=\eta^{ij}X_i\otimes X_j,
  \]
  where
  \[
    s_\sigma\left(x\right)=\left\{X_1\left(x\right),\cdots,X_m\left(x\right)\right\},\qquad x\in U.
  \]
  Recalling Remark \ref{rem:MetricityMeaning} and constraints \eqref{eq:GlobalMetricity}, we will have that its associated map $\Gamma_{A_{O_\zeta}}$ has its image in $J^1\left(\left.\left(\beta\circ\tau\right)\right|_{O^{\text{aff}}_\zeta}\right)$ and so its can be reduced to a connection $\sigma_\zeta$ for the subbundle $O_\zeta$.
  
  Now, the complicated part of the proof is the one that demonstrates that each extremal section of the variational problem in the restricted bundle is an extremal section for the general variational problem, because the variations of the first problem do not encompass all possible variations of the second problem. So, let us suppose that we have a section $\sigma_\lambda:U\to J^1\left(\left.\left(\beta\circ\tau\right)\right|_{O^{\text{aff}}_\zeta}\right)$ for the restricted variational problem; let $\sigma:U\subset M\to\gamma^*\left(J^1\left(\tau\circ\beta\right)\right)$ be a section for $\gamma^*\left(J^1\left(\tau\circ\beta\right)\right)$ induced by $\sigma_\zeta$. Consider $\sigma_t:U\subset M\to\gamma^*\left(J^1\left(\tau\circ\beta\right)\right)$ a variation for $\sigma$; then for every $t$ there exists a map $g_t:U\to A\left(3\right)$ such that the section
  \[
    \widetilde{\sigma}_t:=\sigma_t\cdot g_t
  \]
  verifies the condition $\mathop{\text{Im}}{\left(\text{pr}_1\circ\widetilde{\sigma}_t\right)}\subset O_\zeta$, and so it is induced by a variation $\overline{\sigma}_t$ for $\sigma_\zeta$. Then from Corollary \ref{cor:GaugeInvarianceChernSimons} we have that
  \[
    \overline{\sigma}_t^*d\lambda_{\text{CS}}=\widetilde{\sigma}_t^*d\cL_{\text{CS}}=\sigma_t^*d\cL_{\text{CS}},
  \]
  and therefore $\sigma$ is an extremal whenever $\sigma_\zeta$ is.
\end{proof}

\appendix

\section{Geometry of principal bundles}
\label{sec:geom-princ-bundles}

The following appendix contains the usual construction of a principal fiber bundle and its jet space from a local point of view. Although this construction is well-known, the article utilizes some of its consequences. Therefore, we have decided to include it here to establish the notation and provide the reader with a quick reference to these results.

\subsection{Connections and principal bundles}
\label{sec:conn-princ-bundl}

It is a usual to represent a principal connection by a family of locally defined $\g$-valued $1$-forms $A_U\in\Omega^1\left(U,\g\right)$, with $U$ belonging to a covering $\mathcal{C}:=\left\{U\right\}$ for $M$; these forms should obey a \emph{gauge transformation condition}: For every pair $U,V\in\mathcal{C}$ there must exists a map
\[
  t_{UV}:U\cap V\to G
\]
such that
\begin{equation}\label{eq:LocalGaugeTransformation}
  A_V=\text{Ad}_{t_{UV}^{-1}}\left(A_U\right)+t_{UV}^{-1}dt_{UV}=\text{Ad}_{t_{UV}^{-1}}\left(A_U\right)+t_{UV}^*\lambda,
\end{equation}
where $\lambda\in\Omega^1\left(G,\g\right)$ is the (left) Maurer-Cartan form on $G$. Additionally, these maps must be compatible in the sense that, for every $U,V,W\in\mathcal{C}$ such that $U\cap V\cap W\not=\emptyset$, then
\[
  t_{UW}\left(x\right)=t_{UV}\left(x\right)\cdot t_{VW}\left(x\right)
\]
for every $x\in U\cap V\cap W$. The existence of these maps is equivalent to have a $G$-principal bundle on $M$ \cite[p. 51]{KN1}, for which the family $\mathcal{C}$ becomes a covering of trivializing open sets; as it would seem obvious, we want to stress the fact that a principal bundle is singled out when dealing with a connection described in this way.

Before to go on, let us discuss briefly about a notion of equivalence involving the principal bundle structure. As we know, a gauge transformation does not change the connection; therefore, a collection of forms
\[
  A_U':=\text{Ad}_{g_U^{-1}}\left(A_U\right)+g_U^*\lambda
\]
where $g_U:U\to G$ is a family of smooth functions, should describe the same connection. Accordingly, the structure functions $t_{UV}$ must change obeying the rule
\begin{equation}\label{eq:EquivalentTransitionFunctions}
  t_{UV}':U\cap V\to G:x\mapsto g_U\left(x\right)t_{UV}\left(x\right)g_V\left(x\right),
\end{equation}
with the underlying principal bundle remaining invariant; therefore, set of transformations $\left\{t_{UV}\right\},\left\{t_{UV}'\right\}$ related by Equation \eqref{eq:EquivalentTransitionFunctions} should be considered as equivalent.

Let us look more closely to the principal bundle so constructed. Consider the triples
\[
  \left(U,x,a\right)\in\mathcal{C}\times M\times G
\]
such that $x\in U$; we say that $\left(U,x,a\right)$ is equivalent to $\left(V,y,b\right)$ if and only if $U\cap V\not=\emptyset$, $x=y$ and
\[
  b=t_{VU}\left(x\right)a.
\]
We will indicate with $\left[U,x,a\right]$ the equivalence class containing $\left(U,x,a\right)$; the space obtained by quotient by this equivalence relation becomes a $G$-principal bundle $\pi:P\to M$, where
\[
  \pi\left(\left[U,x,a\right]\right)=x
\]
and the $G$-action is simply given by
\[
  \left[U,x,a\right]\cdot h=\left[U,x,ah\right].
\]
The covering $\mathcal{C}$ contains trivializing open sets; in fact, on $\pi^{-1}\left(U\right)$ we have the trivializing map
\[
  t_U:\pi^{-1}\left(U\right)\to U\times G:\left[U,x,a\right]\mapsto\left(x,a\right).
\]
This description of the bundle $P$ allows us to construct a family of local sections for $P$, namely
\[
  s_U:U\to\pi^{-1}\left(U\right):x\mapsto\left[U,x,e\right],
\]
where we have used the symbol $e$ for the unit in $G$; it follows that
\begin{equation}\label{eq:SectionGaugeTransformation}
  s_V\left(x\right)=\left[V,x,e\right]=\left[U,x,t_{UV}\left(x\right)\right]=\left[U,x,e\right]\cdot t_{UV}\left(x\right)=s_U\left(x\right)\cdot t_{UV}\left(x\right)
\end{equation}
for every $x\in U\cap V$. It means that the local sections $s_U$ are related by the gauge transformations associated to the transition functions $t_{UV}$.

Let $U\in\mathcal{C}$ be an open set in the covering; then the pair $\left(A_U,s_U\right)$ allow us to construct the connection form $\omega$ on $P$ through the formula \cite{Naka}
\begin{equation}\label{eq:ConnectionFormFromLocalData}
  \left.\omega\right|_u:=\text{Ad}_{\left(g_U\left(u\right)^{-1}\right)}\left(\pi^*\left.A_U\right|_x\right)+g_U^*\lambda;
\end{equation}
here $x=\pi\left(u\right)$, $g_U:\pi^{-1}\left(U\right)\to G$ is defined by
\[
  u=s_U\left(x\right)\cdot g_U\left(u\right)
\]
and $\lambda$ is the (left) Maurer-Cartan form on $G$. The transformation properties of these local data imply that
\[
  \text{Ad}_{\left(g_U\left(u\right)^{-1}\right)}\left(\pi^*\left.A_U\right|_x\right)+g_U^*\lambda=\text{Ad}_{\left(g_V\left(u\right)^{-1}\right)}\left(\pi^*\left.A_V\right|_x\right)+g_V^*\lambda,\qquad x=\pi\left(u\right)
\]
for any pair $U,V\in\mathcal{C}$ and $u\in\pi^{-1}\left(U\cap V\right)$; therefore, this definition is independent of the open set used to calculate it through Eq. \eqref{eq:ConnectionFormFromLocalData}.

\subsection{Gauge transformations}
\label{sec:gauge-transf}

Recall that a \emph{gauge transformation of a principal bundle $P$} is a bundle map $\phi:P\to P$ over the identity that commutes with the $G$-action, namely
\[
  \phi\left(u\cdot g\right)=\phi\left(u\right)\cdot g.
\]
Using the trivialization maps $t_U:\pi^{-1}\left(U\right)\to U\times G$, a gauge transformation is locally described by the maps
\[
  \phi_U\left(x,a\right):=\left(t_U\circ\phi\circ t_U^{-1}\right)\left(x,a\right)=\left(x,h_U\left(x\right)a\right),
\]
where the functions $h_U:U\to G$ must have the following transformation property
\[
  h_V\left(x\right)=t_{UV}\left(x\right)h_U\left(x\right)
\]
for every $x\in U\cap V$. In fact, let us define
\[
  \phi\left(\left[U,x,a\right]\right):=\phi_U\left(x,a\right)=\left[U,x,h_U\left(x\right)a\right]
\]
for any $\left[U,x,a\right]\in P$; if $x\in U\cap V$, we will have that
\begin{align*}
  \left[V,x,h_V\left(x\right)a\right]&=\left[U,x,t_{VU}\left(x\right)h_V\left(x\right)a\right]\\
                                     &=\left[U,x,h_U\left(x\right)a\right],
\end{align*}
showing that this definition is independent of the open set $U$ containing $x$.

\subsection{Jet bundle and the bundle of principal connections}
\label{sec:bundle-princ-conn}

Let us suppose that every $U\in\mathcal{C}$ is a coordinate domain; therefore, we can write
\[
  A_U=\zeta^U_idx^i,\qquad\zeta^U_i:U\to\g
\]
and so, locally, we have that a connection can be seen as a section of the bundle
\[
  \text{pr}_1:U\times G\times\overbrace{\g\otimes\cdots\otimes\g}^{m\text{ times}}\to U.
\]
As with $P$, we can take advantage of the transition functions $t_{UV}$ in order to glue together these local fibrations. Accordingly, let us consider the $\left(m+3\right)$-uples
\[
  \left(U,x,a,\xi_1,\cdots,\xi_m\right)\in\mathcal{C}\times M\times G\times\overbrace{\g\otimes\cdots\otimes\g}^{m\text{ times}}
\]
such that $x\in U$, and define the equivalence relation given by
\[
  \left(U,x,a,\xi_i\right)\sim\left(V,y,b,\zeta_i\right)
\]
if and only if $U\cap V\not=\emptyset$, $x=y$, $b=t_{VU}\left(x\right)a$ and
\[
  \zeta_idx^i=\xi_idx^i+\text{Ad}_{a^{-1}}\left(t_{VU}^*\lambda\right).
\]
The quotient space $J^1\pi$ is a manifold, the so called \emph{jet space of the bundle $\pi:P\to M$}; we have canonical projections
\[
  \pi_{10}:J^1\pi\to P:\left[U,x,a,\xi_i\right]\mapsto\left[U,x,a\right]
\]
and
\[
  \pi_1:J^1\pi\to M:\left[U,x,a,\xi_i\right]\mapsto x
\]
giving it bundle structure on both spaces $P$ and $M$. As before, we have a local trivialization for $J^1\pi$ through the formula
\[
  T_U:\pi_1^{-1}\left(U\right)\to U\times G\times\overbrace{\g\otimes\cdots\otimes\g}^{m\text{ times}}:\left[U,x,a,\xi_i\right]\mapsto\left(x,a,\xi_i\right).
\]
Every element $\left[U,x,g,\xi_i\right]\in J^1\pi$ is equivalent to a linear map
\[
  \frac{\partial}{\partial x^i}\in T_xM\longmapsto T_{\left[U,x,e\right]}R_g\circ T_xs_U\left(\frac{\partial}{\partial x^i}\right)+\left(\xi_i\right)_P\left(\left[U,x,g\right]\right)\in T_{\left[U,x,g\right]}P,
\]
where $\zeta_P$ is the infinitesimal generator of the $G$-action on $P$ corresponding to the element $\zeta\in\g$. Using the fact that
\[
  t_U\circ R_h=\left(\text{id}\times R_h\right)\circ t_U
\]
for every $h\in G$, we can see that
\[
  Tt_U\circ\zeta_P=\zeta_G^R
\]
for all $\zeta\in\g$; here $\xi_G^R$ indicates the infinitesimal generator on $G$ associated to the right action. Therefore, the map $T_U$ is induced by $Tt_U$.

It can be seen that $J^1\pi$ is a $G$-space; the action is given by the formula
\[
  \left[U,x,a,\xi_i\right]\cdot h=\left[U,x,ah,\text{Ad}_{h^{-1}}\xi_i\right]
\]
for all $h\in G$. This action allows us to construct a quotient bundle
\[
  C\left(P\right):=J^1\pi/G,
\]
and the canonical projection gives rise to a new bundle
\[
  p_G^{J^1\pi}:J^1\pi\to C\left(P\right).
\]
The map $\overline{\pi}:C\left(P\right)\to M$ such that
\[
  \pi_1=\overline{\pi}\circ p_G^{J^1\pi}
\]
gives $C\left(P\right)$ structure of a bundle on $M$; its elements are the equivalence classes
\[
  \left[U,x,a,\xi_i\right]_G:=\left\{\left[U,x,ah,\text{Ad}_{h^{-1}}\xi_i\right]:h\in G\right\}.
\]
For every $U\in\mathcal{C}$, we have a trivialization map
\[
  \phi_U:\overline{\pi}^{-1}\left(U\right)\to U\times\g^{\otimes m}:\left[U,x,a,\xi_i\right]_G\mapsto\left(x,\text{Ad}_a\xi_i\right);
\]
for every pair $U,V\in\mathcal{C}$ such that $U\cap V\not=\emptyset$, we have that
\begin{align*}
  \phi_V\circ\phi_U^{-1}\left(x,\xi_i\right)&=\phi_V\left(\left[U,x,e,\xi_i\right]_G\right)\\
                                            &=\phi_V\left(\left[V,x,t_{VU}\left(x\right),\xi_i+\left(t_{VU}^*\lambda\right)\left(\frac{\partial}{\partial x^i}\right)\right]_G\right)\\
                                            &=\left(x,\text{Ad}_{t_{VU}\left(x\right)}\left[\xi_i+\left(t_{VU}^*\lambda\right)\left(\frac{\partial}{\partial x^i}\right)\right]\right)\\
  &=\left(x,\text{Ad}_{t_{UV}\left(x\right)^{-1}}\xi_i-\left(t_{UV}^*\lambda\right)\left(\frac{\partial}{\partial x^i}\right)\right)
\end{align*}
where it was used that $t_{VU}\left(x\right)=t_{UV}\left(x\right)^{-1}$. The fact that this expression is equivalent to the transformation law \eqref{eq:LocalGaugeTransformation} allows us to consider $C\left(P\right)$ as the bundle of principal connections for $P$; namely, we have an identification
\begin{equation}\label{eq:LocalFormCorrespondence}
  \left[U,x,a,\xi_i\right]_G\longleftrightarrow A_U:=-\text{Ad}_a\xi_idx^i.
\end{equation}
With this correspondence in mind, the existence of a $\g$-valued $1$-form $A_U$ for every $U\in\mathcal{C}$ such that compatibility conditions \eqref{eq:LocalGaugeTransformation} are fulfilled, is equivalent to the existence of a section
\[
  \sigma:M\to C\left(P\right).
\]
In fact, we have the formula
\[
  \sigma\left(x\right)=\left[U,x,e,-\left.A_U\right|_x\left(\frac{\partial}{\partial x^i}\right)\right]_G
\]
for every $U\in\mathcal{C}$ and $x\in U$.

\subsection{The canonical connection form on $J^1\pi$}
\label{sec:canon-conn-form}

A fundamental geometric structure on $J^1\pi$ is the \emph{canonical connection form} $A$, which is a $\g$-valued $1$-form on $J^1\pi$ inducing a connection on the $G$-principal bundle $p_G^{J^1\pi}:J^1\pi\to C\left(P\right)$; for every $U\in\mathcal{C}$, it is given by the formula
\[
  \left.A\right|_{\left[U,x,a,\xi_i\right]}:=\left.\lambda\right|_a-\xi_idx^i.
\]
In fact, because $b=t_{VU}\left(x\right)a$, it results that
\[
  \left.\lambda\right|_b=\left.\lambda\right|_a+\text{Ad}_{a^{-1}}\left(t_{VU}^*\lambda\right);
\]
thus, if $\left[U,x,a,\xi_i\right]=\left[V,x,b,\zeta_i\right]$ for $x\in U\cap V$, we will have
\begin{align*}
  \left.A\right|_{\left[V,x,b,\zeta_i\right]}&=\left.\lambda\right|_{b}-\zeta_idx^i\\
                                             &=\left.\lambda\right|_a+\text{Ad}_{a^{-1}}\left(t_{VU}^*\lambda\right)-\left(\xi_idx^i+\text{Ad}_{a^{-1}}\left(t_{VU}^*\lambda\right)\right)\\
                                             &=\left.\lambda\right|_a-\xi_idx^i\\
  &=\left.A\right|_{\left[U,x,a,\xi_i\right]},
\end{align*}
and the form $A$ is well-defined.

There is an important property that the canonical form has. In order to formulate it, let us consider another way to specify a connection, namely, through an equivariant bundle map
\[
  \Gamma:P\to J^1\pi.
\]
In fact, using the above description of these bundles, and given a local description $\left\{A_U:U\in\mathcal{C}\right\}$ for a connection, we can construct the map
\[
  \Gamma:\left[U,x,a\right]\mapsto\left[U,x,a,-\text{Ad}_{a^{-1}}\left(\left.A_U\right|_x\left(\frac{\partial}{\partial x^i}\right)\right)\right],
\]
where identification \eqref{eq:LocalFormCorrespondence} was used. Having the gauge transformation property \eqref{eq:LocalGaugeTransformation} in mind, we can prove that it is a good definition. Now, given that
\[
  s_U\left(x\right)=\left[U,x,e\right]
\]
and from $u=s_U\left(x\right)\cdot g_U\left(u\right)$ for every $u=\left[U,x,a\right]\in\pi^{-1}\left(U\right)$, we obtain the formula
\[
  u=\left[U,x,g_U\left(u\right)\right].
\]
Then
\[
  \Gamma^*\left(\left.A\right|_{\Gamma\left(u\right)}\right)=\Gamma^*\left(\lambda-\xi_idx^i\right)=g_U^*\lambda+\text{Ad}_{g_U\left(u\right)^{-1}}\left.A_U\right|_x=\left.\omega\right|_u,
\]
namely, the connection form can be obtained through pullback along the map $\Gamma$ of the canonical form.

\section{Local and global Chern-Simons Lagrangians}
\label{sec:local-global-chern}

Let $K$ be a Lie group and $\pi_\zeta:R_\zeta\to M$ a $K$-principal bundle (the notation will be explained later); on its first order jet space
\[
  \left(\pi_\zeta\right)_1:J^1\pi_\zeta\to M
\]
we will define a (global) variational problem, which we will prove  to represent Chern-Simons gauge theory. In order to accomplish this task, it will be necessary to lift it to $J^1\left(\pi_{\zeta}\right)_1$, and compare it with the variational problem defined by local data, which lives on $J^1\overline{\pi}_\zeta$.

Let us now suppose that we have an invariant polynomial $q:\kf\to\mR$ of degree $n$. According to the Chern-Simons theory \cite{10.2307/1971013,Morita1}, the $2n$-form
\[
  \alpha:=q\left(F\right)\in\Omega^{2n}\left(J^1\pi_\zeta\right)
\]
is closed. For example, when the Lie algebra comes with an invariant bilinear form, we can consider the quadratic polynomial
  \[
    q\left(F\right):=\left<F\stackrel{\wedge}{,}F\right>;
  \]
Additionally, it can be proved that the bundle
\[
  \text{pr}_1:J^1\pi_\zeta\times_{C\left( {R}_\zeta\right)} J^1\pi_\zeta\to J^1\pi_\zeta,
\]
is a trivial $K$-principal bundle; it means that $\alpha$ is not only closed, but also exact. Therefore, there exists a $\left(2n-1\right)$-form
\begin{equation}\label{eq:ChernSimonsLagrangianBeta}
  \beta:=Tq\left(A,F\right)\in\Omega^{2n-1}\left(J^1\pi_\zeta\right),
\end{equation}
such that
\[
  d\beta=\text{pr}_1^*\alpha;
\]
the polynomial $Tq$ can be found by a transgression formula \cite{Naka}. When $q$ is the quadratic form determined by an invariant bilinear form, the transgression is given by
\begin{equation}\label{eq:TransgressionBilinearForm}
  Tq\left(A,F\right)=\left<A\stackrel{\wedge}{,}F\right>-\frac{1}{6}\left<A\stackrel{\wedge}{,}\left[A\stackrel{\wedge}{,}A\right]\right>.
\end{equation}

How can these forms be related to the local forms usually used to described Chern-Simons field theory? In this case, we have another $K$-principal bundle structure, namely the quotient map
\[
  p_{K}^{J^1\pi_\zeta}:J^1\pi_\zeta\to C\left( {R}_\zeta\right),
\]
and $q\left(F\right)$ defines a Chern class for it. Accordingly, there exists a $2n$-form $\gamma$ on $C\left( {R}_\zeta\right)$ such that
\[
\left(p_{K}^{J^1\pi_\zeta}\right)^*\gamma=q\left(F\right).
\]
Moreover, using the canonical $2$-form $F_2$ on the bundle of connections $C\left( {R}_\zeta\right)$, we can prove that
\[
  \gamma=q\left(F_2\right).
\]
But now, this bundle is not trivial in general; in short, from decomposition
\[
  J^1\pi_\zeta= {R}_\zeta\times_M C\left( {R}_\zeta\right),
\]
we obtain that it is trivial if and only if the bundle
\[
  \pi_\zeta: {R}_\zeta\to M
\]
is. In consequence, it is not expected that the $2n$-form $\gamma$, although closed, should also be exact; it is the reason why, although we have a global variational problem on $J^1\pi_\zeta$, it cannot be reproduced on $C\left( {R}_\zeta\right)$, even having in mind that the transformations properties of $\cL_{CS}$ are telling us that the degrees of freedom associated to the $ {R}_\zeta$-factor can be ignored.

Nevertheless, because $ {R_\zeta}$ admits trivializing open sets, the previous considerations can be used to associate to every such set $U\subset M$ a Lagrangian $\cL_U:=Tq\left(A_2^U,F_2\right)$, where
\[
  A_2^U\in\Omega^1\left(\left(\overline{\pi}_\zeta\right)^{-1}\left(U\right),\kf\right)
\]
is a $1$-form such that
\[
  \left(p_{K}^{J^1\pi_\zeta}\right)^*A_2^U=\left.A\right|_{\left(\left(\pi_\zeta\right)_1\right)^{-1}\left(U\right)}.
\]
\begin{remark}
  A study of conditions ensuring the existence of global solutions for the local  variational problem for Chern-Simons gauge theory can be found in \cite{Palese2017}. In this regard, it is interesting to note that here we have changed a variational problem determined by local data and whose sections could be globally defined, by a variational problem described by global data, but whose sections are forced to have local nature (part of these sections are sections of a principal bundle).
\end{remark}

\section{Geometry of the affine frame bundle}
\label{sec:lift-conn-affine}

From now on we will devoted ourselves to particularize this definition to a very specific principal bundle, the so called \emph{affine frame bundle} (see Definition \ref{def:Affine-Frame-Bundle} below), and to relate the variational problem so obtained with Palatini gravity.

\subsection{The affine general linear group and the affine frame bundle}
\label{sec:affine-gener-line}

We have the splitting short exact sequence
\[
  \begin{tikzcd}[ampersand replacement=\&,row sep=1.5cm,column sep=1.5cm]
    0
    \arrow{r}{}
    \&
    \mathbb{R}^m
    \arrow{r}{\alpha}
    \&
    A\left(m,\mathbb{R}\right)
    \arrow{r}{\beta}
    \&
    GL\left(m,\mathbb{R}\right)
    \arrow[swap]{r}{}
    \arrow[bend left=45,dashed]{l}{\gamma}
    \&
    1
  \end{tikzcd}  
\]
where $A\left(m,\mathbb{R}\right)\subset GL\left(m+1,\mathbb{R}\right)$ is the subgroup of matrices of the form
\[
  B:=
  \begin{bmatrix}
    a&\xi\\
    0&1
  \end{bmatrix}
\]
where $a\in GL\left(m,\mathbb{R}\right)$ and $\xi\in\mR^m$; the maps in the sequence read
\[
  \alpha\left(\xi\right):=\begin{bmatrix}
    1&\xi\\
    0&1
  \end{bmatrix},\qquad
  \beta\left(\begin{bmatrix}
      a&\xi\\
      0&1
    \end{bmatrix}\right):=
  \begin{bmatrix}
    a&0\\
    0&1
  \end{bmatrix},
\]
and
\[
  \gamma\left(a\right):=\begin{bmatrix}
    a&0\\
    0&1
  \end{bmatrix}.
\]
Because this short sequence splits, we can consider
\[
  A\left(m,\mathbb{R}\right)=GL\left(m,\mathbb{R}\right)\oplus\mR^m,
\]
with the isomorphism of groups given by
\[
  GL\left(m,\mathbb{R}\right)\oplus\mR^m\to A\left(m,\mathbb{R}\right):\left(a,\xi\right)\mapsto\gamma\left(a\right)+\alpha\left(\xi\right).
\]
Now, let $A^m$ be the set $\mR^m$ considered as an affine space; we can set an isomorphism between $A\left(m,\mathbb{R}\right)$ and the set of affine maps
\[
  f:A^m\to A^m;
\]
in fact, given $B=\left(a,\xi\right)$, the associated affine map reads
\[
  f_B\left(z\right):=az+\xi
\]
for every $z\in A^m$. Using the expression of an element $\left(a,\xi\right)$ as a matrix, we have that
\[
  \left(a,\xi\right)^{-1}=\left(a^{-1},-a^{-1}\xi\right)
\]
and
\begin{equation}\label{eq:AdjointActionSemidirect}
  \text{Ad}_{\left(a,\xi\right)}\left(b,\zeta\right)=\left(\text{Ad}_ab,a\zeta-\left(\text{Ad}_ab\right)\xi\right)
\end{equation}
for any $\left(a,\xi\right)\in A\left(m,\mR\right),\left(b,\zeta\right)\in\mathfrak{a}\left(m,\mR\right)$.

In the same vein, let $A_x\left(M\right)$ be the set $T_xM$ considered as an affine space, for every $x\in M$. As it is well-known \cite{KN1}, the set of affine maps
\[
  u:A^m\to A_x\left(M\right)
\]
for every $x\in M$ has structure of $A\left(m,\mathbb{R}\right)$-principal bundle; the action of an element $B\in A\left(m,\mathbb{R}\right)$ is simply given by 
\[
  u\cdot B:=u\circ B.
\]

\begin{definition}[Bundle of affine frames]\label{def:Affine-Frame-Bundle}
  The \emph{bundle of affine frames on $M$} will be the set
  \[
    AM:=\bigcup_{x\in M}\left\{u:A^m\to A_x\left(M\right)\text{ affine}\right\}.
  \]
\end{definition}

As it follows from the general theory of principal bundles, there exists a pair of principal bundle morphisms associated to the homomorphisms $\beta:A\left(m,\mathbb{R}\right)\to GL\left(m,\mathbb{R}\right)$ and $\gamma:GL\left(m,\mathbb{R}\right)\to A\left(m,\mathbb{R}\right)$
\[
  \begin{tikzcd}[ampersand replacement=\&,row sep=1.5cm,column sep=1.5cm]
    AM
    \arrow[shift right,swap]{rr}{\beta}
    \arrow{dr}{}
    \&
    \&
    LM
    \arrow[shift right,swap]{ll}{\gamma}
    \arrow{dl}{\tau}
    \\
    \&
    M
    \&
  \end{tikzcd}  
\]
For any $G$-principal bundle $\pi:P\to M$, the affine bundle $\left(C\left(P\right),\overline{\pi},M\right)$ defined through the diagram
\[
  \begin{tikzcd}[ampersand replacement=\&,row sep=1.5cm,column sep=1.5cm]
    J^1\pi
    \arrow{r}{\pi_{10}}
    \arrow[swap]{d}{p_G^{J^1\pi}}
    \&
    P
    \arrow{d}{\pi}
    \\
    C\left(P\right):=J^1\pi/G
    \arrow{r}{\overline{\pi}}
    \&
    M
  \end{tikzcd}  
\]
is called the \emph{bundle of connections of the bundle $P$}, and we can establish a canonical one-to-one correspondence between its sections and principal connections on $P$. The correspondence is given as follows: Any element $j_x^1s\in J^1\pi$ is a linear map
\[
  j_x^1s:T_xM\to T_{s\left(x\right)}P
\]
such that
\[
  T_{s\left(x\right)}\pi\circ j_x^1s=\text{id}_{T_xM},
\]
and so a $G$-orbit $\left[j_x^1s\right]_G$ can be interpreted as a linear map
\[
  \left[j_x^1s\right]_G:T_xM\to\left(TP/G\right)_x.
\]
Given $u\in P$, there exists a unique $m$-dimensional subspace $H_u\subset T_uP$ such that
\[
  \left[j_x^1s\right]_G\left(T_xM\right)=p_G^{TP}\left(H_u\right);
\]
the assignment $u\mapsto H_u$ is the connection associated to $\left[j_x^1s\right]_G$.

Therefore we have the diagram
\[
  \begin{tikzcd}[ampersand replacement=\&,row sep=1.5cm,column sep=1.5cm]
    J^1\left(\tau\circ\beta\right)
    \arrow{rr}{p_{A\left(m,\mathbb{R}\right)}^{J^1\left(\tau\circ\beta\right)}}
    \arrow{dr}{j^1\beta}
    \arrow[swap]{dd}{\left(\tau\circ\beta\right)_{10}}
    \&
    \&
    C\left(AM\right)
    \arrow{d}{\left[j^1\beta\right]}
    \\
    \&
    J^1\tau
    \arrow{r}{p_{GL\left(m,\mathbb{R}\right)}^{J^1\tau}}
    \arrow{d}{\tau_{10}}
    \&
    C\left(LM\right)
    \arrow{d}{\overline{\tau}}
    \\
    AM
    \arrow{r}{\beta}
    \&
    LM
    \arrow{r}{\tau}
    \&
    M
  \end{tikzcd}  
\]
It is a theorem that any connection $\Gamma:M\to C\left(LM\right)$ gives rise to a unique connection $\widetilde{\Gamma}:M\to C\left(AM\right)$ such that if
\[
  \omega_{LM}\in\Omega^1\left(LM,\mathfrak{gl}\left(m,\mathbb{R}\right)\right)\qquad\text{ and }\qquad{\omega}_{AM}\in\Omega^1\left(AM,\mathfrak{a}\left(m,\mathbb{R}\right)\right)
\]
are the corresponding connection forms, then
\begin{equation}\label{eq:AffineVsLinearConnections}
  \gamma^*{\omega}_{AM}=\omega_{LM}+\varphi,
\end{equation}
where $\varphi\in\Omega^1\left(LM,\mathbb{R}^m\right)$ is the canonical solder $1$-form on $LM$.

\subsection{The affine frame bundle as extension of the frame bundle}
\label{sec:affine-frame-bundle}

It remains to interpret the affine frame bundle as the extension of the frame bundle using the group immersion $GL\left(m,\mR\right)\subset A\left(m,\mR\right)$.
\begin{proposition}\label{prop:FrameBundleExtension}
  Let $H=GL\left(m,\mR\right)$, $G=A\left(m,\mR\right)$ and $P=LM$. Then
  \[
    P\times_HG\simeq AM.
  \]
\end{proposition}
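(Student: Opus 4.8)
The plan is to reduce the statement to Lemma~\ref{lem:BundleIsomorphism}, which asserts that any $G$-principal bundle containing a given $H$-principal subbundle is isomorphic to the extension of that subbundle. Thus it suffices to realize $LM$ as a $GL(m,\mR)$-principal subbundle of the $A(m,\mR)$-principal bundle $AM$; once this is done, Lemma~\ref{lem:BundleIsomorphism} applied with $P=LM$, $H=GL(m,\mR)$, $Q=AM$, $G=A(m,\mR)$ yields at once $AM\simeq LM[A(m,\mR)]=LM\times_{GL(m,\mR)}A(m,\mR)=P\times_H G$. The required inclusion is the principal bundle morphism $\gamma:LM\to AM$ induced by the group section $\gamma:GL(m,\mR)\to A(m,\mR)$ of the splitting exact sequence of Section~\ref{sec:affine-gener-line}.

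The key step is to check that this $\gamma$ genuinely exhibits $LM$ as a $GL(m,\mR)$-subbundle. First I would make it explicit: a linear frame $u:\mR^m\to T_xM$ is sent to the affine frame $\gamma(u):A^m\to A_x(M)$ defined by $\gamma(u)(z):=u(z)$, i.e. the unique affine map with linear part $u$ and vanishing translation, characterised by $\gamma(u)(0)=0$. Then I would verify three routine facts: $\gamma$ covers $\mathrm{id}_M$; $\gamma$ is injective, since its linear part recovers $u$ (equivalently $\beta\circ\gamma=\mathrm{id}_{LM}$, using the bundle map $\beta:AM\to LM$); and $\gamma$ is equivariant for the $GL(m,\mR)$-action on $AM$ obtained through the group inclusion $\gamma:GL(m,\mR)\hookrightarrow A(m,\mR)$, because for $a\in GL(m,\mR)$ both $\gamma(u\cdot a)$ and $\gamma(u)\cdot\gamma(a)$ equal the affine map $z\mapsto u(az)$. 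Hence $\gamma(LM)\subset AM$ is a $GL(m,\mR)$-principal subbundle.

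Unwinding the proof of Lemma~\ref{lem:BundleIsomorphism}, the resulting isomorphism is
\[
  \phi:LM\times_{GL(m,\mR)}A(m,\mR)\to AM,\qquad \phi\bigl([u,B]\bigr)=\gamma(u)\circ B,
\]
so that for $B=(a,\xi)$ the associated affine frame is $z\mapsto u(az+\xi)$, with linear part $u\circ a$ and translation $u(\xi)$; in local coordinates this is precisely formula~\eqref{eq:LocalIdentificationAffineFrames}, which serves as a concrete check. The computations are elementary throughout, and the only place I would expect care to be needed — the main (mild) obstacle — is the bookkeeping of the two roles played by $\gamma$, namely as a group-theoretic section and as the induced bundle map, together with the convention for the $H$-action on $G$ used in forming $P\times_H G$: these must be aligned so that the equivariance $\gamma(u\cdot a)=\gamma(u)\cdot\gamma(a)$ and the well-definedness of $\phi$ on $GL(m,\mR)$-orbits hold simultaneously. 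After that, $\phi$ is a morphism of $A(m,\mR)$-principal bundles over $M$ and hence automatically an isomorphism, completing the proof.
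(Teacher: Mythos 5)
Your proposal is correct, but it takes a genuinely different route from the paper. The paper proves Proposition \ref{prop:FrameBundleExtension} by writing down the map $\phi\left(\left[u,\left(h,v\right)\right]_H\right):w\mapsto\left(u\circ h\right)\left(w\right)+u\left(v\right)$ directly and verifying by hand, in three explicit computations, that it is $A\left(m,\mR\right)$-equivariant, surjective (solving $v=a\left(0\right)$, $u\left(w\right)=a\left(w\right)-a\left(0\right)$), and injective (a short algebraic manipulation of the equivalence classes). You instead first exhibit $\gamma:LM\hookrightarrow AM$ as a $GL\left(m,\mR\right)$-principal subbundle inclusion (covering $\mathrm{id}_M$, injective via $\beta\circ\gamma=\mathrm{id}_{LM}$, and equivariant through the group section $\gamma$), and then invoke Lemma \ref{lem:BundleIsomorphism}, which disposes of injectivity and surjectivity wholesale via the standard fact that every morphism of $G$-principal bundles over a fixed base is an isomorphism. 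Unwinding the lemma gives $\phi\left(\left[u,B\right]\right)=\gamma\left(u\right)\circ B$, which for $B=\left(h,v\right)$ is exactly the paper's map, so the two proofs land on the same isomorphism and are both consistent with the coordinate formula \eqref{eq:LocalIdentificationAffineFrames}. What your route buys is economy and conceptual clarity --- it is the same mechanism the paper itself advertises in the remark following Diagram \eqref{eq:PrincipalBundleDiagram} --- at the cost of relying on the quoted isomorphism criterion from \cite{steenrod1999the}; what the paper's route buys is a self-contained, fully explicit verification that also produces the inverse map concretely. Your flagged point of care (keeping the two roles of $\gamma$ and the $H$-action conventions aligned) is the right one, and it is handled correctly.
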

\begin{proof}
  Let us define the map
  \[
    \phi\left(\left[u,\left(h,v\right)\right]_H\right):=a\in\left.AM\right|_x
  \]
  if and only if $x=\tau\left(u\right)$ and
  \[
    a:\mR^m\to T_xM:w\mapsto\left(u\circ h\right)\left(w\right)+u\left(v\right).
  \]
  Then, for any $\left(h',v'\right)\in A\left(m,\mR\right)$, we have that
  \begin{align*}
    \phi\left(\left[u,\left(h,v\right)\right]_H\cdot\left(h',v'\right)\right)&=\phi\left(\left[u,\left(h,v\right)\left(h',v'\right)\right]_H\right)\\
                                                                             &=\phi\left(\left[u,\left(hh',hv'+v\right)\right]_H\right)\\
                                                                             &=\left[w\mapsto\left(u\circ h\circ h'\right)\left(w\right)+u\left(hv'+v\right)\right]\\
                                                                             &=\left[w\mapsto u\left(h\left(h'w+v'\right)\right)+u\left(v\right)\right]\\
    &=\phi\left(\left[u,\left(h,v\right)\right]_H\right)\circ\left(h',v'\right),
  \end{align*}
  proving that $\phi$ is a bundle map.
  
  For every $a\in\left.AM\right|_x$, we have that
  \[
    \phi\left(\left[u,\left(e,v\right)\right]_H\right)=a
  \]
  if and only if $v=a\left(0\right)$ and
  \[
    u\left(w\right)=a\left(w\right)-a\left(0\right)
  \]
  for all $w\in\mR^m$; therefore, $\phi$ es an epimorphism of bundles.

  Additionally, if $\left[u_1,\left(h_1,v_1\right)\right]_H,\left[u_2,\left(h_2,v_2\right)\right]_H$ are such that
  \[
    \phi\left(\left[u_1,\left(h_1,v_1\right)\right]_H\right)=\phi\left(\left[u_2,\left(h_2,v_2\right)\right]_H\right),
  \]
  then
  \[
    \left(u_1\circ h_1\right)\left(w\right)+u_1\left(v_1\right)=\left(u_2\circ h_2\right)\left(w\right)+u_2\left(v_2\right)
  \]
  for all $w\in\mR^m$. With $w=0$ it gives us that
  \begin{equation}\label{eq:SemidirectEquation}
    u_1\left(v_1\right)=u_2\left(v_2\right)
  \end{equation}
  and so
  \[
    u_2\circ h_2=u_1\circ h_1\qquad\Longrightarrow\qquad u_2=u_1\circ h_1\circ h_2^{-1}.
  \]
  Thus, Equation \eqref{eq:SemidirectEquation} tells us that
  \[
    v_2=\left(h_2\circ h_1^{-1}\right)\left(v_1\right)
  \]
  and it means that
  \[
    \left[u_2,\left(h_2,v_2\right)\right]_H=\left[u_2\circ h_1\circ h_2^{-1},\left(h_2,\left(h_2\circ h_1^{-1}\right)\left(v_1\right)\right)\right]_H=\left[u_1,\left(h_1,v_1\right)\right]_H.
  \]
  Namely, $\phi$ is a monomorphism of bundles.
\end{proof}

\subsection{Local expressions}
\label{sec:local-expressions}

We will use the constructions developed in Section \ref{sec:geom-princ-bundles} in order to find coordinates for $LM$, $AM$ and its jet bundles. The first thing to note is that the frame bundle $LM$ can be trivialized on every coordinate chart $\left(U,\phi\right)$ for $M$; namely, for $u\in\tau^{-1}\left(U\right)$ there exists a collection $\left(e_i^\alpha\left(u\right)\right)$ of real numbers such that
\[
  u\left(c^1,\cdots,c^m\right)=c^ie_i^\alpha\left(u\right)\frac{\partial}{\partial x^\alpha},\qquad\left(c^1,\cdots,c^m\right)\in\mR^m,
\]
where $\phi=\left(x^\alpha\right)$ are the coordinate functions on $\phi\left(U\right)\subset\mR^m$. It induces the coordinate chart on $\tau^{-1}\left(U\right)$ given by
\[
  \Phi_U\left(u\right):=\left(x^\alpha\left(\tau\left(u\right)\right),e_i^\alpha\left(u\right)\right).
\]
In the same vein, given $\overline{u}\in\left(\tau\circ\beta\right)^{-1}\left(U\right)\subset AM$, we can find numbers $\left(e_i^\alpha\left(\overline{u}\right),v^\alpha\left(\overline{u}\right)\right)$ such that
\begin{equation}\label{eq:AffineBundleLocal}
  \overline{u}\left(c^1,\cdots,c^m\right)=\left[c^ie_i^\alpha\left(\overline{u}\right)+v^\alpha\left(\overline{u}\right)\right]\frac{\partial}{\partial x^\alpha},\qquad\left(c^1,\cdots,c^m\right)\in\mR^m;
\end{equation}
it defines a coordinate chart on $\left(\tau\circ\beta\right)^{-1}\left(U\right)$ through the formula
\[
 \overline{\Phi}_U\left(\overline{u}\right):=\left(x^\alpha\left(\tau\left(\beta\left(\overline{u}\right)\right)\right),e_i^\alpha\left(\overline{u}\right),v^\alpha\left(\overline{u}\right)\right).
\]
The map $\gamma$ fits nicely with these coordinates; in fact, we have that
\[
  \overline{\Phi}_U\circ\gamma\circ\Phi_U^{-1}\left(x^\alpha,e^\beta_i\right)=\left(x^\alpha,e^\beta_i,0\right).
\]
Let us consider now the action of an element $\left(a,w\right)\in A\left(m,\mR\right)=GL\left(m,\mR\right)\oplus\mR^m$ on $AM$; because of the equation \eqref{eq:AffineBundleLocal}, we have that
\begin{align*}
  \left[\overline{u}\cdot\left(a,w\right)\right]\left(c^1,\cdots,c^m\right)&=\overline{u}\left(a_j^1c^j+w^1,\cdots,a_j^mc^j+w^m\right)\\
                                                                           &=\left[\left(a_j^ic^j+w^i\right)e_i^\alpha\left(\overline{u}\right)+v^\alpha\left(\overline{u}\right)\right]\frac{\partial}{\partial x^\alpha}\\
  &=\left[c^ie_i^\alpha\left(\overline{u}\right)+\left(v^\alpha\left(\overline{u}\right)+c^ie_i^\alpha\left(\overline{u}\right)\right)\right]\frac{\partial}{\partial x^\alpha},
\end{align*}
namely,
\[
  \left(x^\alpha,e_i^\beta,v^\rho\right)\cdot\left(a,w\right)=\left(x^\alpha,a^i_je_i^\beta,v^\rho+e^\rho_iw^i\right).
\]

We are ready to deal with connections on $AM$ in a local fashion; according to the discussion carried out in Section \ref{sec:conn-princ-bundl}, the local version of a connection on $AM$ is an $\mathfrak{a}\left(m,\mR\right)$-valued $1$-form, namely
\[
  \widetilde{\omega}_U=\left(\Gamma^\alpha_{\gamma\beta}dx^\beta\otimes E_\alpha^\gamma,\sigma^\alpha_\beta dx^\beta\otimes e_\alpha\right).
\]
Accordingly, it can be globalized through formula \eqref{eq:ConnectionFormFromLocalData}; first, recall that in this case we have
\[
  g_U\left(x^i,e^\alpha_i,v^\beta\right)=\left(e^\alpha_i,v^\beta\right),
\]
and so
\[
  g_U^*\lambda=\left(e^\alpha_i,v^\beta\right)^{-1}\left(de^\alpha_i,dv^\beta\right)=\left(e^i_\alpha de^\alpha_j,e^i_\beta dv^\beta\right).
\]
Additionally, the adjoint action formula \eqref{eq:AdjointActionSemidirect} tells us that
\[
 \text{Ad}_{\left(g_U\left(x^i,e^\alpha_i,v^\beta\right)\right)^{-1}}\widetilde{\omega}_U=\left(e^\gamma_ie^j_\alpha\Gamma^\alpha_{\gamma\beta}dx^\beta\otimes E_j^i,e^i_\alpha\left(\Gamma^\alpha_{\gamma\beta}v^\gamma+\sigma^\alpha_\beta\right)dx^\beta\otimes e_i\right),
\]
so that
\begin{equation}\label{eq:AffineConnectionGlobal}
  \left.\widetilde{\omega}\right|_{\overline{\Phi}_U\left(x^i,e^\beta_j,v^\gamma\right)}=\left(e^j_\alpha\left(de^\alpha_i+e^\gamma_i\Gamma^\alpha_{\gamma\beta}dx^\beta\right)\otimes E_j^i,e^i_\alpha\left[dv^\alpha+\left(\Gamma^\alpha_{\gamma\beta}v^\gamma+\sigma^\alpha_\beta\right)dx^\beta\right]\otimes e_i\right).
\end{equation}
Coordinates $\Phi_U$ and $\overline{\Phi}_U$ induce coordinates on $J^1\tau$ and $J^1\left(\tau\circ\beta\right)$ respectively, which will be indicated as
\[
 \left(x^\alpha,e^\beta_i,e^\beta_{i\gamma}\right)\qquad\text{and}\qquad\left(x^\alpha,e^\beta_i,v^\alpha,e^\beta_{i\gamma},v^\alpha_\beta\right).
\]
The $A\left(m,\mR\right)$-action lifts to $J^1\left(\tau\circ\beta\right)$ as follows
\[
  \left(x^\alpha,e_i^\beta,v^\rho,e^\beta_{i\gamma},v^\alpha_\beta\right)\cdot\left(a,w\right)=\left(x^\alpha,a^i_je_i^\beta,v^\rho+e^\rho_iw^i,a_j^ie^\beta_{i\gamma},v^\alpha_\beta+e^\alpha_{i\beta}w^i\right).
\]
using the form of this action, we can consider the projection from $J^1\left(\tau\circ\beta\right)$ to $C\left(AM\right)$; we have that
\[
 p_{A\left(m,\mR\right)}^{J^1\left(\tau\circ\beta\right)}\left(x^\alpha,e_i^\beta,v^\rho,e^\beta_{i\gamma},v^\alpha_\beta\right)=\left(x^i,e^j_\gamma e^\alpha_{j\beta},v^\alpha_\beta-e^i_\gamma e^\alpha_{i\beta}v^\gamma\right).
\]
Let us now consider the canonical connection $\theta_{J^1\left(\tau\circ\beta\right)}$ on $J^1\left(\tau\circ\beta\right)$; it results that the contact structure is
\[
  T\left(\tau\circ\beta\right)_{10}-T_xs\circ T\left(\tau\circ\beta\right)_1=\left(de_i^\beta-e^\beta_{i\alpha}dx^\alpha\right)\otimes\frac{\partial}{\partial e^\beta_i}+\left(dv^\beta-v^\beta_{\alpha}dx^\alpha\right)\otimes\frac{\partial}{\partial v^\beta},
\]
and because the infinitesimal generators for the $A\left(m,\mR\right)$-action on $AM$ are
\[
  \left(E_i^j\right)_{AM}\left(x^\alpha,e_i^\beta,v^\gamma\right)=e^\alpha_i\frac{\partial}{\partial e^\alpha_j},\qquad\left(e_i\right)_{AM}\left(x^\alpha,e_i^\beta,v^\gamma\right)=e^\alpha_i\frac{\partial}{\partial v^\alpha},
\]
we obtain that
\begin{equation}\label{eq:LocalCoordinatesCanonicalFormAffine}
  \left.\theta_{J^1\left(\tau\circ\beta\right)}\right|_{\left(x^\alpha,e_i^\beta,v^\rho,e^\beta_{i\gamma},v^\alpha_\beta\right)}=e_\beta^j\left(de_i^\beta-e^\beta_{i\alpha}dx^\alpha\right)\otimes E^i_j+e^i_\beta\left(dv^\beta-v^\beta_{\alpha}dx^\alpha\right)\otimes e_i.
\end{equation}

\printbibliography


\end{document}